\newcommand{\Gtri}{\ensuremath{G_{\Delta}}}
\newcommand{\bigO}[1]{\ensuremath{\mathcal{O}(#1)}}
\newcommand{\tree}{\ensuremath{\mathcal{T}}}
\newcommand{\forest}{\ensuremath{\mathcal{F}}}
\newcommand{\system}{\ensuremath{\mathcal{P}}}
\newcommand{\numParticles}{\ensuremath{n}}
\newcommand{\capacity}{\ensuremath{\kappa}}
\newcommand{\demand}{\ensuremath{\delta}}
\newcommand{\transferRate}{\ensuremath{\alpha}}
\newcommand{\battery}{\ensuremath{e_{bat}}}
\newcommand{\parent}{\ensuremath{\text{parent}}}
\newcommand{\stress}{\ensuremath{\text{stress}}}
\newcommand{\inhibit}{\ensuremath{\text{inhibit}}}
\newcommand{\prune}{\ensuremath{\text{prune}}}
\newcommand{\energyAlg}{\textsf{Energy-Sharing}}
\newcommand{\forestRepairAlg}{\textsf{Forest-Prune-Repair}}
\algrenewcommand\ALG@beginalgorithmic{\small}
\algrenewcommand\alglinenumber[1]{\footnotesize #1:}
\newif\ifcomment
\newif\ifconf
\newif\iffigabbrv
\newcommand{\figtext}{\iffigabbrv Fig.\else Figure\fi}
\title{Bio-Inspired Energy Distribution for Programmable Matter}
\titlerunning{Energy Distribution for Programmable Matter}
\author{Joshua J. Daymude}{Computer Science, CIDSE, Arizona State University, Tempe, AZ, USA}{jdaymude@asu.edu}{https://orcid.org/0000-0001-7294-5626}{}
\author{Andr\'ea W. Richa}{Computer Science, CIDSE, Arizona State University, Tempe, AZ, USA}{aricha@asu.edu}{}{}
\author{Jamison W. Weber}{Computer Science, CIDSE, Arizona State University, Tempe, AZ, USA}{jwweber@asu.edu}{}{}
\authorrunning{J.\ J.\ Daymude, A.\ W.\ Richa, and J.\ W.\ Weber}
\keywords{Programmable matter, self-organization, distributed algorithms, biologically-inspired algorithms, biofilms, energy}
\begin{document}

\maketitle

\begin{abstract}
In systems of \textit{active programmable matter}, individual modules require a constant supply of energy to participate in the system's collective behavior.
These systems are often powered by an \textit{external energy source} accessible by at least one module and rely on \textit{module-to-module power transfer} to distribute energy throughout the system.
While much effort has gone into addressing challenging aspects of power management in programmable matter hardware, algorithmic theory for programmable matter has largely ignored the impact of energy usage and distribution on algorithm feasibility and efficiency.
In this work, we present an algorithm for \textit{energy distribution} in the \textit{amoebot model} that is loosely inspired by the growth behavior of \textit{Bacillus subtilis} bacterial biofilms.
These bacteria use chemical signaling to communicate their metabolic states and regulate nutrient consumption throughout the biofilm, ensuring that all bacteria receive the nutrients they need.
Our algorithm similarly uses communication to inhibit energy usage when there are starving modules, enabling all modules to receive sufficient energy to meet their demands.
As a supporting but independent result, we extend the amoebot model's well-established \textit{spanning forest primitive} so that it \textit{self-stabilizes} in the presence of crash failures.
We conclude by showing how this self-stabilizing primitive can be leveraged to compose our energy distribution algorithm with existing amoebot model algorithms, effectively generalizing previous work to also consider energy constraints.
\end{abstract}

\newpage

\setcounter{page}{1}

\section{Introduction} \label{sec:intro}

The goal for \textit{programmable matter}~\cite{Toffoli1991} is to realize physical materials that can dynamically change their physical properties on command, acting autonomously or based on user input.
In \textit{active} systems, the composing modules (or ``particles'') of programmable matter are often envisioned and designed to be simple, homogeneous units capable of internal computation, inter-module communication, and movement.
These modules require a constant supply of energy to function, but as the number of modules per collective increases and individual modules are miniaturized from the centimeter/millimeter-scale~\cite{Gilpin2010-robotpebbles,Goldstein2005-programmablematter,Piranda2018-quasispherical} to the micro- and nano-scale~\cite{Dolev2016-invivoenergy,Kriegman2020-xenobot}, traditional methods of robotic power supply such as internal battery storage and tethering become infeasible.

Programmable matter systems instead make use of an \textit{external energy source} accessible by at least one module and rely on \textit{module-to-module power transfer} to supply the system with energy~\cite{Campbell2005-robottether,Gilpin2010-robotpebbles,Goldstein2009-claytronicspario,Piranda2018-quasispherical}.
This external energy can be supplied directly to one or more modules in the form of electricity, as in~\cite{Gilpin2010-robotpebbles}, or may be ambiently available as light, heat, sound, or chemical energy in the environment~\cite{MacLennan2015-morphogeneticpath,Napp2011-setpointregulation}.
Since energy may not be uniformly accessible to all modules in the system, a strategy for \textit{energy distribution} --- or sharing energy between modules such that all modules eventually obtain the energy they need to function --- is imperative but does not come for free.
Significant energy loss can occur in module-to-module transfer depending on the method used, and even with perfect transfer successive voltage drops between modules can limit the number of modules that can be powered from a single source~\cite{Gilpin2010-robotpebbles}.
Module geometry may further complicate the problem by introducing short circuits, adding further constraints to power routing algorithms~\cite{Campbell2005-robottether}.

Algorithmic theory for programmable matter has largely ignored the role of energy (with notable exceptions, such as~\cite{Dolev2016-invivoenergy,Piranda2018-quasispherical}), focusing primarily on characterizing the minimal capabilities individual modules need to collectively achieve desired system-level self-organizing behaviors.
Across models of active programmable matter --- including population protocols~\cite{Angluin2006-populationprotocols}, the nubot model~\cite{Woods2013-nubot}, mobile robots~\cite{Flocchini2019}, hybrid programmable matter~\cite{Gmyr2018-shaperecognition,Gmyr2019-shapeformation}, and the amoebot model~\cite{Daymude2019-programmableparticles,Derakhshandeh2014-amoebotba} --- most works either develop algorithms for a desired behavior and bound their time complexity or, on the negative side, prove that a given behavior cannot be achieved within the given constraints.
To the extent of our knowledge, papers on these models have only mentioned energy to justify constraints (e.g., why a system should remain connected~\cite{Michail2019-transformationcapability}) and have never directly treated the impact of energy usage and distribution on an algorithm's efficiency.
In contrast, both programmable matter practitioners and the modular and swarm robotics literature incorporate energy constraints as influential aspects of algorithm design~\cite{Bartashevich2017-energysavingswarms,Kernbach2013-collectiverobotics,Mostaghim2016-energyawarepso,Pickem2017-robotarium,Wei2012-stayingalive}.

In this work, we present an algorithm for energy distribution in the amoebot model that is loosely inspired by the growth behavior of \textit{Bacillus subtilis} bacterial biofilms~\cite{Liu2015-biofilmcodependence,Prindle2015-biofilmionchannel}.
We assume that all particles in the system require energy to perform their actions but only some have access to an external energy source.
Naive distribution strategies such as fully selfish or fully altruistic behaviors have obvious problems: in the former, particles with access to energy use it all and starve the others, while in the latter no particle ever knows when it is safe to use its stored energy.
% This necessitates a ``some for me, some for you'' strategy in which particles allocate a fraction of harvested energy for their own use and share the rest with their neighbors, similar to nutrient allocation in the biofilms.
This necessitates a strategy in which particles shift between selfish and altruistic energy usage depending on the needs of their neighbors.
Our algorithm mimics the way bacteria use long-range communication of their metabolic stress to temporarily inhibit the biofilm's energy consumption, allowing for nutrients to reach starving bacteria and effectively solving the energy distribution problem.

\subsection{Biological Inspiration} \label{subsec:biobackground}

Our strategy of shifting between selfish and altruistic energy usage to achieve energy distribution is loosely inspired by the work of Liu and Prindle et al.\ \cite{Liu2015-biofilmcodependence,Prindle2015-biofilmionchannel} on the growth behavior of colonies of \textit{Bacillus subtilis} bacteria, which we summarize here for the sake of completeness.
These bacteria often form densely packed \textit{biofilm colonies} when they become metabolically stressed (i.e., when they become nutrient scarce and begin to starve).
Biofilms offer individual bacterium more opportunities for nourishment, as well as significantly better protection from external attack.

These bacteria consume \textit{glutamine}, which is produced from a combination of substrates \textit{glutamate} and \textit{ammonium}.
Glutamate is sourced from the environment outside of the biofilm, whereas ammonium is produced by individual bacterium. % via the conversion of glutamate through the enzymatic activity of glutamate dehydrogenase.
However, because ammonium can freely diffuse across a bacterium's cell membrane and be lost to its surroundings, production of ammonium is known as the \textit{futile cycle}.
The futile cycle is detrimental for bacteria on the biofilm's periphery, as they lose all their ammonium to the external medium.
Once a biofilm colony is formed, however, bacteria in the biofilm's interior are shielded from the futile cycle by those on the periphery.
This creates a symbiotic co-dependence: bacteria in the interior are reliant on glutamate passed from the periphery, while bacteria on the periphery are reliant on ammonium produced by the interior.

As the biofilm grows, overall glutamate consumption in the periphery increases, limiting the amount of glutamate that permeates into the interior of the colony.
This causes interior bacteria to become metabolically stressed.
Thus, in order to regulate glutamate consumption on the periphery, interior bacteria communicate their metabolic states to the peripheral bacteria via a long-range electrochemical process known as \textit{potassium ion-channel-mediated signaling}~\cite{Prindle2015-biofilmionchannel}.
% Prindle et al.~\cite{Prindle2015-biofilmionchannel} showed that these metabolically stressed bacteria can catalyze a wave of potassium that propagates out to the biofilm's periphery.
This sudden influx of potassium inhibits a bacterium's glutamate intake and ammonium retention, allowing more nutrients to pass into the biofilm's interior.
As a result, the biofilm grows at an oscillating rate rather than a constant one, despite the fact that there is plentiful glutamate in the environment.
This emergent oscillation caused by inhibition enables continuous distribution of nutrients throughout the colony, effectively solving the energy distribution problem.

\subsection{The Amoebot Model} \label{subsec:model}

In the \textit{amoebot model}~\cite{Daymude2019-programmableparticles,Derakhshandeh2014-amoebotba}, programmable matter consists of individual, homogeneous computational elements called \textit{particles}.
Any structure that a particle system can form is represented as a subgraph of an infinite, undirected graph $G = (V,E)$ where $V$ represents all relative positions a particle can occupy and $E$ represents all possible adjacencies between particles.\footnote{We omit several core features of the amoebot model (including expanded particles and movements) since they are not needed in this work; see~\cite{Daymude2019-programmableparticles} for a full description of the model.}
Each node can be occupied by at most one particle.
The \textit{geometric amoebot model} is a standard model variant that assumes $G = \Gtri$, the triangular lattice (see \figtext~\ref{fig:modelparticles}).

\begin{figure}
    \centering
    \begin{subfigure}{.49\textwidth}
        \centering
        \includegraphics[width=0.7\textwidth]{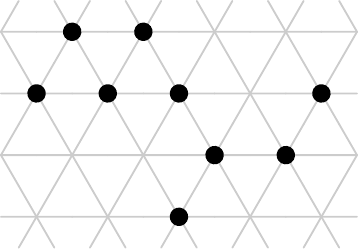}
        \caption{\centering}
        \label{fig:modelparticles}
    \end{subfigure}%
    \begin{subfigure}{.49\textwidth}
        \centering
        \includegraphics[width=0.5\textwidth]{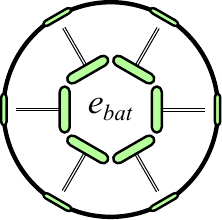}
        \caption{\centering}
        \label{fig:modelenergy}
    \end{subfigure}
    \caption{(a) Particles shown as black circles on the triangular lattice $\Gtri$, shown in gray. (b) A particle's energy anatomy. Energy is transferred between particles at their contact points, shown as green markers on the particle's periphery. A particle's battery $\battery$ stores energy for its own use and for sharing with its neighbors.}
    \label{fig:model}
\end{figure}

Two particles occupying adjacent nodes are said to be \textit{neighbors}.
Although each particle is \textit{anonymous}, lacking a unique identifier, a particle can locally identify any given neighbor by its label for the edge between them.
Each particle has a constant-size local memory that it and its neighbors can directly read from and write to for communication.
However, particles do not have any global information, including a shared coordinate system or orientation.

The system progresses asynchronously through \textit{atomic actions}.
In the amoebot model, an atomic action corresponds to a single particle's activation in which it can perform a constant amount of local computation involving information it reads from its local memory and its neighbors' memories and write updates to its neighbors' memories.
We assume these actions preserve \textit{atomicity}, \textit{isolation}, and \textit{fairness}.
Atomicity requires that an action either completes successfully or is aborted (e.g., due to a conflict) and completely undone.
A set of concurrent actions preserves isolation if they do not interfere with each other; i.e., if their concurrent execution produces the same end result as if they were executed in any sequential order.
Fairness requires that each particle successfully completes an action infinitely often.

It is well known that if a distributed system's actions are atomic and isolated, any set of such actions can be \textit{serialized}~\cite{Bernstein1987-concurrency}; i.e., there exists a sequential ordering of the successful (non-aborted) actions that produces the same end result as their concurrent execution.
Thus, while in reality many particles may be active concurrently, it suffices when analyzing amoebot algorithms to consider the sequential setting where only one particle is active at a time.
By our fairness assumption, if a particle $P$ is inactive at time $t$ in the activation sequence, $P$ will be (successfully) activated again at some time $t' > t$.
An \textit{asynchronous round} is complete once every particle has been activated at least once.

\paragraph*{Particle Anatomy for Energy Distribution}

In addition to the standard model, we introduce terminology specific to the problem of energy distribution.
Each particle $P$ has an \textit{energy battery} denoted $P.\battery$ with capacity $\capacity > 0$ (see \figtext~\ref{fig:modelenergy}).
The battery represents stored energy $P$ can use for performing actions or for sharing with its neighbors.
Particles with access to an external energy source can harvest energy into their batteries directly, while those that do not depend on their neighbors to share with them.
In either case, we assume each particle can transfer at most $\transferRate > 0$ units of energy per activation.

\subsection{Our Results} \label{subsec:results}

An instance of the \textit{energy distribution problem} has the form $(\system, \capacity, \demand)$ where $\system$ is a finite connected particle system, $\capacity$ is the capacity of each particle's battery, and energy demand $\demand(P, i)$ denotes the energy cost for a particle $P$ to perform its $i$-th action.
For convenience, we will use $\demand(P)$ to refer to the energy cost for $P$ to perform its next action.
An instance is \textit{valid} if (1) $\system$ contains at least one ``root'' particle with access to an external energy source and all non-root particles are initially ``idle'' and (2) for all particle actions, $\demand(\cdot, \cdot) \leq \capacity$; i.e., no energy demand exceeds the batteries' energy capacity.
A particle $P$ is \textit{stressed} if the energy level of its battery is strictly less than the demand for its next action, i.e., if $P.\battery < \delta(P)$.
An action $a$ of a particle $P$ is \textit{enabled} if, barring any energy considerations, $P$ is able to perform action $a$.
A local, distributed algorithm $\mathcal{A}$ \textit{solves} a valid instance of the energy distribution problem in time $t$ if, when each particle executes $\mathcal{A}$ individually, no particle remains stressed for more than $t$ asynchronous rounds and at least one particle performs an enabled action every $t$ asynchronous rounds.

In Section~\ref{sec:algenergy}, we present \energyAlg: a local, distributed algorithm that solves the energy distribution problem in $\bigO{\numParticles}$ asynchronous rounds (Theorem~\ref{thm:runtime}), where $\numParticles$ is the number of particles in the system.
This algorithm is asymptotically optimal when the number of external energy sources is fixed (Theorem~\ref{thm:lowerbound}).
We then show simulation results in Section~\ref{sec:simulations}, demonstrating that without the biofilm-inspired communication of particles' energy states, \energyAlg\ fails to distribute sufficient energy throughout the system.

In Section~\ref{sec:extensions}, we consider the impact of crash faults on the correctness and runtime of our algorithm.
Our fault mitigation strategy relies on a new algorithmic primitive called \forestRepairAlg\ that locally repairs the system's underlying communication structure after a particle crashes.
This repair primitive is in fact of independent interest, as it extends the amoebot model's well-established \textit{spanning forest primitive}~\cite{Daymude2019-programmableparticles} to be self-stabilizing in the presence of crash failures.
Finally, we show how \forestRepairAlg\ can be used to compose other amoebot algorithms with our \energyAlg\ algorithm.
This effectively generalizes all previous work on the amoebot model to also consider energy constraints.

\section{The Energy Distribution Algorithm} \label{sec:algenergy}

In this section, we present algorithm \energyAlg\ for energy distribution in self-organizing particle systems.
At a high level, this algorithm works as follows.
After some initial setup, each particle continuously loops through a sequence of three phases: the communication phase, the sharing phase, and the usage phase.
In the \textit{communication phase}, particles propagate signals to communicate the energy states of stressed particles, analogous to the long-range electrochemical signaling via potassium ion channels in the biofilms.
Particles then attempt to harvest energy from an external energy source or transfer energy to their neighbors in the \textit{sharing phase}.
Finally, particles spend their stored energy to perform actions according to their collective behavior in the \textit{usage phase}.
Note that the system is not synchronized and each particle progresses through these phases independently.

Section~\ref{subsec:algenergy} details the setup and phases of \energyAlg\ (Algorithm~\ref{alg:energy}).
We then analyze this algorithm's correctness and runtime in Section~\ref{subsec:energyanalysis}.
Complete pseudocode as well as tables collecting the algorithm's parameters and variables can be found in Appendix~\ref{app:pseudocode}.

\subsection{\texorpdfstring{Algorithm \energyAlg}{Algorithm Energy-Sharing}} \label{subsec:algenergy}

\paragraph*{The Setup Phase}

Recall that particle system $\system$ is connected.
Particles with access to an external energy source are roots, and the rest are idle.
This phase organizes $\system$ as a spanning forest $\forest$ of trees rooted at the root particles.
These trees facilitate an analogy to the potassium ion signaling that the bacteria use to communicate when they are metabolically stressed (discussed further in the communication phase).
To form $\forest$, we make use of the well-established \textit{spanning forest primitive}~\cite{Daymude2019-programmableparticles} which works as follows.
If a particle $P$ is idle, it checks if it has a root or active neighbor $Q$.
If so, $P$ becomes active and updates its parent pointer to $P.\parent \gets Q$.
This repeats until all particles are active, yielding a spanning forest $\forest$.

\paragraph*{The Communication Phase}

The communication phase (Algorithm~\ref{alg:energy}, \textsc{Communicate}) facilitates the long-range communication of particles' energy states analogous to the biofilm's potassium ion signaling.
This is achieved by sending signals along a particle's tree in the spanning forest $\forest$ constructed in the setup phase.
In particular, any active particle $P$ that is stressed --- i.e., $P.\battery < \demand(P)$ --- sets a \textit{stress flag} that remains until $P$ is no longer stressed.
Any particle that has a child in its tree with their stress flag set also sets their stress flag, effectively propagating this signal up to its tree's root particle.
When the root particle receives this stress signal (or if it is itself stressed), it sets an \textit{inhibit flag}, initiating a broadcast to the rest of the tree.
Any particle whose parent in the tree has their inhibit flag set also sets their inhibit flag, propagating this inhibition signal throughout the tree.
In the usage phase, inhibited particles will be stopped from spending their energy to perform actions, allowing more energy to pass on to the stressed particles.
As we will show in the simulations of Section~\ref{sec:simulations}, omitting this phase can result in the indefinite starvation of many of the system's particles.

Signal resets behave analogously to how they are set.
Once a particle receives the energy it needs to no longer be stressed, it resets its stress flag.
Any particles that do not have children with their stress flags set also reset their stress flags.
Once a root no longer has any children with stress flags (and it is itself not stressed), it resets its inhibit flag.
Any particle whose parent does not have its inhibit flag set resets its own inhibit flag, and so on.

\paragraph*{The Sharing Phase}

During the sharing phase (Algorithm~\ref{alg:energy}, \textsc{ShareEnergy}), particles harvest energy from external energy sources and transfer energy to their neighbors, if possible.
A root particle begins the sharing phase by harvesting $\min\{\transferRate, \capacity - P.\battery\}$ units of energy from its external energy source.
Any particle $P$ --- root or active --- then checks to see if it has sufficient energy to share (i.e., $P.\battery \geq \transferRate$) and if any of its children in the spanning forest $\forest$, say $Q$, need energy (i.e., $Q.\battery < \capacity$).
If so, $P$ transfers $\min\{\transferRate, \capacity - Q.\battery\}$ units of energy to $Q$ in keeping with the assumption from Section~\ref{subsec:model} that each particle can transfer at most $\transferRate$ units of energy per activation.

\paragraph*{The Usage Phase}

In the usage phase (Algorithm~\ref{alg:energy}, \textsc{UseEnergy}), particles spend their energy to perform actions as required by their collective behavior.
Suppose that $a$ is the next action a particle $P$ wants to perform; recall that its energy cost is given by $\demand(P)$.
If $P$ has sufficient stored energy to perform this action --- i.e., $P.\battery \geq \demand(P)$ --- and $P$ does not have its inhibit flag set, then $P$ can spend the required energy and perform action $a$.
Otherwise, $P$ forgoes any action in this activation.

\subsection{Analysis} \label{subsec:energyanalysis}

We now prove the correctness and bound the runtime of the \energyAlg\ algorithm.
We begin with two straightforward results regarding the setup and communication phases.

\begin{lemma} \label{lem:setuptime}
All idle particles in the system become active and join the spanning forest $\forest$ within $\numParticles$ asynchronous rounds, where $\numParticles$ is the number of particles in the system.
\end{lemma}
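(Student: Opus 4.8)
The plan is to show that in each asynchronous round at least one idle particle adjacent to the current forest becomes active, so that the ``active region'' grows by at least one particle per round; since there are $\numParticles$ particles total, this gives the bound. First I would set up the invariant carefully: let $S_r$ denote the set of particles that are root or active at the end of round $r$, with $S_0$ being exactly the set of root particles (which is nonempty and contains at least one particle since the instance is valid, and all non-root particles start idle). The key structural fact is that $\system$ is connected, so as long as there is at least one idle particle, there is an idle particle with a neighbor in $S_r$ — otherwise $S_r$ and its complement would form a disconnection of $\system$.

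Next I would argue that any such ``boundary'' idle particle is guaranteed to become active during round $r+1$. By definition of an asynchronous round, every particle is activated at least once during round $r+1$; when a boundary idle particle $P$ is activated, it sees a root or active neighbor $Q \in S_r$ (and this neighbor remains root/active, since particles never revert from active to idle in the spanning forest primitive), so by the rule of the setup phase $P$ sets $P.\parent \gets Q$ and becomes active. Hence $|S_{r+1}| \geq |S_r| + 1$ whenever $S_r$ does not yet contain all particles. Since $|S_0| \geq 1$, after at most $\numParticles - 1$ additional rounds — and certainly within $\numParticles$ rounds — we have $S_r = \system$, i.e., every idle particle has become active and joined $\forest$.

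The main subtlety to handle carefully is the interaction with concurrency: a priori, during round $r+1$ a particle's activation could be aborted due to a conflict (e.g., two particles simultaneously trying to write). Here I would invoke the serialization guarantee from the model section: since actions are atomic and isolated, it suffices to reason about the equivalent sequential execution, in which each particle's activation in the round completes successfully. In that sequential view the argument above goes through verbatim — when the boundary particle's (successful) activation occurs, its chosen neighbor is already in $S_r \subseteq S_{r+1}$ and stays there. A second minor point is that once a particle becomes active it stays active and retains a valid parent pointer for the remainder of the execution, which is immediate from the primitive's rules (there is no action that unsets the active state or the parent pointer), so the forest $\forest$ is well-defined and stable once built. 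I expect the potential-function/boundary-growth step to be the only real content; everything else is bookkeeping.
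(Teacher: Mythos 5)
Your proposal is correct and follows essentially the same argument as the paper's proof: connectivity guarantees an idle particle adjacent to the active/root set, fairness of the round ensures it becomes active within that round, and the count of at most $\numParticles - 1$ initially idle particles gives the bound. Your additional care about serialization and the monotonicity of the active set is consistent with (and implicit in) the paper's reasoning.
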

\begin{proof}
This follows directly from the analysis of the spanning forest primitive~\cite{Daymude2019-programmableparticles}.
The particle system is connected, so as long as there are still idle particles in the system, at least one idle particle $P$ must have an active or root particle as a neighbor.
When $P$ is next activated, it will become active and join the spanning forest by choosing one of its active or root neighbors as its parent.
This is guaranteed to happen within one asynchronous round since every particle is activated at least once per round.
Thus, at least one idle particle becomes active each round, and there are at most $n - 1$ idle particles since there is at least one root in the system initially.
\end{proof}

\begin{lemma} \label{lem:inhibittime}
Suppose a particle $P$ in tree $\tree \in \forest$ is stressed; i.e., $P.\battery < \demand(P)$.
If tree $\tree$ has depth $d_\tree$, then all particles in $\tree$ will have their inhibit flags set within $2d_\tree$ asynchronous rounds.\footnote{The \textit{depth} of a particle $P$ in a tree $\tree$ rooted at a particle $R$ is the number of nodes in the $R,P$-path in $\tree$ (i.e., the root $R$ is at depth $1$, and so on). The depth of a tree $\tree$ is $\max_{P \in \tree}\{\text{depth of } P\}$.}
\end{lemma}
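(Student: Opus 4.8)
The plan is to track the stress signal's propagation up the tree $\tree$ and then the inhibit signal's propagation back down, showing each traversal costs at most $d_\tree$ asynchronous rounds. The key observation is that in one asynchronous round, every particle is activated at least once, and the signal-setting rules are monotone (a flag, once set, stays set until the underlying condition clears), so a single round suffices to advance each signal by one level of the tree.

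First I would argue the \emph{upward phase}. Let $P$ be at depth $k \le d_\tree$. Since $P$ is stressed, once $P$ is next activated it sets its stress flag; this happens within the first asynchronous round. Inductively, suppose every particle on the $P$-to-root path at depth $\le k - j + 1$ has its stress flag set by the end of round $j$. In round $j+1$, the particle at depth $k - j$ on that path is activated, observes that its child (at depth $k-j+1$) has the stress flag set, and sets its own. Hence after at most $k \le d_\tree$ rounds the root of $\tree$ has its stress flag set. Since the root is stressed-or-receiving-a-stress-signal, upon its next activation (within one more round, but we can absorb this into the count since depth-$1$ to depth-$1$ is immediate once the flag arrives) it sets its inhibit flag. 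So within $d_\tree$ rounds the root's inhibit flag is set.

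Next the \emph{downward phase}: by the symmetric argument, once the root's inhibit flag is set, an induction on depth shows that every particle at depth $k$ has its inhibit flag set within $k$ additional rounds — each round, an inhibited particle's children observe the set flag on their parent and set their own. Since the deepest particle in $\tree$ is at depth $d_\tree$, all of $\tree$ is inhibited within a further $d_\tree$ rounds, for a total of $2d_\tree$ asynchronous rounds.

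The main subtlety to handle carefully is the asynchrony: I must make sure the ``one level per round'' bound is valid even though particles may reset their flags in the interim. The resolution is that the hypothesis ``$P$ is stressed'' is assumed to persist (the lemma is about what happens while $P$ remains stressed), so the stress flags along the $P$-to-root path cannot be reset during the upward phase, and once the root is inhibited while that path is stressed, the inhibit signal likewise cannot be reset during the downward phase; hence the monotone induction goes through. A secondary bookkeeping point is whether the root's transition from receiving the stress signal to setting its inhibit flag, and each particle's activation-within-a-round, fit inside the stated $2d_\tree$ bound — they do, because a depth-$1$ particle setting a flag it can already justify costs at most the one round already allocated to level $1$, and ``within $X$ rounds'' only requires an upper bound.
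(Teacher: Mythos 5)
Your proof is correct and follows essentially the same route as the paper's: one round for $P$ to set its stress flag, one level of upward stress propagation per asynchronous round until the root sets its inhibit flag (at most $d_\tree$ rounds in total), then at most $d_\tree$ further rounds of downward inhibit propagation. The only quibble is that the inequality in your upward induction hypothesis should read depth $\ge k-j+1$ rather than $\le$; your explicit treatment of monotonicity (flags on the $P$-to-root path cannot reset while $P$ stays stressed) is a point the paper leaves implicit.
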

\begin{proof}
Within one asynchronous round, $P$ will be activated and will set its stress flag since $P.\battery < \demand(P)$.
Recall that the stress flags are then propagated up to the root by parents setting their stress flags when they see a child with its stress flag set.
There can be at most $d_\tree - 2$ ancestors of $P$ strictly between $P$ and the root.
At least one more ancestor will set its stress flag per asynchronous round, so in at most $d_\tree - 2$ rounds a child of the root will have its stress flag set.

Within one additional round, the root will be activated and will set its inhibit flag.
Inhibit flags are then propagated from the root to all its descendants: in each round, any child that sees its parent's inhibit flag set will also set its own inhibit flag.
The longest root-to-descendant path in $\tree$ is of length $d_\tree$, so in at most $d_\tree$ rounds all particles in $\tree$ will have their inhibit flags set.
\end{proof}

Lemma~\ref{lem:inhibittime} shows that when a tree contains at least one stressed particle, every particle in the tree eventually becomes inhibited.
This inhibition remains until all stressed particles \textit{recharge}, i.e., until they receive the energy they need to perform their next action.
The usage phase prohibits any inhibited particle from spending its energy on actions, so it suffices when bounding the recharge time to analyze how energy is shared within the tree.

In particular, we want to bound the worst case time for a stressed particle in a given tree $\tree$ to recharge once all particles in $\tree$ are inhibited.
We make three observations that make this analysis more tractable.
First, we assume that all particles in $\tree$ begin this recharging process with empty batteries and need to meet maximum energy demand; i.e., we assume $P.\battery = 0$ and $\demand(P) = \capacity$ for all $P \in \tree$.
Although the particles of $\tree$ may have obtained some energy before becoming inhibited, this assumption can only make recharging slower since more energy is needed.
Second, we assume $\capacity / \transferRate \in \mathbb{N}$, allowing us to assume all energy is transferred in units of size exactly $\transferRate$.
This can be easily realized by rounding any given capacity $\capacity$ up to the next multiple of $\transferRate$, as this can only increase the energy required in recharging.
Third, we show in the following lemma that the recharge time in $\tree$ is at most the recharge time in a simple path with the same number of particles.

\begin{lemma} \label{lem:pathrecharge}
Suppose $\tree$ is a tree of $k$ particles rooted at a particle $R$ with access to external energy.
If all $k$ particles are inhibited and initially have no energy in their batteries, then the worst case number of asynchronous rounds to recharge all particles' batteries in $\tree$ is at most the worst case number of rounds to do so in a path $\mathcal{L} = (P_1, \ldots, P_k)$ in which $P_1$ has access to external energy and $P_i.\parent = P_{i-1}$ for all $1 < i \leq k$.
\end{lemma}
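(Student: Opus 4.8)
The plan is to argue that any tree $\tree$ is "no worse" than the path $\mathcal{L}$ by exhibiting an explicit coupling between the energy-sharing dynamics on $\tree$ and on $\mathcal{L}$. First I would fix the adversarial activation schedule for $\tree$ that realizes the worst case recharge time, and use it to drive a schedule on $\mathcal{L}$. The key idea is to relabel the particles of $\tree$ as $P_1, \ldots, P_k$ by a breadth-first (or any parent-before-child) order starting at the root $R = P_1$, so that in this ordering each particle's parent in $\tree$ has a \emph{smaller} index. Then I would maintain, as an invariant across the coupled executions, that after the corresponding sequence of activations the prefix sums of battery levels on $\mathcal{L}$ dominate those on $\tree$: for every $j$, $\sum_{i \le j} P_i^{\mathcal{L}}.\battery \ge \sum_{i \le j} P_i^{\tree}.\battery$. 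Intuitively, the path is the "most stretched out" topology, so energy injected at the root takes the longest possible time to reach the far end, and every particle's battery in the path is at least as starved as the corresponding one in the tree in this prefix-sum sense.

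The main steps are: (i) set up the BFS relabeling and check that $P_i.\parent = P_j$ in $\tree$ implies $j < i$; (ii) define how an activation of particle $P_i$ in $\tree$ is simulated on $\mathcal{L}$ — harvesting at the root is mirrored directly, and a transfer from $P_i$ to a child $P_m$ in $\tree$ is mirrored by having $\mathcal{L}$ perform transfers along its path that push the same amount of energy at least as far "forward"; (iii) verify the prefix-sum domination invariant is preserved by each of the three phase actions (harvest, share, and — crucially — that the usage phase does nothing here, since all particles are inhibited, so no energy leaves the system on either side); (iv) conclude that once \emph{every} prefix sum on $\mathcal{L}$ reaches $k\capacity$ the tree is also fully recharged, hence the tree finishes no later. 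Because we assumed (from the observations preceding the lemma) that all batteries start empty, all demands equal $\capacity$, and $\capacity/\transferRate \in \mathbb{N}$ so transfers come in unit chunks of size $\transferRate$, the bookkeeping in (iii) stays combinatorial rather than real-valued.

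The hard part will be step (iii), specifically handling transfers correctly under the coupling: in $\tree$ a single particle may feed several children, whereas on $\mathcal{L}$ each particle feeds only its unique successor, so a "round of activations" in $\tree$ does not map one-to-one to a round on $\mathcal{L}$. I expect to resolve this by allowing the simulating schedule on $\mathcal{L}$ to use (boundedly) more activations per simulated tree-round and by arguing the domination invariant survives \emph{each individual} transfer rather than each round — a transfer of $\transferRate$ units from $P_i$ to $P_m$ in $\tree$ with $i<m$ changes the prefix sums at indices $i, i+1, \ldots, m-1$ by $-\transferRate$, and I must exhibit matching unit-transfers on $\mathcal{L}$ that decrease a \emph{subset} of those same prefix-sum indices (or fewer), which is possible precisely because on the path energy can only move one step at a time and the target index $m$ is at least as far as in any partial progress the path has made. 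A clean way to package this is to phrase the invariant as: the multiset of "energy positions" on $\mathcal{L}$ is coordinatewise $\preceq$ the multiset on $\tree$ under the BFS order, and then show this majorization-type relation is preserved; once every particle in $\mathcal{L}$ holds $\capacity$, so does every particle in $\tree$.
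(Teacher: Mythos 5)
Your plan takes a genuinely different route from the paper, which argues by reverse induction on the length of the maximal non-branching path from the root: it repeatedly transforms $\tree$ into a tree $\tree'$ one step closer to a path (re-parenting all but one child of the first branching particle onto their sibling $Q_1$) and shows $t(\tree) \le t(\tree')$ by rerouting the affected energy transfers through $Q_1$ as an intermediary, using two activations in place of one. A direct tree-to-path coupling with a majorization invariant is a plausible alternative, but as written your plan has a genuine gap at the concluding step: the invariant $\sum_{i \le j} P_i^{\mathcal{L}}.\battery \ge \sum_{i \le j} P_i^{\tree}.\battery$ does not imply that $\tree$ is recharged when $\mathcal{L}$ is. Once $\mathcal{L}$ is full the left-hand side equals $j\capacity$, and $j\capacity \ge \sum_{i\le j} P_i^{\tree}.\battery$ holds vacuously. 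What you actually need is that the \emph{total} energy in $\tree$ always dominates the total in $\mathcal{L}$ (equivalently, for your multiset formulation, that the two multisets of energy units have the same cardinality). That requires showing $\mathcal{L}$ can absorb a unit from the external source essentially whenever $\tree$ does --- and this is exactly the step you dismiss with ``harvesting at the root is mirrored directly.'' The path's root may be full when the tree's root harvests, so the harvest can only be mirrored after a pipelining chain of transfers down to the first non-full path particle; establishing that this is always possible, and that the extra activations it costs do not reduce the round count of the simulating schedule below that of the tree schedule, is where the real work lies (compare the $\capacity' = \capacity - \transferRate$ device introduced for precisely this pipelining issue in Lemma~\ref{lem:dominance}).

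You have also misidentified the hard case in step (iii). A transfer inside $\tree$ from $P_i$ to a BFS-later child $P_m$ only \emph{decreases} the tree's prefix sums at indices $i,\dots,m-1$, so it preserves your invariant with no action required on $\mathcal{L}$; there is nothing to match. The actual difficulty on the path side is that you do not get to choose which transfers occur: the algorithm forces $P_j$ to pass energy to its unique child $P_{j+1}$ whenever $P_j$ is activated with $P_j.\battery \ge \transferRate$ and $P_{j+1}.\battery < \capacity$. Since you must keep activating path particles to make the simulating schedule's round count at least that of the tree schedule, these forced transfers can decrease $\mathcal{L}$'s prefix sums at steps where $\tree$'s do not move, and the invariant must be shown to survive them. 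None of this appears fatal --- the approach can likely be repaired along the lines of Lemma~\ref{lem:dominance} --- but the proposal as stated does not close the argument.
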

\begin{proof}
Given any tree $\mathcal{U}$ of $k$ inhibited particles rooted at a particle $R$ with access to external energy and an activation sequence $A$ of the particles in $\mathcal{U}$, let $t_A(\mathcal{U})$ denote the number of asynchronous rounds required to recharge all particles' batteries in $\mathcal{U}$ with respect to activation sequence $A$.
We use $t(\mathcal{U}) = \max_A\{t_A(\mathcal{U})\}$ to denote the worst case recharge time for $\mathcal{U}$.
With this notation, our goal is to show that $t(\tree) \leq t(\mathcal{L})$.

For any tree $\mathcal{U}$ rooted at a particle $R$, let its \textit{maximal non-branching path} be the longest path $(R = P_1, \ldots, P_\ell)$ starting at $R$ such that $P_{i+1}$ is the only child particle of $P_i$ in $\mathcal{U}$ for all $1 \leq i < \ell$.
We argue by (reverse) induction on $\ell$, the length of the maximal non-branching path of $\tree$.
If $\ell = k$, then $\tree$ is already a path $\mathcal{L}$ of $k$ particles and we have $t(\tree) = t(\mathcal{L)}$ trivially.
So suppose that $\ell < k$ and for all possible trees $\mathcal{U}$ composed of the same $k$ particles as $\tree$ that are rooted at $R$ and have at least $\ell + 1$ particles in their maximal non-branching paths, we have $t(\mathcal{U}) \leq t(\mathcal{L})$.
Our goal is to modify $\tree$ to form another tree $\tree'$ that is composed of the same particles, is rooted at $R$, and has exactly one more particle in its maximal non-branching path such that $t(\tree) \leq t(\tree')$.
Since $\tree'$ has exactly $\ell + 1$ particles in its maximal non-branching path, the induction hypothesis lets us conclude that $t(\tree) \leq t(\tree') \leq t(\mathcal{L})$.

With maximal non-branching path $(R = P_1, \ldots, P_\ell = P)$ of $\tree$, $P = P_\ell$ is the ``closest'' particle to $R$ with multiple children, say $Q_1, \ldots, Q_c$ for $c \geq 2$; note that such a particle $P$ must exist since $\ell < k$.
Form the tree $\tree'$ by reassigning $Q_i.\parent$ from $P$ to $Q_1$ for each $2 \leq i \leq c$.
Then $Q_1$ is the only child of $P$ in $\tree'$, and thus $(R = P_1, \ldots, P_\ell = P, Q_1)$ is the maximal non-branching path of $\tree'$ which has length $\ell + 1$.
So it suffices to show that $t(\tree) \leq t(\tree')$.

Consider any activation sequence $A = (a_1, \ldots, a_f)$ where $a_f$ is the first activation after which all particles in $\tree$ have finished recharging their batteries; we must show that there exists an activation sequence $A'$ such that $t_A(\tree) \leq t_{A'}(\tree')$.
We construct $A'$ from $A$ so that the flow of energy through $\tree'$ mimics that of $\tree$.
For each $a_i \in A$, we append a corresponding subsequence of activations $a_i'$ to the end of $A'$ that activates the same particle as $a_i$ and possibly some others as well, if needed.

In almost all cases, $a_i$ has the same effect in both $\tree$ and $\tree'$, so we simply add $a_i' = (a_i)$ to $A'$.
However, any activations $a_i$ in which $P$ passes energy to a child $Q_j$, for $2 \leq j \leq c$, cannot be performed directly in $\tree'$ since $Q_j$ is a child of $Q_1$ --- not of $P$ --- in $\tree'$.
We instead add a pair of activations $a_i' = (a_i^1, a_i^2)$ to $A'$ that have the effect of passing energy from $P$ to $Q_j$ but use $Q_1$ as an intermediary.
There are two cases.
If $Q_1$ has a full battery (i.e., $Q_1.\battery = \capacity$) at the beginning of $a_i$, then $Q_1$ passes energy to $Q_j$ in $a_i^1$ and $P$ passes energy to $Q_1$ in $a_i^2$.
Otherwise, $P$ passes energy to $Q_1$ in $a_i^1$ and $Q_1$ passes energy to $Q_j$ in $a_i^2$.

Since all particles start with empty batteries, then this construction of $A'$ ensures the value of $P.\battery$ after each $a_i \in A$ and $a_i' \in A'$ is the same in $\tree$ and $\tree'$, respectively, for all $1 \leq i \leq f$.
Thus, the particles in $\tree$ and $\tree'$ only finish recharging after $a_f$ and $a_f'$, respectively.
Each $a_i'$ activates the same particle as $a_i$ (and possibly one additional particle), so the number of asynchronous rounds in $A'$ must be at least that in $A$.
Therefore, we have $t_A(\tree) \leq t_{A'}(\tree')$, and since the choice of $A$ was arbitrary, we have $t(\tree) \leq t(\tree')$ as desired.
\end{proof}

By Lemma~\ref{lem:pathrecharge}, it suffices to analyze the case where $\tree$ is a simple path of $k$ particles.
To bound the recharge time in this setting, we use a \textit{dominance argument} between asynchronous and parallel executions which is structured as follows.
First, we prove that for any asynchronous execution, there exists a parallel execution that makes at most as much progress towards recharging the system in the same number of rounds.
We then upper bound the recharge time in parallel rounds.
Combining these results gives a worst case upper bound on the recharge time in asynchronous rounds, as desired.

Let a configuration $C$ of the path $P_1, \ldots, P_k$ encode the battery values of each particle $P_i$ as $C(P_i)$.
A \textit{schedule} is a sequence of configurations $(C_0, \ldots, C_t)$.
Note that in the following definition for the parallel execution, we reduce each particle's battery capacity from $\capacity$ to $\capacity' = \capacity - \transferRate$.
This does not apply to the asynchronous execution, and is just a proof artifact that will be useful in Lemma~\ref{lem:dominance}.

\begin{definition} \label{defn:schedule}
A \underline{parallel energy schedule} $(C_0, \ldots, C_t)$ is a schedule such that for all configurations $C_i$ and particles $P_j$ we have $C_i(P_j) \in [0, \capacity']$ and, for every $0 < i \leq t$, $C_i$ is reached from $C_{i-1}$ using the following for each particle $P_j$:
\begin{itemize}
    \item $P_j$ is a root, so it harvests energy from the external energy source with:
    \begin{itemize}
        \item $C_i(P_j) = C_{i-1}(P_j) + \min\{\transferRate, \capacity' - C_{i-1}(P_j)\}$
    \end{itemize}
    \item $C_{i-1}(P_j) \geq \transferRate$ and $C_{i-1}(P_{j+1}) < \capacity'$, so $P_j$ passes energy to its child with:
    \begin{itemize}
        \item $C_i(P_j) = C_{i-1}(P_j) - \min\{\transferRate, \capacity' - C_{i-1}(P_{j+1})\}$
        \item $C_i(P_{j+1}) = C_{i-1}(P_{j+1}) + \min\{\transferRate, \capacity' - C_{i-1}(P_{j+1})\}$
    \end{itemize}
\end{itemize}
Such a schedule is \underline{greedy} if the above actions are taken in parallel whenever possible.
\end{definition}

Now consider any fair asynchronous activation sequence $A$; i.e., one in which every particle is activated infinitely often.
We compare a greedy parallel energy schedule to an \textit{asynchronous energy schedule} $(C_0^A, \ldots, C_t^A)$ where $C_i^A$ is the configuration of the path $P_1, \ldots, P_k$ at the completion of the $i$-th asynchronous round in $A$.
For a particle $P_i$ in a configuration $C$, let $\Delta_C(P_i)$ denote the total amount of energy in the batteries of particles $P_i, \ldots, P_k$ in $C$; i.e., $\Delta_C(P_i) = \sum_{j=i}^k C(P_j)$.
For any two configurations $C$ and $C'$, we say $C$ \textit{dominates} $C'$ --- denoted $C \succeq C'$ --- if and only if for all particles $P_i$ in the path $P_1, \ldots, P_k$, we have $\Delta_C(P_i) \geq \Delta_{C'}(P_i)$.

\begin{lemma} \label{lem:dominance}
Given any fair asynchronous activation sequence $A$ beginning at a configuration $C_0^A$ in which $P_i.\battery = 0$ for all $1 \leq i \leq k$, there exists a greedy parallel energy schedule $(C_0, \ldots, C_t)$ with $C_0 = C_0^A$ such that $C_i^A \succeq C_i$ for all $0 \leq i \leq t$.
\end{lemma}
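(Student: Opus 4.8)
The plan is to prove Lemma~\ref{lem:dominance} by induction on the round index $i$, tracking the fixed greedy parallel schedule generated from $C_0 = C_0^A$ (any greedy schedule will do if it is not unique). The base case $i=0$ is immediate since $C_0 = C_0^A$ and $\succeq$ is reflexive. For the inductive step I would assume $C_i^A \succeq C_i$ and prove $C_{i+1}^A \succeq C_{i+1}$. The essential observation is that energy enters the path only by harvesting at $P_1$ and only ever moves from a particle to its child; hence, along any execution, each suffix total $\Delta_C(P_m)$ is nondecreasing as the configuration $C$ evolves, and its increase over a single round equals precisely the net energy that crossed the ``cut'' between $\{P_1,\dots,P_{m-1}\}$ and $\{P_m,\dots,P_k\}$ during that round --- the amount harvested by $P_1$ when $m=1$, and the total transferred from $P_{m-1}$ to $P_m$ when $m \geq 2$. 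Writing $\epsilon_m := \Delta_{C_i^A}(P_m) - \Delta_{C_i}(P_m)$ (nonnegative by the inductive hypothesis), $\phi_m$ for the flow across cut $m$ in the parallel round $i+1$ (so $0 \leq \phi_m \leq \transferRate$, since one parallel step moves at most $\transferRate$ across any cut), and $\phi_m^A$ for the corresponding asynchronous flow, the goal collapses to showing $\phi_m^A \geq \phi_m - \epsilon_m$ for every $m$.

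Several cases are immediate: if $\phi_m = 0$ or $\epsilon_m \geq \transferRate$, then $\phi_m - \epsilon_m \leq 0 \leq \phi_m^A$. The cut $m=1$ is also easy: during round $i+1$ the asynchronous root harvests at least $\min\{\transferRate,\, \capacity - C_i^A(P_1)\}$; if $C_i^A(P_1) \leq \capacity' = \capacity - \transferRate$ this is at least $\transferRate \geq \phi_1$, and otherwise the inductive hypothesis gives $\epsilon_1 \geq C_i^A(P_1) - C_i(P_1) > \capacity' - C_i(P_1) \geq \phi_1$. This is the first place the capacity reduction from $\capacity$ to $\capacity'$ in Definition~\ref{defn:schedule} is used: it guarantees a genuine slack between what the asynchronous batteries may hold and what the parallel ones may.

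For the remaining case --- $m \geq 2$, $\phi_m > 0$, and $\epsilon_m < \transferRate$ --- I would take $m$ to be the smallest index (if any) at which the claim fails at round $i+1$, so the claim already holds at cut $m-1$. Two consequences of the inductive hypothesis, both invoking the $\capacity$-versus-$\capacity'$ gap, carry the argument: from the hypothesis at cut $m-1$ together with the fact that $\phi_m > 0$ forces $C_i(P_{m-1}) \geq \transferRate$ in the greedy parallel step, one gets $C_i^A(P_{m-1}) \geq C_i(P_{m-1}) - \epsilon_m \geq \transferRate - \epsilon_m$; and from $\epsilon_{m+1} \geq 0$ with $C_i(P_m) \leq \capacity'$ one gets $C_i^A(P_m) \leq \capacity - \transferRate + \epsilon_m$, i.e.\ $P_m$ has at least $\transferRate - \epsilon_m$ of free space when round $i+1$ begins. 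Since energy flows only forward, $P_{m-1}$'s battery can only grow before its first activation in the round and $P_m$'s can only shrink before that moment; so if $P_{m-1}$'s battery ever reaches $\transferRate$ at an activation during the round, that activation pushes at least $\min\{\transferRate,\, \transferRate - \epsilon_m\} = \transferRate - \epsilon_m \geq \phi_m - \epsilon_m$ units to $P_m$, and we are done. The only surviving scenario is that $P_{m-1}$ stays strictly below $\transferRate$ for the whole round, forcing $\phi_m^A = 0$; to rule this out I would combine the fact that the claim holds at cut $m-1$ \emph{after} round $i+1$ (which bounds how much $P_{m-1}$ can receive) with greediness of the parallel step on the edge from $P_{m-2}$ to $P_{m-1}$, cascading the argument back toward $P_1$ to contradict $\phi_m > \epsilon_m$.

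I expect this last scenario to be the crux: showing that no adversarial ordering of activations within a single asynchronous round can leave a cut underfed relative to the parallel round. Round-boundary configurations alone do not settle it --- the argument must track, inside a round, the monotone build-up of $P_{m-1}$'s battery and its interplay with the surplus $\epsilon_{m-1}$ and the capacity gap, threaded through the cuts from $P_1$ outward. A secondary point of care is to fix, before starting the case analysis, the exact semantics of the parallel model's root when both harvesting and forwarding are enabled by $C_{i-1}$ (whether both occur in one parallel step, and in which order), since the capacity-gap inequalities above are mildly sensitive to it; any convention consistent with Definition~\ref{defn:schedule} will do.
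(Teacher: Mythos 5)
Your overall route is the same as the paper's: induct on the round index, compare suffix sums $\Delta_C(P_m)$ across the cut between $\{P_1,\dots,P_{m-1}\}$ and $\{P_m,\dots,P_k\}$, and use the reduced parallel capacity $\capacity'=\capacity-\transferRate$ to guarantee that the asynchronous path always has both energy upstream of the cut and free space downstream of it. Your derivations of $C_i^A(P_{m-1})\geq C_i(P_{m-1})-\epsilon_m$ and $C_i^A(P_m)\leq \capacity'+\epsilon_m$ are exactly the two telescoping computations in the paper's proof.

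The gap is the case you yourself flag as the crux and leave open: $m\geq 2$, $\phi_m>0$, and $0<\epsilon_m<\transferRate$, where $C_i^A(P_{m-1})$ might be positive but below $\transferRate$, so $P_{m-1}$ never transfers during the round and $\phi_m^A=0$ while $\phi_m-\epsilon_m>0$. Your proposed fix (``cascading the argument back toward $P_1$'') is not worked out and is not obviously salvageable, because the adversary controls the activation order within the round and a sub-$\transferRate$ battery at $P_{m-1}$ simply cannot send anything under the sharing rule. The paper never faces this case, because of an observation you did not use: under the standing assumption $\capacity/\transferRate\in\mathbb{N}$ (made just before Lemma~\ref{lem:pathrecharge}), every battery level in both schedules starts at $0$ and changes only by exact multiples of $\transferRate$ --- a transfer occurs only when the receiver's free space is nonzero, hence at least $\transferRate$, so $\min\{\transferRate,\capacity-Q.\battery\}=\transferRate$ always. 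Consequently every $\Delta$-value, and hence every $\epsilon_m$, is a multiple of $\transferRate$: either $\epsilon_m\geq\transferRate$ (your immediate case) or $\epsilon_m=0$, in which case your own inequalities give $C_i^A(P_{m-1})\geq\transferRate$ and $C_i^A(P_m)\leq\capacity-\transferRate$ at the start of the round, and fairness forces an activation of $P_{m-1}$ that delivers $\transferRate$ to $P_m$ (or $P_{m-1}$ has already done so earlier in the round). Adding this quantization step makes your ``surviving scenario'' vacuous and closes the proof; without it, the argument as written is incomplete.
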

\begin{proof}
Given a fair asynchronous activation sequence $A$ and an initial configuration $C_0^A$, we obtain a unique asynchronous energy schedule $(C_0^A, \ldots, C_t^A)$.
Our goal is to construct a parallel energy schedule $(C_0, \ldots, C_t)$ such that $C_i^A \succeq C_i$ for all $0 \leq i \leq t$.
Let $C_0 = C_0^A$; then, for $0 < i \leq t$, let $C_i$ be obtained from $C_{i-1}$ by performing one \textit{parallel round}: each particle greedily performs the actions of Definition~\ref{defn:schedule} if possible.

We now show $C_i^A \succeq C_i$ for all $0 \leq i \leq t$ by induction on $i$.
Since $C_0 = C_0^A$, we trivially have $C_0^A \succeq C_0$.
So suppose $i > 0$ and for all rounds $0 \leq r < i$ we have $C_r^A \succeq C_r$.
Considering any particle $P_j$, we have $\Delta_{C_{i-1}^A}(P_j) \geq \Delta_{C_{i-1}}(P_j)$ by the induction hypothesis and want to show that $\Delta_{C_i^A}(P_j) \geq \Delta_{C_i}(P_j)$.
First suppose the inequality from the induction hypothesis is strict and we have $\Delta_{C_{i-1}^A}(P_j) > \Delta_{C_{i-1}}(P_j)$, meaning strictly more energy has been passed into $P_j, \ldots, P_k$ in the asynchronous setting than in the parallel one after rounds $i-1$ are complete.
Because all successful energy transfers pass $\transferRate$ energy either from the external source to the root $P_1$ or from a parent $P_j$ to its child $P_{j+1}$, we have that $\Delta_{C_{i-1}^A}(P_j) \geq \Delta_{C_{i-1}}(P_j) + \transferRate$.
But by Definition~\ref{defn:schedule}, a particle can receive at most $\transferRate$ energy per parallel round, so we have:
\[\Delta_{C_i}(P_j) \leq \Delta_{C_{i-1}}(P_j) + \transferRate \leq \Delta_{C_{i-1}^A}(P_j) \leq \Delta_{C_i^A}(P_j).\]

Thus, it remains to consider when $\Delta_{C_{i-1}^A}(P_j) = \Delta_{C_{i-1}}(P_j)$, meaning the amount of energy passed into $P_j, \ldots, P_k$ is exactly the same in the asynchronous and parallel settings after rounds $i-1$ are complete.
It suffices to show that if $P_j$ receives $\transferRate$ energy in parallel round $i$, then it also does so in asynchronous round $i$.

We first prove that if $P_j$ receives $\transferRate$ energy in parallel round $i$, then $C_{i-1}^A(P_j) \leq \capacity - \transferRate$; i.e., $P_j$ has enough room in its battery to receive $\transferRate$ energy whenever it is activated in asynchronous round $i$.
There are two cases: either $P_j$ already had enough room in its battery to receive $\transferRate$ energy in parallel round $i$ (i.e., $C_{i-1}(P_j) \leq \capacity' - \transferRate$) or it had a full battery (i.e., $C_{i-1}(P_j) = \capacity'$) but passed $\transferRate$ energy to $P_{j+1}$ in parallel, ``pipelining'' energy to make room for the energy it received.
In either case, it is easy to see that $C_{i-1}(P_j) \leq \capacity'$.
By supposition we have $\Delta_{C_{i-1}^A}(P_j) = \Delta_{C_{i-1}}(P_j)$ and by the induction hypothesis we have $\Delta_{C_{i-1}^A}(P_{j+1}) \geq \Delta_{C_{i-1}}(P_{j+1})$.
Combining these facts, we have:
\begin{align*}
    C_{i-1}^A(P_j) &= \sum_{\ell=j}^k C_{i-1}^A(P_\ell) - \sum_{\ell=j+1}^k C_{i-1}^A(P_\ell) \\
    &= \Delta_{C_{i-1}^A}(P_j) - \Delta_{C_{i-1}^A}(P_{j+1}) \\
    &\leq \Delta_{C_{i-1}}(P_j) - \Delta_{C_{i-1}}(P_{j+1}) \\
    &= \sum_{\ell=j}^k C_{i-1}(P_\ell) - \sum_{\ell=j+1}^k C_{i-1}(P_\ell) \\
    &= C_{i-1}(P_j) \leq \capacity' = \capacity - \transferRate
\end{align*}
Thus, regardless of whether $P_j$ already had space for $\transferRate$ energy or used pipelining in parallel round $i$, $P_j$ must have space for $\transferRate$ energy at the start of asynchronous round $i$, as desired.

Next, we show that if $P_j$ receives $\transferRate$ energy in parallel round $i$, then there is at least $\transferRate$ energy for $P_j$ to receive in asynchronous round $i$.
If $P_j$ is the root, this is trivial: the external source of energy is its infinite supply.
Otherwise, $j > 1$ and we must show $C_{i-1}^A(P_{j-1}) \geq \transferRate$.
We have $\Delta_{C_{i-1}^A}(P_j) = \Delta_{C_{i-1}}(P_j)$ by supposition and $\Delta_{C_{i-1}^A}(P_{j-1}) \geq \Delta_{C_{i-1}}(P_{j-1})$ by the induction hypothesis, so:
\begin{align*}
    C_{i-1}^A(P_{j-1}) &= \sum_{\ell=j-1}^k C_{i-1}^A(P_\ell) - \sum_{\ell=j}^k C_{i-1}^A(P_\ell) \\
    &= \Delta_{C_{i-1}^A}(P_{j-1}) - \Delta_{C_{i-1}^A}(P_j) \\
    &\geq \Delta_{C_{i-1}}(P_{j-1}) - \Delta_{C_{i-1}}(P_j) \\
    &= \sum_{\ell=j-1}^k C_{i-1}(P_\ell) - \sum_{\ell=j}^k C_{i-1}(P_\ell) \\
    &= C_{i-1}(P_{j-1}) \geq \transferRate
\end{align*}

Thus, we have shown that if $P_j$ receives $\transferRate$ energy in parallel round $i$, then $C_{i-1}^A(P_j) \leq \capacity - \transferRate$ and either $j = 1$ or $C_{i-1}^A(P_{j-1}) \geq \transferRate$, meaning that at the end of asynchronous round $i-1$ there is both $\transferRate$ energy available to pass to $P_j$ and $P_j$ has room in its battery to receive it.
Though we do not control the order of activations in asynchronous round $i$, additional activations can only increase the amount of energy available to pass to $P_j$ (by, e.g., passing more energy to $P_{j-1}$) and increase the space available in $P_j.\battery$ (by passing more energy to $P_{j+1}$).
Since the activation sequence $A$ was assumed to be fair, either $j = 1$ and $P_j$ will be activated at least once in asynchronous round $i$ or $j > 1$ and $P_{j-1}$ will be activated at least once in asynchronous round $i$; in either case, $P_j$ will receive $\transferRate$ energy in asynchronous round $i$.
Therefore, in all cases we have shown that $\Delta_{C_i^A}(P_j) \geq \Delta_{C_i}(P_j)$, and since the choice of $P_j$ was arbitrary, we have shown $C_i^A \succeq C_i$.
\end{proof}

To conclude the dominance argument, we bound the number of parallel rounds needed to recharge a path of $k$ particles.
Combined with Lemma~\ref{lem:dominance}, this gives an upper bound on the worst case number of asynchronous rounds required to do the same.

\begin{lemma} \label{lem:paralleltime}
Let $(C_0, \ldots, C_t)$ be a greedy parallel energy schedule where $C_0$ is the configuration in which $P_i.\battery = 0$ for all $1 \leq i \leq k$ and $C_t$ is the configuration in which $P_i.\battery = \capacity' = \capacity - \transferRate$ for all $1 \leq i \leq k$.
Then $t =\frac{\capacity'}{\transferRate}k = \bigO{k}$.
\end{lemma}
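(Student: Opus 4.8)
The plan is to reduce the whole statement to a single fact: \emph{in the greedy schedule the root harvests $\transferRate$ units of energy in each of the first $(\capacity'/\transferRate)\,k$ rounds}. The starting observation is that the total stored energy $S_i := \sum_{j=1}^{k} C_i(P_j)$ can change only through the root's harvest, since an internal transfer $P_j \to P_{j+1}$ leaves $S_i$ fixed; and by Definition~\ref{defn:schedule} a harvest adds at most $\transferRate$. As $S_0 = 0$ and $S_t = k\capacity'$, this already yields the lower bound $t \geq k\capacity'/\transferRate = (\capacity'/\transferRate)\,k$, and reduces the upper bound to the highlighted fact.

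For the upper bound, set $m := \capacity'/\transferRate \in \mathbb{N}$ (by the divisibility assumption). Since $C_0$ is all-zero and both $\capacity'$ and $\transferRate$ are integer multiples of $\transferRate$, every battery value in every $C_i$ stays a nonnegative multiple of $\transferRate$, so whenever a harvest or pass is admissible in Definition~\ref{defn:schedule} it moves exactly $\transferRate$ units. Granting the fact above, a one-line induction on $i$ finishes the proof: if $S_{i-1} = (i-1)\transferRate < k\capacity'$, then $C_{i-1}$ is not the all-full configuration (which is the unique configuration of total energy $k\capacity'$, as every battery is at most $\capacity'$), so the root harvests exactly $\transferRate$ in round $i$ and $S_i = i\transferRate$; hence $S_i = i\transferRate$ for all $0 \le i \le mk$. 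In particular $S_{mk} = mk\,\transferRate = k\capacity'$, which forces $C_{mk}(P_j) = \capacity'$ for every $j$, whereas $S_i < k\capacity'$ for $i < mk$ shows no earlier configuration is all-full. Therefore $t = mk = (\capacity'/\transferRate)\,k = \bigO{k}$ for constant $\capacity, \transferRate$.

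So everything hinges on showing that whenever $C_{i-1}$ is not all-full, the root harvests $\transferRate$ in round $i$. If $C_{i-1}(P_1) < \capacity'$, the root has room and harvests $\transferRate$ directly. Otherwise $C_{i-1}(P_1) = \capacity'$; let $\ell \ge 2$ be the smallest index with $C_{i-1}(P_\ell) < \capacity'$, which exists because $C_{i-1}$ is not all-full. Since the greedy schedule performs all admissible transfers simultaneously, I would invoke exactly the ``pipelining'' used in the proof of Lemma~\ref{lem:dominance}: the full particle $P_{\ell-1}$ passes $\transferRate$ down to $P_\ell$ and, in the same round, receives $\transferRate$ from $P_{\ell-2}$ into the space it just vacated; cascading this back along the path, $P_1$ passes $\transferRate$ to $P_2$ and harvests $\transferRate$ from the external source into its own freed slot. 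In either case the root harvests $\transferRate$ in round $i$, which is what was needed, and the induction closes.

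The conservation and divisibility steps are routine, so I expect the only delicate point to be making the cascade rigorous against Definition~\ref{defn:schedule}: one must confirm that ``greedy'' genuinely licenses a full intermediate particle to both send $\transferRate$ to its child and receive $\transferRate$ from its parent within a single parallel round, and check that the accompanying bookkeeping keeps every battery inside $[0, \capacity']$.
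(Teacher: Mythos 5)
Your proposal is correct in substance but takes a genuinely different route from the paper. The paper proves the bound by induction on $k$: it argues that once $P_1$ first harvests, it behaves as a perpetual external source for the sub-path $P_2,\ldots,P_k$, which by the induction hypothesis recharges in $\frac{\capacity'}{\transferRate}(k-1)$ parallel rounds, after which $P_1$ tops itself off in $\frac{\capacity'}{\transferRate}-1$ further rounds; the terms sum to exactly $\frac{\capacity'}{\transferRate}k$. You instead run a global conservation argument: total stored energy changes only through the root's harvest and by at most $\transferRate$ per round, which gives the matching lower bound for free and reduces the upper bound to the single claim that the root harvests $\transferRate$ in every round until the path is full. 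Both proofs ultimately rest on the same ``continuous flow from the source'' fact, but your decomposition is cleaner, pins down $t$ exactly with less bookkeeping, and makes explicit the lower bound that the paper's induction only yields implicitly.

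The one point to tighten is the cascade in your ``root is full'' case, and you are right to flag it. Read literally, Definition~\ref{defn:schedule} evaluates the pass condition $C_{i-1}(P_{j+1}) < \capacity'$ on the configuration at the \emph{start} of the round, so a particle that is full at $C_{i-1}$ cannot receive from its parent in the same round in which it sends, and the cascade does not actually propagate back to $P_1$. (The paper's own proof of Lemma~\ref{lem:dominance} informally invokes such pipelining, so the definition is ambiguous on this point; note also that when $\capacity' = \transferRate$ the stated equality $t = \frac{\capacity'}{\transferRate}k$ genuinely requires the pipelining reading.) The cleanest repair is to avoid the cascade entirely by showing the offending configuration is unreachable: in the greedy schedule from the all-zero start (with $\capacity'/\transferRate \geq 2$), each $P_j$ with $j < k$ holds exactly $\transferRate$ from its first receipt until $P_{j+1}$ is full, so the root never reaches $\capacity'$ before every other particle is already full; hence in every round before $C_t$ the root has room and harvests exactly $\transferRate$ directly, and your induction closes without the cascade.
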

\begin{proof}
We argue by induction on $k$, the number of particles in the path.
If $k = 1$, then $P_1 = P_k$ is the root particle that harvests $\transferRate$ energy per parallel round from the external source by Definition~\ref{defn:schedule}.
Since $P_1$ has no children to which it may pass energy, clearly, within $\frac{\capacity'}{\alpha} = \bigO{k}$ rounds $P_1.\battery = \capacity'$ will be satisfied.

Now suppose $k > 1$ and that for all $1 \leq j < k$, a path of $j$ particles fully recharges in $\frac{\capacity'}{\transferRate}j$ parallel rounds.
Once a particle $P_i$ has received energy for the first time, it is easy to see by inspection of Definition~\ref{defn:schedule} that $P_i$ will receive $\transferRate$ energy from $P_{i-1}$ (or the external energy source, in the case that $i = 1$) in every subsequent parallel round until $P_i.\battery$ is full.
Similarly, Definition~\ref{defn:schedule} ensures that $P_i$ will pass $\transferRate$ energy to $P_{i+1}$ in every subsequent parallel round until $P_{i+1}.\battery$ is full.
Thus, once $P_i$ receives energy for the first time, $P_i$ effectively acts as an external energy source for the remaining particles $P_{i+1}, \ldots, P_k$.

The root $P_1$ first harvests energy from the external energy source in parallel round $0$, and thus acts as a continuous energy source for $P_2, \ldots, P_k$ in all subsequent rounds.
By the induction hypothesis, we have that $P_2, \ldots, P_k$ will fully recharge in $\frac{\capacity'}{\transferRate}(k-1)$ parallel rounds, after which $P_1$ will no longer pass energy to $P_2$.
The root $P_1$ harvests $\transferRate$ energy from the external energy source per parallel round and already has $P_1.\battery = \transferRate$, so in an additional $\frac{\capacity'}{\transferRate} - 1$ parallel rounds we have $P_1.\battery = \capacity'$.
Therefore, the path $P_1, \ldots, P_k$ fully recharges in $1 + \frac{\capacity'}{\transferRate}(k-1) + \frac{\capacity'}{\transferRate} - 1 = \frac{\capacity'}{\transferRate}k = \bigO{k}$ parallel rounds, as required.
\end{proof}

Lemmas~\ref{lem:pathrecharge},~\ref{lem:dominance}, and~\ref{lem:paralleltime} show that an inhibited tree $\tree$ of $k$ particles will recharge all its stressed particles in at most $\bigO{k}$ asynchronous rounds.
The following lemma shows that within a bounded number of additional rounds, there will be some particle that is neither inhibited nor stressed and thus can perform an enabled action (if it has one).

\begin{lemma} \label{lem:usagetime}
Suppose that the last stressed particle in $\tree$ has just received the energy it needs to perform its next action.
If $\tree$ has depth $d_\tree$, then within $2d_\tree$ additional rounds some particle in $\tree$ with a pending enabled action will be able to perform it.
\end{lemma}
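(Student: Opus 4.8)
The plan is to run the argument of Lemma~\ref{lem:inhibittime} in reverse, tracking how the stress and inhibit flags get \emph{reset} rather than set. By hypothesis no particle in $\tree$ is stressed at the start of this window, and the crux of the proof is to guarantee that the reset waves propagate \emph{monotonically} — i.e., that no particle becomes stressed again and restarts the inhibition. An inhibited particle never spends energy in its usage phase, so a particle's battery can only decrease by sharing to a child or by performing an action; by the recharge analysis (Lemmas~\ref{lem:pathrecharge}--\ref{lem:paralleltime}), ``the last stressed particle just recharged'' in the worst case being analyzed means every battery in $\tree$ holds $\capacity$, so no energy is shared within $\tree$. The only remaining way a battery can drop is a particle performing an action — but the first time that happens the conclusion of the lemma already holds, so we may assume for the rest of the argument that no battery in $\tree$ ever drops and hence that no particle re-stresses. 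This is precisely what keeps the reset waves from ``bouncing.''

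Next I would bound the time for the stress flags to clear up to the root. Within one round the (last) stressed particle is activated and, being no longer stressed, resets its stress flag; in the worst case it sits at depth $d_\tree$, so it has at most $d_\tree - 2$ ancestors strictly between it and the root. Since a particle resets its stress flag as soon as it is unstressed and has no child with a set stress flag, and since by the previous paragraph flags only ever clear, at least one more ancestor clears its stress flag per round, so within $d_\tree - 2$ further rounds a child of the root has cleared its stress flag. Within one additional round the root --- now seeing no child with a set stress flag and not itself stressed --- resets its inhibit flag.

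Finally I would propagate the inhibit-reset downward exactly as the inhibit-set was propagated in Lemma~\ref{lem:inhibittime}: in each round, any particle whose parent's inhibit flag is cleared clears its own, and the longest root-to-descendant path in $\tree$ has length $d_\tree$, so within $d_\tree$ further rounds every particle in $\tree$ is uninhibited. At that point every particle in $\tree$ is both uninhibited and unstressed, so any particle holding a pending enabled action can perform it in its usage phase. Summing the phases gives $1 + (d_\tree - 2) + 1 + d_\tree = 2d_\tree$ additional rounds, matching the claim. The main obstacle is the reasoning of the first paragraph: rather than trying to argue directly about particles with partly-filled batteries, I reduce to the dichotomy ``either some particle acts (and we are done) or all batteries stay full (and no sharing, hence no re-stressing, occurs),'' which is what makes the two flag-reset waves well-behaved.
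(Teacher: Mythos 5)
Your proof follows the same two-wave structure as the paper's: a bottom-up wave clearing stress flags in at most $d_\tree$ rounds, then a top-down wave clearing inhibit flags in at most $d_\tree$ more, totalling $2d_\tree$. But one step in the middle wave is wrong as written. The claim ``at least one more ancestor clears its stress flag per round'' treats clearing as symmetric to setting, and it is not: in Lemma~\ref{lem:inhibittime} a parent \emph{sets} its stress flag as soon as \emph{any} child has it set, so tracking a single leaf-to-root path suffices; but a parent \emph{resets} its stress flag only when it is unstressed \emph{and no} child has its flag set. An ancestor of the last stressed particle may have a sibling branch whose stress flags reach down to depth $d_\tree$, and that ancestor cannot clear until the entire sibling branch has cleared, so the single-path count can stall. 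The paper repairs this by looking at the whole connected subtree $\mathcal{S}$ of flagged particles: every leaf of $\mathcal{S}$ clears within one round (it is unstressed and has no flagged child), so $\mathcal{S}$ loses one level of height per round and its height is at most $d_\tree$. Your $d_\tree$ bound survives, but only under this reformulation.

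Your first paragraph addresses a real issue the paper silently skips --- whether a particle can re-stress during the reset window by sharing energy to a child (a particle with $P.\battery = \demand(P) < \capacity$ and a non-full child would transfer $\transferRate$ and drop below its demand). Recognizing this is to your credit, but your resolution rests on ``every battery in $\tree$ holds $\capacity$,'' which is a property of the worst-case instantiation used in Theorem~\ref{thm:runtime} (uniform demand $\capacity$, recharge meaning full batteries), not a consequence of the lemma's actual hypothesis that the last stressed particle has just met its demand. So the dichotomy ``either some particle acts or no battery ever drops'' is justified only in the context where the lemma is applied, not for the statement as given; the paper's proof simply assumes unstressed particles stay unstressed without comment.
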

\begin{proof}
Let $\tree_a$ be the set of particles in $\tree$ that have enabled actions to perform.
By supposition, all particles in $\tree_a$ now have sufficient energy stored in their batteries to perform their actions (i.e., they are no longer stressed).
It remains to bound the time for a particle in $\tree_a$ to reset its inhibit flag, the only remaining obstacle to performing its action.

Let $\mathcal{S} \subseteq \tree$ be the connected subtree of particles with their stress flags set.
All leaves of $\mathcal{S}$ at the start of an asynchronous round are guaranteed to reset their stress flags by the completion of the round since they are no longer stressed and do not have children with stress flags set.
A descendant-to-root path in $\mathcal{S}$ can have length at most $d_\tree$; the depth of tree $\tree$.
So in at most $d_\tree$ rounds, all particles in $\tree$ will reset their stress flags.

In the first asynchronous round in which the root does not have any children with their stress flags set, the root resets its inhibit flag.
In each subsequent round, any child whose parent has reset its inhibit flag will also reset its own inhibit flag.
The longest root-to-descendant path in $\tree$ is of length $d_\tree$, so in at most $d_\tree$ rounds there must exist a particle in $\tree_a$ that resets its inhibit flag; let $P$ be the first such particle.
Particle $P$ has an enabled action, has sufficient energy stored, and is not inhibited, so it performs its enabled action during its next usage phase.
\end{proof}

We conclude our analysis with the following two theorems.
Recall from Section~\ref{subsec:results} that an algorithm solves the energy distribution problem in $t$ asynchronous rounds if no particle remains stressed for more than $t$ rounds and at least one particle is able to perform an enabled action every $t$ rounds.

\begin{theorem} \label{thm:runtime}
Algorithm \energyAlg\ solves the energy distribution problem in $\bigO{\numParticles}$ asynchronous rounds.
\end{theorem}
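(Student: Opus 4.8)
The plan is to assemble the five lemmas proved above into (a) a bound on how long any particle can stay stressed and (b) a bound on how long the system can go without performing an enabled action, then observe both are $\bigO{\numParticles}$.

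First I would dispense with the setup phase: by Lemma~\ref{lem:setuptime}, within $\numParticles$ rounds every idle particle has joined the spanning forest $\forest$, after which $\forest$ is static. From here on fix any tree $\tree \in \forest$; since the instance is valid, $\tree$ is rooted at a particle with access to external energy, and $d_\tree \le k_\tree \le \numParticles$ where $k_\tree$ is the number of particles in $\tree$. The key quantity is the time from when some particle $P \in \tree$ becomes stressed until $P$ recharges. By Lemma~\ref{lem:inhibittime}, within $2d_\tree$ rounds every particle in $\tree$ has its inhibit flag set; once that happens the usage phase prevents any particle in $\tree$ from spending energy, so batteries in $\tree$ are non-decreasing and no further particle in $\tree$ becomes stressed. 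The state at that moment is dominated by the empty-battery, maximum-demand configuration analyzed in Lemmas~\ref{lem:pathrecharge}, \ref{lem:dominance}, and~\ref{lem:paralleltime}, so within an additional $\bigO{k_\tree}$ rounds every stressed particle in $\tree$, including $P$, has recharged. Hence no particle remains stressed for more than $2d_\tree + \bigO{k_\tree} = \bigO{\numParticles}$ rounds, which is the first half of the definition.

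For the second half I would show that whenever some particle has a pending enabled action, one such action is performed within $\bigO{\numParticles}$ rounds. Pick a tree $\tree$ containing a particle with a pending enabled action and let $\tree_a \neq \emptyset$ be the set of such particles in $\tree$. Argue within a window in which (for contradiction) no action in $\tree$ is performed, so $\tree_a$ stays nonempty and batteries in $\tree$ only increase. If $\tree$ currently contains a stressed particle, the argument above shows that within $\bigO{\numParticles}$ rounds the last stressed particle in $\tree$ recharges, and then Lemma~\ref{lem:usagetime} gives that within $2d_\tree$ further rounds some particle of $\tree_a$ performs its enabled action. If $\tree$ currently has no stressed particle, then the reset cascade used in the proof of Lemma~\ref{lem:usagetime} clears all stress and then inhibit flags in $\tree$ within $2d_\tree$ rounds, after which a particle of $\tree_a$ — which has enough energy for its action, being unstressed, and is no longer inhibited — performs it. Either way, within $\bigO{\numParticles}$ rounds some particle performs an enabled action; combining with the setup and stress bounds and taking the maximum (which stays $\bigO{\numParticles}$ since every tree has at most $\numParticles$ particles and at least one external source) yields the theorem.

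The main obstacle I anticipate is bookkeeping around interleaving rather than a new idea. Two points need care: (i) the energy possibly spent by not-yet-inhibited particles of $\tree$ during the $2d_\tree$ rounds before full inhibition cannot make recharging slower than the empty-battery worst case, since that worst case dominates any partially charged state; and (ii) once all stressed particles have recharged and inhibit flags clear, a particle of $\tree_a$ actually reaches its usage phase before any re-inhibition, which holds because in the window where we assume no enabled action is performed no particle spends energy and hence no particle becomes stressed again. If either point is delicate to formalize, I would instead phrase the second half as a direct contradiction: over a window of $\bigO{\numParticles}$ rounds with no action performed, batteries are monotone, every stressed particle recharges, all flags reset, and some particle of $\tree_a$ is then forced to act.
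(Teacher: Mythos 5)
Your proposal is correct and follows essentially the same route as the paper's proof: setup via Lemma~\ref{lem:setuptime}, inhibition via Lemma~\ref{lem:inhibittime}, recharging via Lemmas~\ref{lem:pathrecharge}, \ref{lem:dominance}, and~\ref{lem:paralleltime}, and action execution via Lemma~\ref{lem:usagetime}, summed to $\numParticles + 2d_\tree + \bigO{|\tree|} + 2d_\tree = \bigO{\numParticles}$. Your added bookkeeping (the domination of partially charged states by the empty-battery worst case, and the no-stressed-particle case for the second half) only makes explicit details the paper handles in the remarks preceding Lemma~\ref{lem:pathrecharge} or leaves implicit.
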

\begin{proof}
By Lemma~\ref{lem:setuptime}, all $\numParticles$ particles in system $\system$ will join the spanning forest $\forest$ within $\numParticles$ asynchronous rounds.
Since there is no communication or energy transfer between different trees of $\forest$, it suffices to analyze an arbitrary tree $\tree \in \forest$.
By Lemma~\ref{lem:inhibittime}, if $\tree$ contains a stressed particle then all particles of $\tree$ will be inhibited within $2d_\tree$ rounds, where $d_\tree$ is the depth of $\tree$.
Lemma~\ref{lem:pathrecharge} shows that assuming $\tree$ has a path structure can only increase the time to recharge its stressed particles, and Lemmas~\ref{lem:dominance} and~\ref{lem:paralleltime} prove that even in the case that all particles have uniform, maximum demand --- i.e., $\demand(P) = \capacity$ for all particles $P$ --- all stressed particles will be distributed enough energy to meet their demand within $\bigO{|\tree|}$ rounds.
Finally, Lemma~\ref{lem:usagetime} shows that within $2d_\tree$ additional rounds some particle in $\tree$ will use its energy to perform its next enabled action.
Therefore, since the depth of $\tree$ can be at most its size (if $\tree$ is a path) and its size can be at most the number of particles in the system (if $\tree$ is the only tree in $\forest$), we conclude that \energyAlg\ solves the energy distribution problem in $n + 2d_\tree + \bigO{|\tree|} + 2d_\tree = \bigO{n}$ asynchronous rounds.
\end{proof}

\begin{theorem} \label{thm:lowerbound}
The worst case runtime for any local control algorithm to solve the energy distribution problem when $s \leq \numParticles$ particles have access to external energy sources is $\Omega(\numParticles/s)$ asynchronous rounds.
\end{theorem}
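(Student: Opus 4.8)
The plan is a standard adversary argument: fix an arbitrary local control algorithm $\mathcal{A}$ that solves the energy distribution problem and an $s \leq \numParticles$, and exhibit a single valid instance with exactly $s$ roots on which $\mathcal{A}$ needs $\Omega(\numParticles/s)$ rounds. We may assume $s \leq \numParticles/2$, since otherwise $\Omega(\numParticles/s) = \Omega(1)$ and the bound is immediate: a particle starting with an empty battery and positive demand needs at least one activation --- hence at least one asynchronous round --- before it can stop being stressed. Take $\system$ to be a straight-line path $P_1, \ldots, P_{\numParticles}$ embedded in $\Gtri$, place the $s$ roots at roughly evenly spaced positions so that some particle $P_\star$ is at graph distance $d^\star \geq \lfloor \numParticles/(2s)\rfloor - 1 = \Omega(\numParticles/s)$ from \emph{every} root, start every battery empty, let $P_\star$'s (only) enabled action have demand $\demand(P_\star) = \capacity$, and give no other particle any pending action. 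This is a valid instance: $\system$ is connected, it has $s \geq 1$ roots with all other particles idle, and every demand is at most $\capacity$.

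Fix the adversarial activation sequence in which each asynchronous round consists of exactly one activation of every particle, processed in order of \emph{non-increasing distance to the nearest root} (outermost particles first, roots last). Under this schedule $P_\star$ is the only particle that can ever be stressed --- every other particle has no next action, so its battery value never falls below its (vacuous) demand --- so it suffices to show that $P_\star$ remains stressed, i.e.\ that $P_\star.\battery = 0 < \capacity$, throughout the first $\Omega(d^\star)$ rounds.

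The heart of the argument is a propagation-speed lemma: under this schedule, for every round $r$, every particle whose distance from the root set exceeds $2r$ still has an empty battery at the end of round $r$. I would prove this by induction on $r$. The only ways a particle gains energy are by a root harvesting from its external source or by receiving a transfer from a neighbor that currently holds energy. Because particles nearer the roots are activated \emph{later} within each round, a particle just beyond the current ``energized frontier'' can acquire energy in round $r$ only (a) by pulling from its nearer neighbor during its own activation --- moving the frontier outward by one layer --- or (b) by receiving a single ``pass-through'' push from that neighbor; but in case (b) the energy arrives only \emph{after} the receiving particle's own activation in round $r$ has completed, so it cannot be forwarded another hop in the same round. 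Hence the frontier advances by at most a constant number of layers per round. Making the within-round bookkeeping rigorous --- a nested induction over the activation order inside a single round, invoking the model assumption that each activation moves at most $\transferRate$ energy, and handling the mild complication that two particles may be equidistant from two different roots --- is the step I expect to require the most care. Note in particular that a naive ``cut'' argument (bounding the energy that can cross the boundary of a \emph{fixed-radius} ball around $P_\star$) is too weak here, since it permits energy to pipeline arbitrarily close to $P_\star$ within $\bigO{1}$ rounds; the shrinking-frontier formulation is what makes the bound work.

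With the lemma in hand the conclusion is immediate: for every $r < d^\star/2$ we have $d^\star > 2r$, so $P_\star.\battery = 0$ at the end of round $r$, and thus $P_\star$ is stressed and performs no enabled action during each of the first $\lfloor d^\star/2\rfloor = \Omega(\numParticles/s)$ rounds. Since a solving algorithm must guarantee that no particle stays stressed for more than $t$ rounds, $\mathcal{A}$ cannot solve this instance with $t < \lfloor d^\star/2\rfloor$, and as $\mathcal{A}$ was arbitrary the worst-case runtime for solving the energy distribution problem with $s$ external sources is $\Omega(\numParticles/s)$ asynchronous rounds.
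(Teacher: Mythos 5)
Your argument is correct in outline, but it takes a genuinely different route from the paper's. The paper's proof is a two-line global throughput bound: the system must harvest and distribute $\numParticles\capacity$ total energy to fully recharge, each of the $s$ roots is activated only once per asynchronous round in the worst case and harvests at most $\transferRate$ per activation, so at most $s\transferRate$ energy enters the system per round and $(\numParticles\capacity)/(s\transferRate) = \Omega(\numParticles/s)$ rounds are needed. You instead prove a locality bound: you build an instance in which some particle $P_\star$ lies at distance $\Omega(\numParticles/s)$ from every root and show that, under an outermost-first adversarial schedule, the energized frontier advances at most two hops per round, so $P_\star$ stays stressed for $\Omega(\numParticles/s)$ rounds. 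Each approach buys something. The paper's is far shorter, needs no particular root placement or activation order, and is insensitive to geometry; but it rests on the claim that $\numParticles\capacity$ energy must be harvested, which is only clearly forced if all batteries must be full simultaneously --- under the paper's actual definition of solving, a particle need only hold $\demand(P)$ at some moment, and the same units of energy could in principle serve different particles at different times. Your distance argument sidesteps that issue entirely, directly attacks the ``remains stressed for $t$ rounds'' condition, and still gives $\Omega(\numParticles/s)$ when only a single particle has positive demand, a regime where the throughput bound degenerates to $\Omega(\capacity/(s\transferRate))$. The price is the within-round bookkeeping you flag yourself: the nested induction showing the frontier moves at most a constant number of layers per round is the entire technical content of your version, and your proof is not complete until it is written out (your sketch of it --- a pull during a particle's own activation advances one layer, a pass-through push cannot be forwarded again in the same round --- is convincing and does close the induction at $2r$).
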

\begin{proof}
A system of $\numParticles$ particles each with a battery capacity of $\capacity$ must harvest and distribute $\numParticles\capacity$ total energy to fully recharge.
Each particle with access to an external energy source may only be activated once per asynchronous round in the worst case.
So in this worst case, a system with $s$ particles with energy access can harvest at most $s\transferRate$ energy from external sources per asynchronous round.
Therefore, at least $(\numParticles\capacity) / (s\transferRate) = \Omega(\numParticles/s)$ asynchronous rounds are required to fully rechage the system. 
\end{proof}

Together, Theorems~\ref{thm:runtime} and~\ref{thm:lowerbound} yield the following corollary.

\begin{corollary}
Algorithm \energyAlg\ is asymptotically optimal when the number of particles with access to external energy sources is a fixed constant.
\end{corollary}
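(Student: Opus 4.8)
The plan is to simply combine the matching upper and lower bounds already established. Algorithm \energyAlg\ is a local, distributed algorithm, so Theorem~\ref{thm:lowerbound} applies to it: when $s \leq \numParticles$ particles have access to external energy sources, its worst case runtime is $\Omega(\numParticles/s)$ asynchronous rounds. On the other hand, Theorem~\ref{thm:runtime} shows \energyAlg\ solves the energy distribution problem in $\bigO{\numParticles}$ asynchronous rounds regardless of $s$. The claim follows by specializing the lower bound to the regime where $s$ does not grow with $\numParticles$.

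Concretely, first I would fix the hypothesis: let $s = \bigO{1}$, i.e., there is an absolute constant $c$ with $s \leq c$ independent of $\numParticles$. Then substitute into the lower bound of Theorem~\ref{thm:lowerbound}: $\Omega(\numParticles/s) = \Omega(\numParticles/c) = \Omega(\numParticles)$, since dividing by a constant does not change the asymptotic order. Hence \emph{every} local control algorithm — and \energyAlg\ in particular — requires $\Omega(\numParticles)$ asynchronous rounds in the worst case when the number of external energy sources is a fixed constant.

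Finally I would pair this with the $\bigO{\numParticles}$ upper bound from Theorem~\ref{thm:runtime} to conclude that \energyAlg's worst case runtime is $\Theta(\numParticles)$ asynchronous rounds in this regime, matching the lower bound for the whole class of local control algorithms; thus \energyAlg\ is asymptotically optimal, as claimed.

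There is essentially no technical obstacle here — the corollary is immediate once Theorems~\ref{thm:runtime} and~\ref{thm:lowerbound} are in hand. The only point deserving care is the quantifier in ``fixed constant'': one must make explicit that $s$ is a constant that does not depend on $\numParticles$ (otherwise, e.g., $s = \Theta(\sqrt{\numParticles})$ would leave a $\Theta(\sqrt{\numParticles})$ gap between the $\bigO{\numParticles}$ upper bound and the $\Omega(\sqrt{\numParticles})$ lower bound), and that \energyAlg\ indeed falls within the class of local control algorithms to which Theorem~\ref{thm:lowerbound} applies.
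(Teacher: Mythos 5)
Your proposal is correct and matches the paper's reasoning exactly: the paper derives this corollary directly by combining the $\bigO{\numParticles}$ upper bound of Theorem~\ref{thm:runtime} with the $\Omega(\numParticles/s)$ lower bound of Theorem~\ref{thm:lowerbound}, which becomes $\Omega(\numParticles)$ when $s$ is a fixed constant. Your added care about the quantifier on $s$ is a reasonable clarification but does not change the argument.
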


\section{Simulation Results} \label{sec:simulations}

We now present simulations of the \energyAlg\ algorithm.
All figures in this section use color intensity to indicate the energy level of a particle's battery, with more intense color corresponding to more energy stored.
Our first simulation (\figtext~\ref{fig:basicsim}) shows \energyAlg\ running on a system of 91 particles with a single root particle that has access to an external energy source.
All particles have a capacity of $\capacity = 10$ and a transfer rate of $\transferRate = 1$.
To incorporate energy usage in the simulation, we assume that every particle has a uniform, repeating demand of $\demand(\cdot, \cdot) = 5$ energy per ``action'', though no explicit action is actually performed when the energy is used.
The system is organized as a hexagon with the root at its center for visual clarity, but the resulting behavior is characteristic of other initial configurations, root placements, and parameter settings.

All particles are initially idle, with the exception of the root shown with a gray/black ring (\figtext~\ref{fig:basicsim:a}).
The setup phase establishes the spanning forest (or tree, in this case) rooted at particle(s) with energy access; a particle's parent direction is shown as an arc.
Since all particles start with empty batteries, stress flags (shown as red rings) quickly propagate throughout the system and inhibit flags soon follow (\figtext~\ref{fig:basicsim:b}).
As energy is harvested by the root and shared throughout the system, some particles (shown with yellow rings) receive sufficient energy to meet the demand for their next action but remain inhibited from using it (\figtext~\ref{fig:basicsim:c}).
This inhibition remains until all stressed particles in the system receive sufficient energy to meet their demands (\figtext~\ref{fig:basicsim:d}), at which point particles (shown with green rings) can reset their inhibit flags and use their energy (\figtext~\ref{fig:basicsim:e}).
After using energy, these particles may again become stressed and trigger another stage of inhibition (\figtext~\ref{fig:basicsim:f}).

\begin{figure}
    \centering
    \begin{subfigure}{.3\textwidth}
        \centering
        \includegraphics[width=\textwidth]{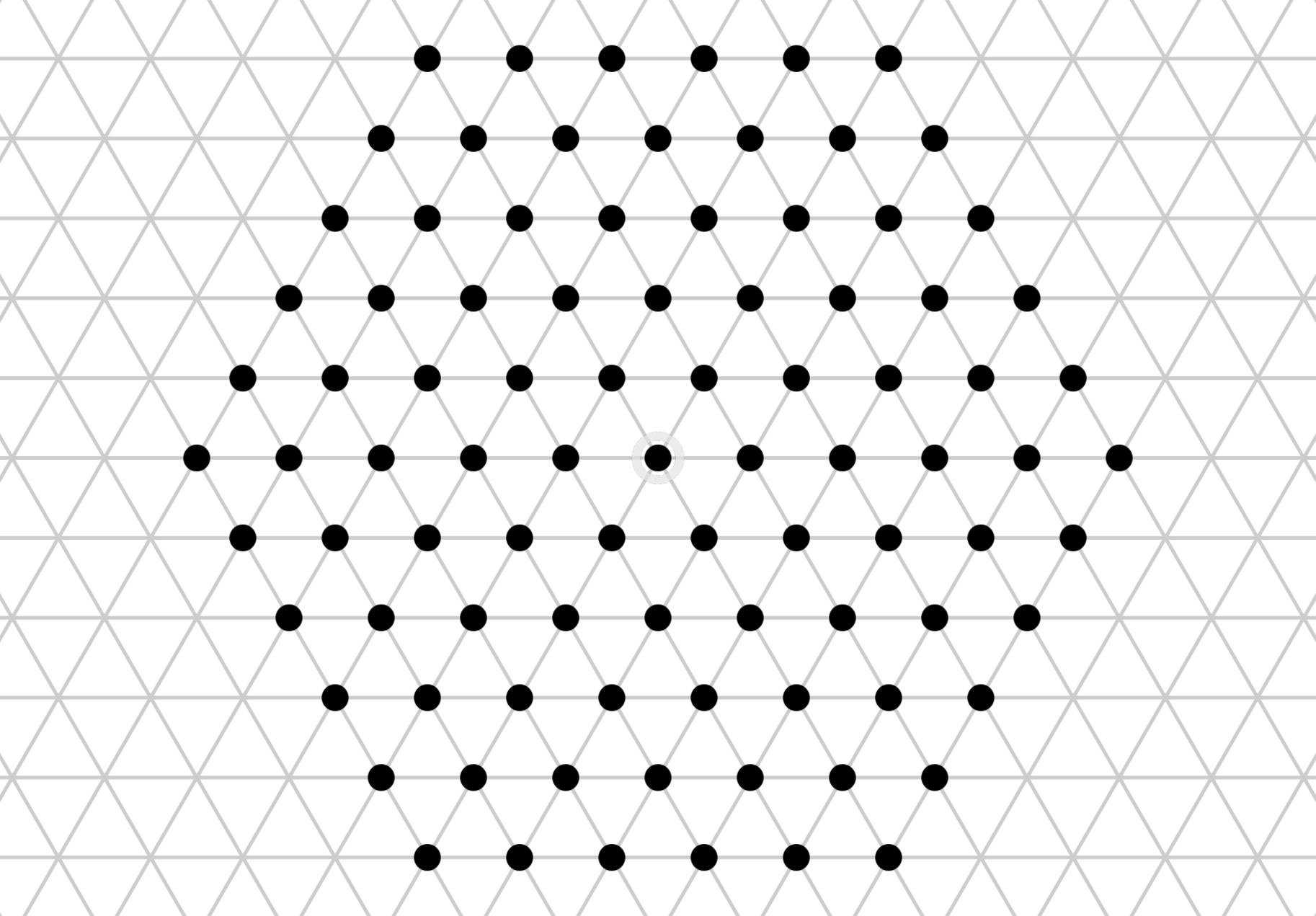}
        \caption{\centering $t = 0$ async.\ rounds}
        \label{fig:basicsim:a}
    \end{subfigure}
    \hfill
    \begin{subfigure}{.3\textwidth}
        \centering
        \includegraphics[width=\textwidth]{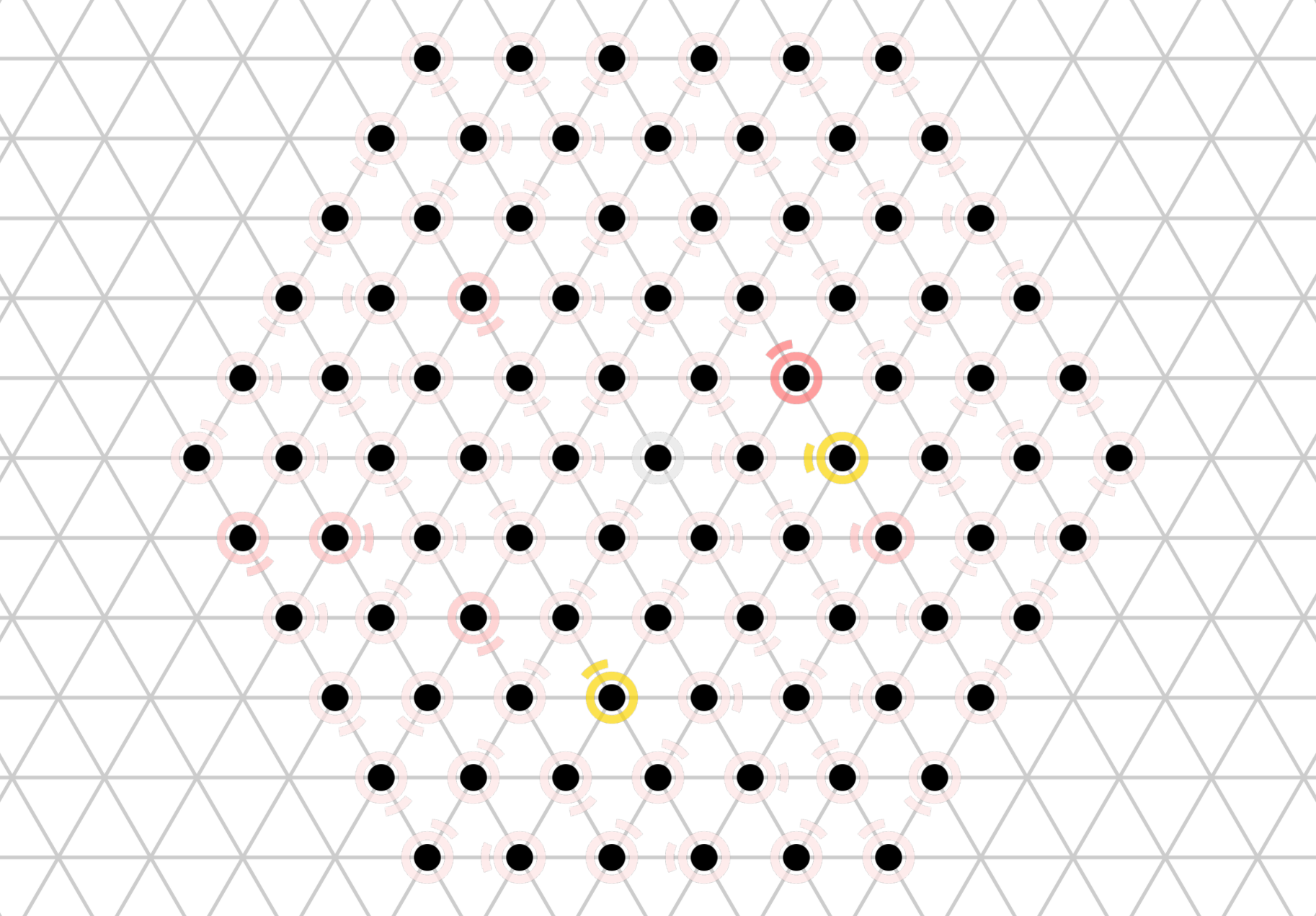}
        \caption{\centering $t = 10$}
        \label{fig:basicsim:b}
    \end{subfigure}
    \hfill
    \begin{subfigure}{.3\textwidth}
        \centering
        \includegraphics[width=\textwidth]{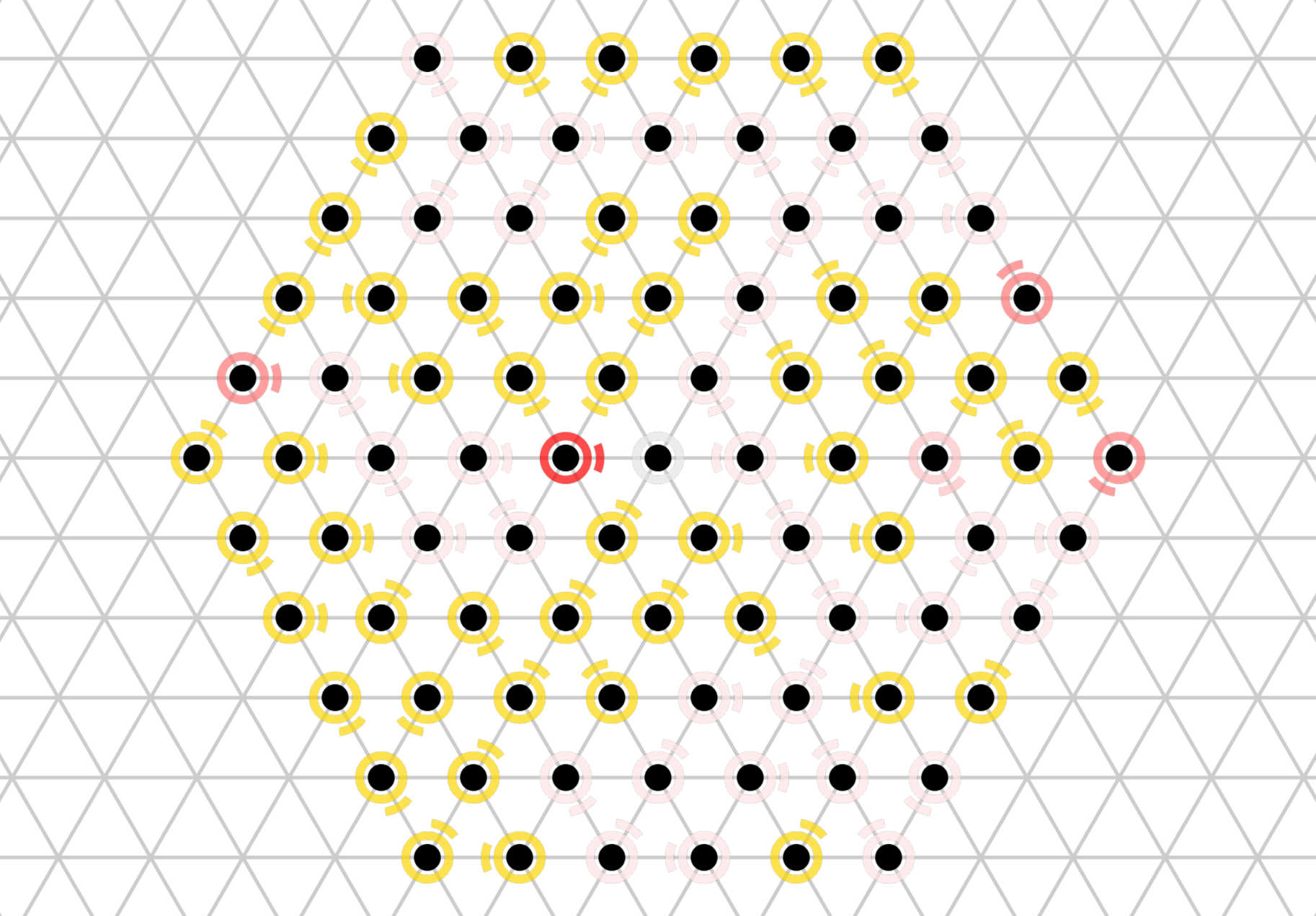}
        \caption{\centering $t = 100$}
        \label{fig:basicsim:c}
    \end{subfigure}\\ \medskip
    \begin{subfigure}{.3\textwidth}
        \centering
        \includegraphics[width=\textwidth]{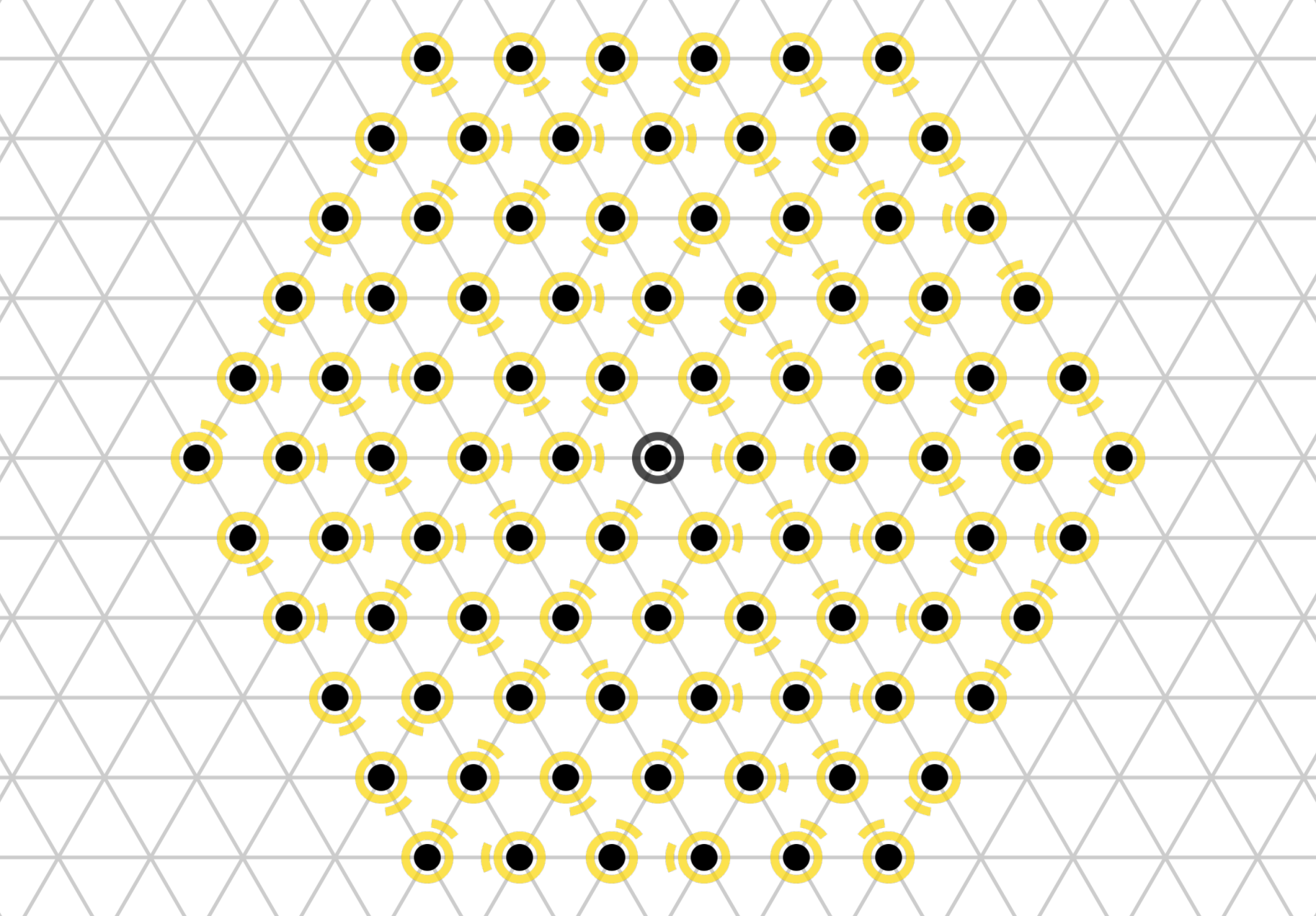}
        \caption{\centering $t = 190$}
        \label{fig:basicsim:d}
    \end{subfigure}
    \hfill
    \begin{subfigure}{.3\textwidth}
        \centering
        \includegraphics[width=\textwidth]{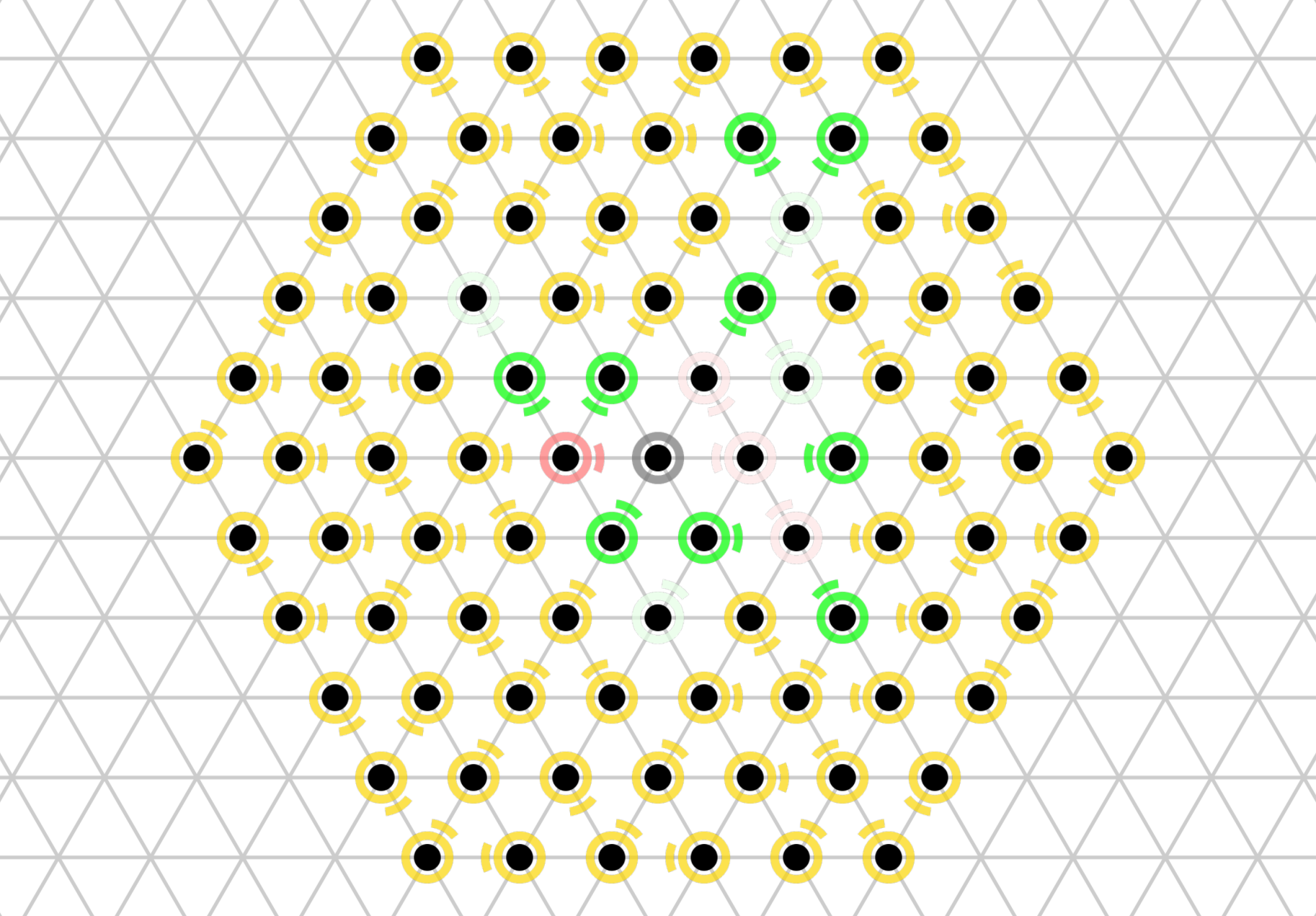}
        \caption{\centering $t = 191$}
        \label{fig:basicsim:e}
    \end{subfigure}
    \hfill
    \begin{subfigure}{.3\textwidth}
        \centering
        \includegraphics[width=\textwidth]{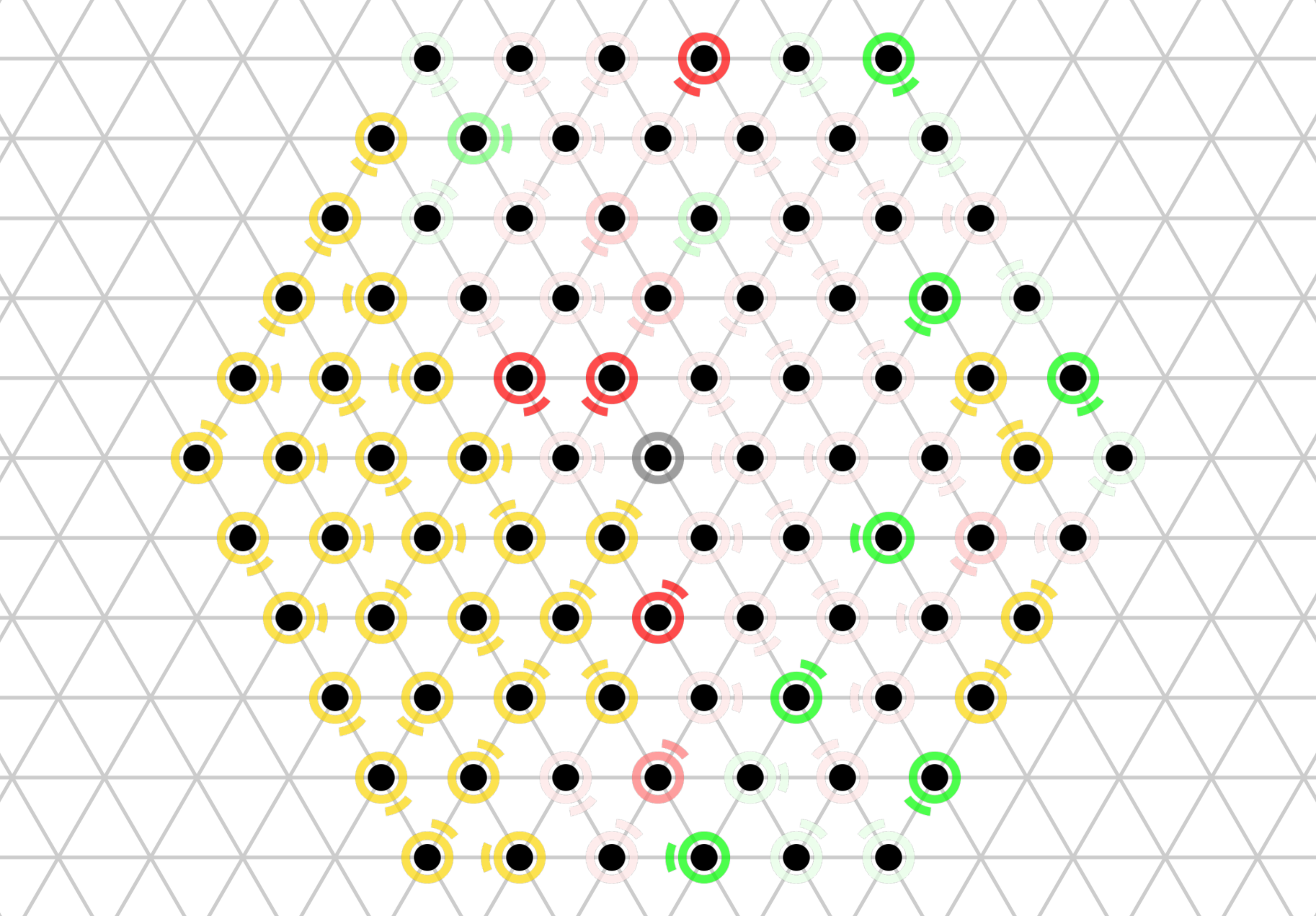}
        \caption{\centering $t = 192$}
        \label{fig:basicsim:f}
    \end{subfigure}
    \caption{A simulation of \energyAlg\ on 91 particles with one root, $\capacity = 10$, $\transferRate = 1$, and a repeating uniform demand of $\demand(\cdot, \cdot) = 5$ for all particles.
    The black particle is the root, red particles have their stress flags and possibly also their inhibit flags set, yellow particles have only their inhibit flags set, and green particles have no flags set.}
    \label{fig:basicsim}
\end{figure}

Our second simulation demonstrates the necessity of the communication phase for effective energy distribution.
In Section~\ref{sec:intro}, we motivated the need for a strategy that leverages the biofilm-inspired long-range communication of particles' energy states to shift between selfish and altruistic energy usage.
\figtext~\ref{fig:nocommsim} shows a simulation with the same initial configuration and parameters as the first simulation (\figtext~\ref{fig:basicsim}), but with its communication phase disabled.
Without the communication phase to inhibit particles from using energy while those that are stressed recharge, particles continuously share any energy they have with their descendants in the spanning forest.
Thus, while the leaves of the spanning forest occasionally meet their energy demands (bold green particles in \figtext~\ref{fig:nocommsim:b}--\ref{fig:nocommsim:d}), even after 1000 rounds most particles have still not met their energy demand even once.\footnote{The effect of disabled communication is best viewed as a video, which can be found at \url{https://sops.engineering.asu.edu/sops/energy-distribution}.}

\begin{figure}
    \centering
    \begin{subfigure}{.24\textwidth}
        \centering
        \includegraphics[width=\textwidth]{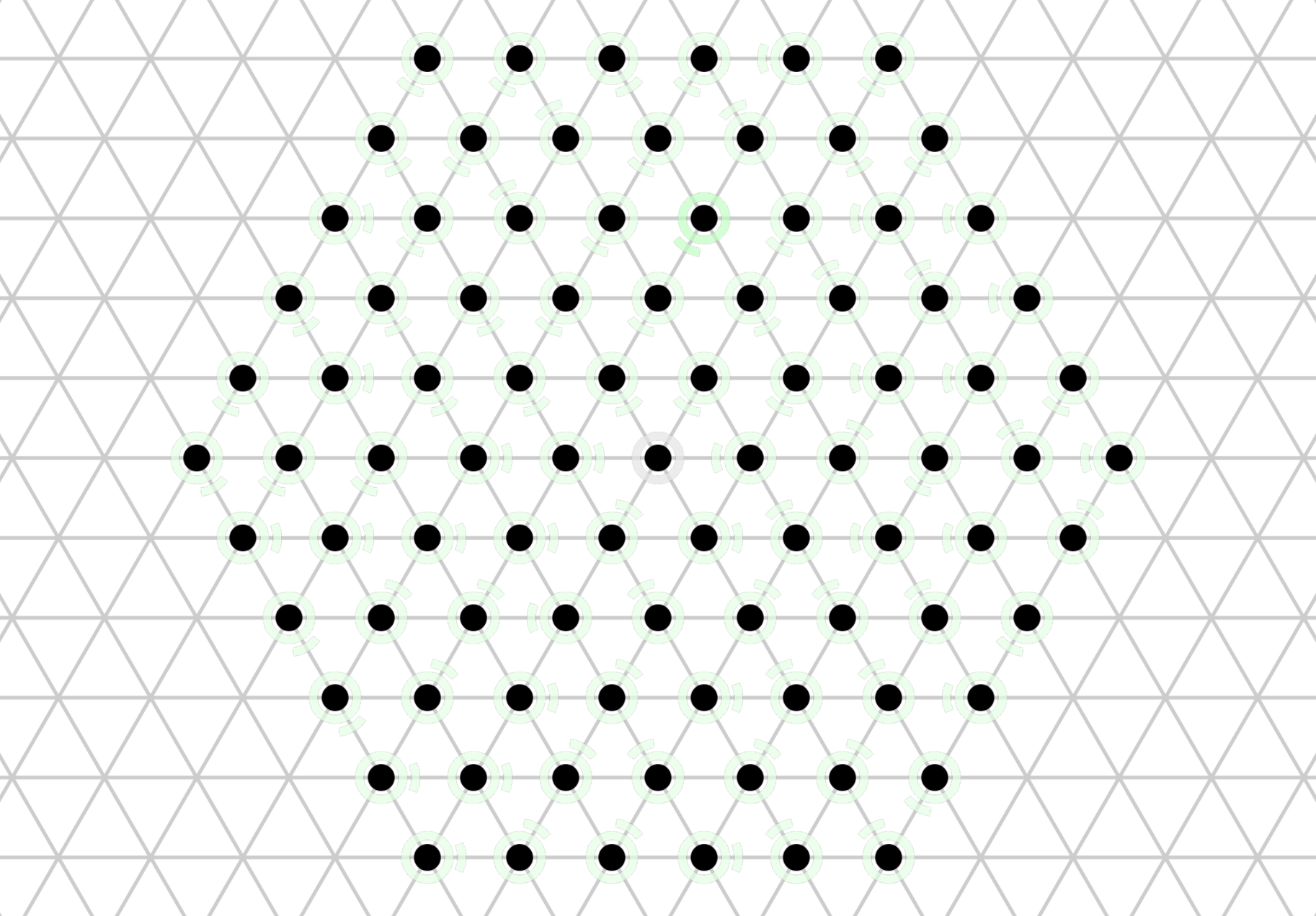}
        \caption{\centering $t = 1$ async.\ round}
        \label{fig:nocommsim:a}
    \end{subfigure}
    \hfill
    \begin{subfigure}{.24\textwidth}
        \centering
        \includegraphics[width=\textwidth]{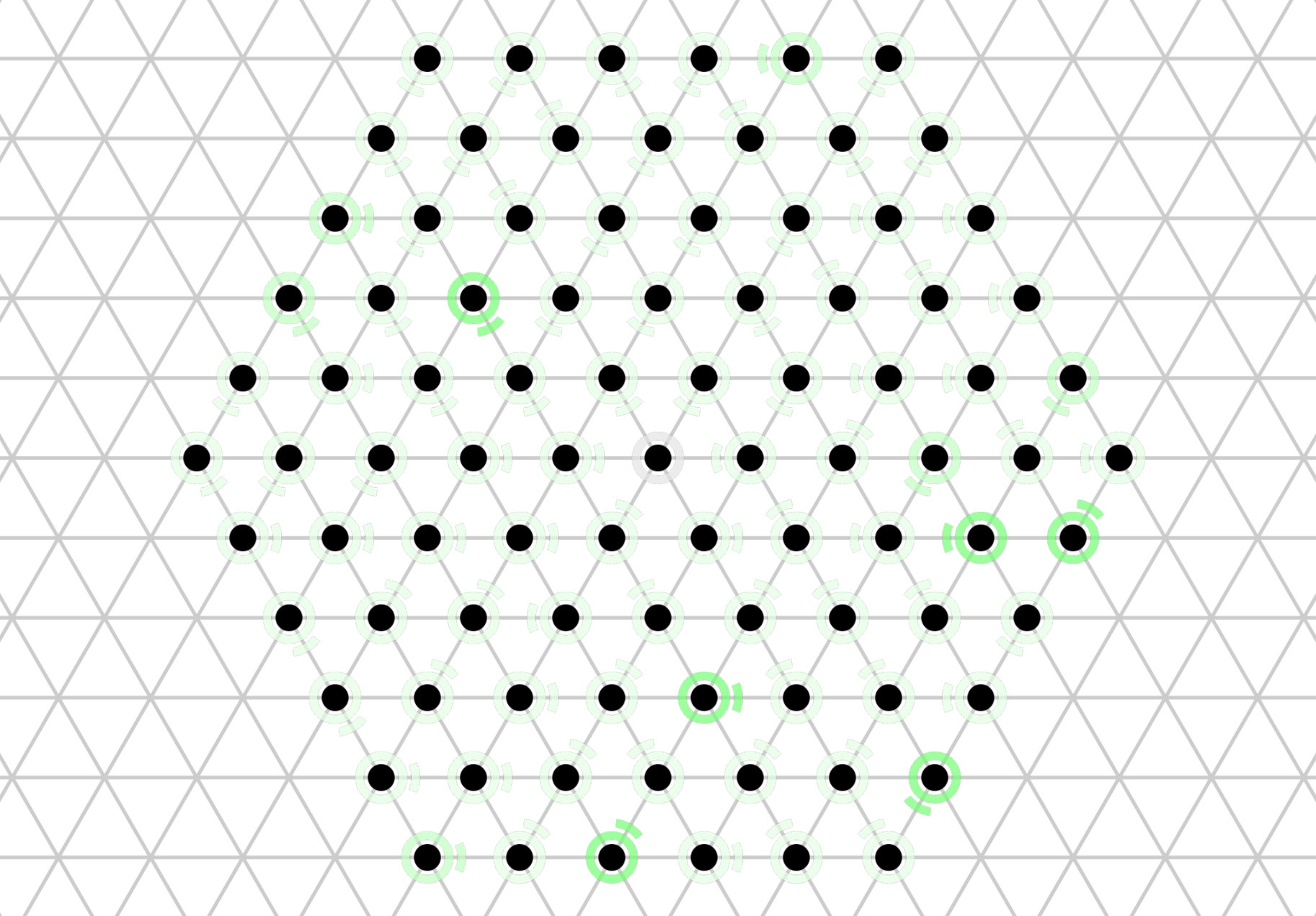}
        \caption{\centering $t = 50$}
        \label{fig:nocommsim:b}
    \end{subfigure}
    \hfill
    \begin{subfigure}{.24\textwidth}
        \centering
        \includegraphics[width=\textwidth]{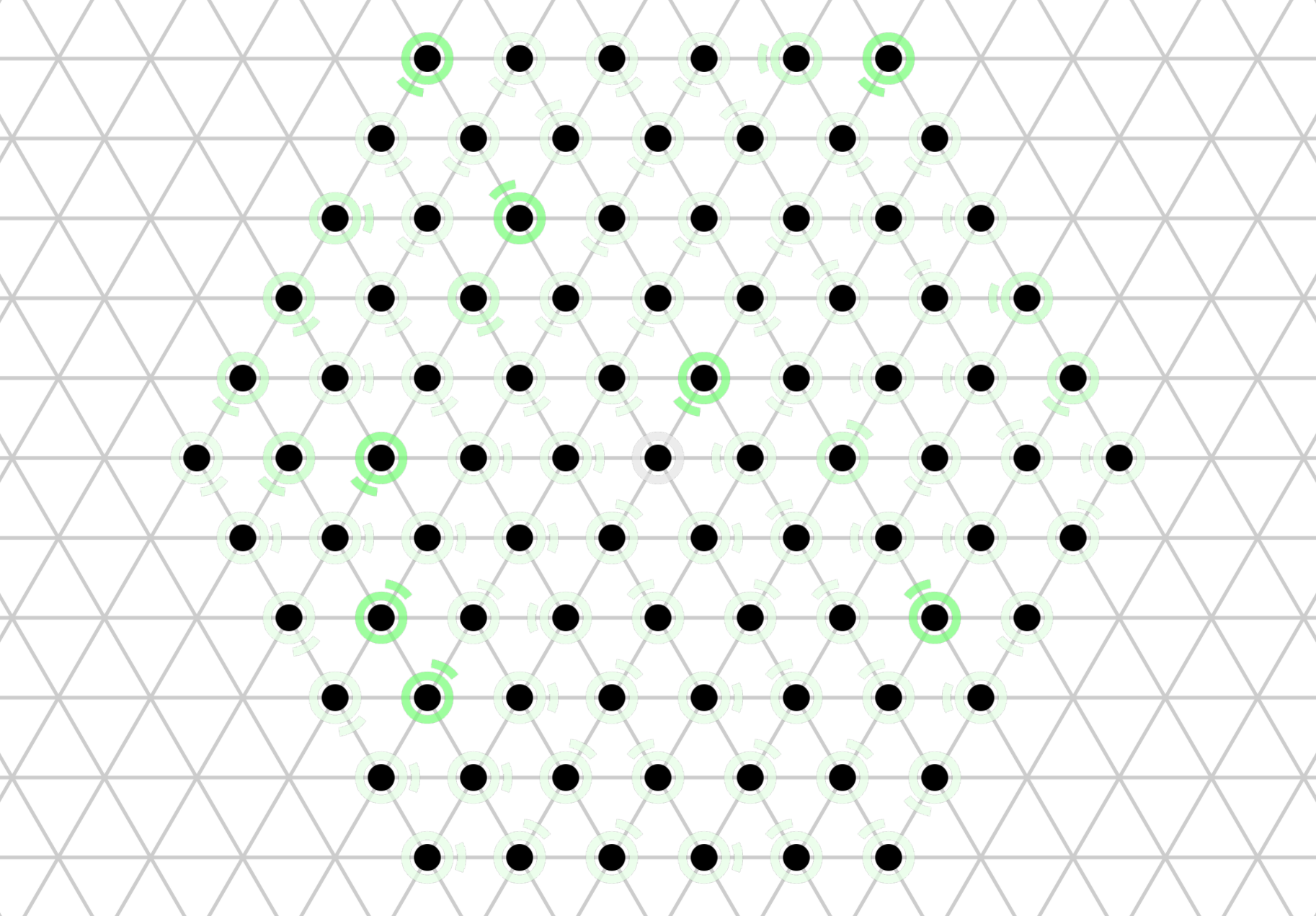}
        \caption{\centering $t = 200$}
        \label{fig:nocommsim:c}
    \end{subfigure}
    \hfill
    \begin{subfigure}{.24\textwidth}
        \centering
        \includegraphics[width=\textwidth]{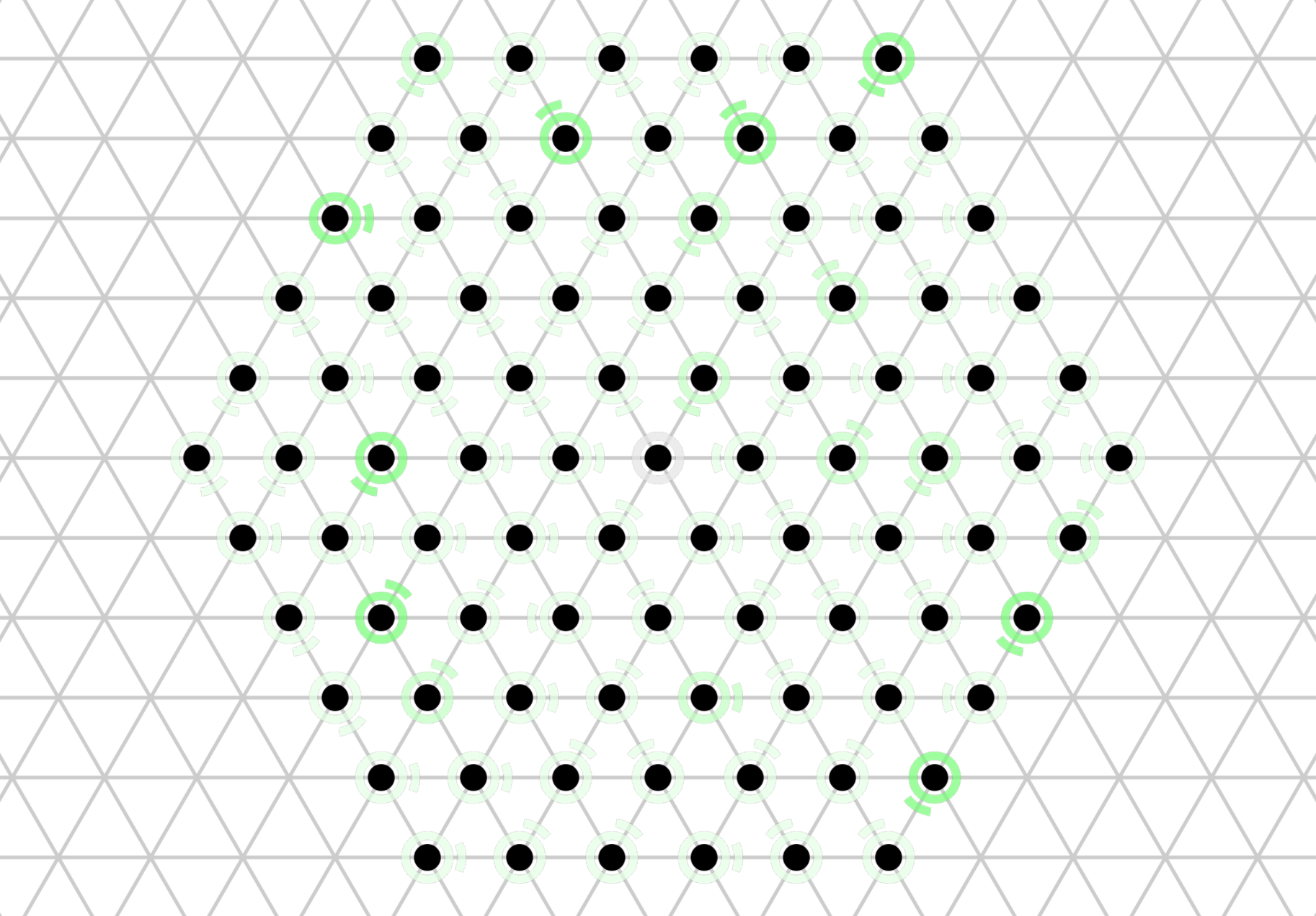}
        \caption{\centering $t = 1000$}
        \label{fig:nocommsim:d}
    \end{subfigure}
    \caption{A simulation of \energyAlg\ with the same initial configuration and parameters as in \figtext~\ref{fig:basicsim}, but with the communication phase disabled.
    Without communication to set stress and inhibit flags, all particles remain uninhibited (green), but only the leaves of the spanning forest ever amass enough energy to meet their demands.}
    \label{fig:nocommsim}
\end{figure}

\section{Extensions} \label{sec:extensions}

With our energy distribution algorithm in place, we now present useful extensions.
We begin by considering particle \textit{crash failures} in which a particle stops functioning and no longer participates in the collective behavior.
Crash failures pose a key challenge for \energyAlg: they disrupt the structure of the spanning forest $\forest$ that the particles use for routing energy and communicating their energy states.
To achieve robustness to these crash failures, we present algorithm \forestRepairAlg\ that enables the spanning forest to self-repair so long as certain assumptions on the locations of faulty particles hold (Sections~\ref{subsec:forestrepairalg}--\ref{subsec:pruneanalysis}).
We then show how \forestRepairAlg\ can be leveraged to compose \energyAlg\ with existing algorithms in the amoebot catalogue, effectively generalizing all previous work on the amoebot model to also consider energy constraints (Section~\ref{subsec:algcomposition}).

We make three assumptions about crashed particles.
First, the neighbors of a crashed particle can detect that it is crashed.
Second, the subgraph induced by the positions of non-crashed particles must remain connected at all times; otherwise, there may be no way for components of non-crashed particles to communicate.
Third, there must always be at least one non-crashed root particle; otherwise, the system would lose access to all external energy sources.
We do not claim that these \textit{detection}, \textit{connectivity}, and \textit{root-reliability} assumptions are necessary for fault tolerance, but each addresses a non-trivial challenge that is beyond the scope of this work.

\subsection{\texorpdfstring{The \forestRepairAlg\ Algorithm}{The Forest-Prune-Repair Algorithm}} \label{subsec:forestrepairalg}

In the context of our energy distribution algorithm, crash failures partition the spanning forest $\forest$ into ``non-faulty'' trees $\forest^*$ that are rooted at particles with energy access and ``faulty'' trees $\forest'$ that are disconnected from any external energy source.
Together, $\forest^* \cup \forest'$ form a forest that spans all non-crashed particles.
To make our algorithms robust to these faults, we present \forestRepairAlg\ (Algorithm~\ref{alg:forestrepair}), a local, ad hoc reconstruction that self-repairs $\forest$ to reform a spanning forest of trees rooted at particles with energy access.

Algorithm \forestRepairAlg\ works as follows.
When a particle $P$ finds that its parent has crashed, it knows it has become the root of a faulty tree in $\forest'$.
In response, $P$ broadcasts a ``prune signal'' throughout this new tree by setting a \textit{prune flag} in each of its children's memories, informing its descendants of the crash failure.
It then clears its parent pointer, resets all flags, and becomes idle.
Any particle that has its prune flag set does the same, effectively dissolving the faulty tree.
Idle particles then rejoin an existing tree in a manner similar to the setup phase described in Section~\ref{subsec:algenergy}.
When activated, an idle particle $P$ considers all its root or active neighbors that do not have their prune flag set.
Of these particles, $P$ chooses one to be its parent in a round-robin manner; i.e., if $P$ is ever pruned again, it chooses the next such particle to be its parent.

Integrating \forestRepairAlg\ with \energyAlg\ is straightforward.
In the setting where the system is subject to crash faults, \forestRepairAlg\ simply replaces the setup phase described in Section~\ref{subsec:algenergy}.
A particle proceeds with the communication, sharing, and usage phases of \energyAlg\ if it is not idle and its parent is not crashed.

\subsection{Analysis} \label{subsec:pruneanalysis}

We now analyze \forestRepairAlg, beginning with a simple proof of safety that shows \forestRepairAlg\ always preserves certain properties of the non-faulty trees in $\forest^*$.

\begin{lemma} \label{lem:acyclictree}
If a non-faulty tree $\tree \in \forest^*$ is initially acyclic, then under \forestRepairAlg\ it will remain acyclic.
Moreover, there will always be at least one tree in $\forest^*$.
\end{lemma}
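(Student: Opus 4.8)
The plan is to prove the two claims of Lemma~\ref{lem:acyclictree} separately, and to observe that both are essentially invariance arguments: nothing that \forestRepairAlg\ does can break acyclicity of a non-faulty tree, and nothing can eliminate the last non-faulty tree. First I would isolate exactly which actions of \forestRepairAlg\ can modify the edge set of a tree $\tree \in \forest^*$. The only such action is an idle particle $P$ setting a new parent pointer. A particle in a tree of $\forest^*$ only becomes idle if it was pruned, i.e.\ if its prune flag was set, which happens only along faulty trees descended from a crashed particle; so particles that remain in $\tree$ throughout never become idle and never change their parent pointers. Hence the edges internal to $\tree$ are never rewritten. The only way $\tree$ can grow is by an idle particle $P$ (formerly in some faulty tree, or formerly idle from the start) choosing a root/active neighbor $Q \in \tree$ with $Q$'s prune flag unset as its parent, adding the edge $\{P,Q\}$.

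For the acyclicity claim, I would argue that adding such an edge $\{P,Q\}$ cannot create a cycle. At the moment $P$ performs this step, $P$ is idle: it has no parent pointer and therefore no children pointing to it that are part of a directed path back down to $P$ through $\tree$ — more carefully, $P$ is not in $\tree$ yet, so $Q$ is not a descendant of $P$ in $\tree$, and attaching $P$ above $Q$ (with $P$ as $Q$'s new parent… wait, $Q$ is $P$'s parent) appends $P$ as a leaf below $Q$. The subtlety is that several pruned particles could be rejoining concurrently and a pruned particle could pick another pruned particle as parent; but by the serialization assumption it suffices to consider one activation at a time, and an idle particle only attaches to a \emph{root or active} neighbor, never to another idle one. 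So at the instant the edge is added, the new parent already lies in some existing acyclic tree of $\forest^* \cup \forest'$ and $P$ is a fresh leaf; appending a leaf to an acyclic graph keeps it acyclic. Iterating this over the (serialized) sequence of rejoin activations preserves acyclicity of every tree, and in particular of every $\tree \in \forest^*$.

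For the second claim — that $\forest^*$ is never empty — I would invoke the root-reliability assumption stated in Section~\ref{sec:extensions}: there is always at least one non-crashed root particle $R$ with access to an external energy source. A root never has a parent, so it is never pruned and never becomes idle; it always heads a (possibly trivial, single-particle) tree, and that tree is by definition in $\forest^*$ since it is rooted at a particle with energy access. Thus $|\forest^*| \geq 1$ at all times. The main obstacle, such as it is, is being careful about concurrency in the acyclicity argument: I must confirm that the ``round-robin choice among root/active, non-pruned neighbors'' together with serialization genuinely forbids the pathological case of a directed cycle forming among simultaneously-rejoining pruned particles. Once we commit to analyzing the serialized activation order (justified earlier in Section~\ref{subsec:model}) and use the fact that an idle particle only ever attaches below an already-active or root particle, this case evaporates and the rest is a routine leaf-addition invariant.
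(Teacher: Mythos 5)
Your proposal is correct and takes essentially the same route as the paper's proof: the only structural changes are vertex removals during pruning and the attachment of an idle particle as a new leaf below a root/active neighbor, neither of which can introduce a cycle into a tree of $\forest^*$, and the second claim follows immediately from the root-reliability assumption since a never-pruned, non-crashed root always heads a tree of $\forest^*$. One small caution: your side remark that the chosen parent always lies in an \emph{acyclic} tree of $\forest^* \cup \forest'$ overreaches, since the chase cycles of \figtext~\ref{fig:chasecycle} show that faulty components can temporarily contain pointer cycles; this does not affect the lemma, which concerns only $\forest^*$, where the chosen parent is never a descendant of the rejoining particle.
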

\begin{proof}
By the root-reliability assumption, there is always at least one non-crashed root particle; thus, $\forest^*$ can never be empty.
The operations of \forestRepairAlg\ that change the structure of forest $\forest$ are the removal of particles from their trees during pruning and the addition of idle particles to new trees during rejoining.
It is easy to see that removing particles from any tree during pruning cannot create cycles where there were none before.
A particle only rejoins a tree if it is idle, implying that it has no children.
So an idle particle rejoining a tree $\tree \in \forest^*$ is like adding a new leaf vertex to $\tree$, which cannot create a cycle because $\tree$ was initially acyclic.
\end{proof}

\begin{lemma} \label{lem:allprune}
Suppose a particle crashes, yielding a new faulty tree $\tree \in \forest'$.
For any particle $P$ at depth $d$ in tree $\tree$, $P$ will be pruned (i.e., set its children's prune flags, clear its memory, and become idle) in at most $d$ asynchronous rounds.
\end{lemma}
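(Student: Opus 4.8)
The plan is to argue by induction on $d$, the depth of $P$ in the newly created faulty tree $\tree \in \forest'$, tracking how the prune signal cascades from the root of $\tree$ (the particle whose parent crashed) down to its leaves. The one structural fact I would isolate first is a \emph{persistence} property: under \forestRepairAlg, the only write to a particle's prune flag is the act of a parent setting its children's prune flags while pruning itself, and the only reset of a prune flag happens as part of that same pruning step (``resets all flags''). Hence, once a non-idle particle's prune flag is set, it stays set until that particle is pruned. I would also note that $P$ cannot ``escape'' before being pruned: rejoining a non-faulty tree is attempted only by idle particles, and a particle at depth $d \ge 1$ in $\tree$ is not idle until it has been pruned. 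Beyond these, the proof uses only the detection assumption (a crash is visible to the crashed particle's neighbors) and the definition of an asynchronous round (every particle is activated at least once per round).

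For the base case $d = 1$: the particle $P$ at the root of $\tree$ is exactly the particle whose parent crashed. Within one asynchronous round $P$ is activated, and by the detection assumption it observes that its parent has crashed; in that single activation it sets the prune flag in each child's memory, clears its parent pointer, resets its flags, and becomes idle. So $P$ is pruned within one asynchronous round.

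For the inductive step, assume every particle at depth $d - 1$ in $\tree$ is pruned within $d - 1$ rounds, and let $P$ be at depth $d$ with parent $P'$ at depth $d - 1$. By the induction hypothesis $P'$ is pruned by the end of round $d - 1$, and as part of that pruning it sets $P$'s prune flag no later than the end of round $d - 1$. During round $d$, $P$ is activated at least once; by the persistence property its prune flag is still set at that activation, so $P$ executes the pruning step and becomes idle. Hence $P$ is pruned within $d$ rounds, completing the induction.

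The one delicate point — and the reason the bound grows by exactly one round per level, rather than the whole tree being pruned in $\bigO{1}$ rounds — is the asynchrony: although $P'$ writes $P$'s prune flag during round $d - 1$, $P$ may already have been activated earlier in that round, so we can only guarantee $P$ reacts during round $d$. Making this ``one extra round per level of the cascade'' argument precise, together with verifying the persistence property (i.e., that no other part of \forestRepairAlg\ — neither pruning elsewhere in the tree nor rejoining — clears $P$'s prune flag before $P$ acts on it), is essentially the whole content of the proof; the remaining bookkeeping is routine.
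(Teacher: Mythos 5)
Your proof is correct and follows essentially the same route as the paper's: induction on depth, with the root of the faulty tree pruned within one round via the detection assumption, and each subsequent level pruned one round later because the parent sets the child's prune flag by the end of round $d-1$ and the child is guaranteed an activation in round $d$. Your explicit isolation of the persistence property of the prune flag and the observation that a non-idle particle cannot rejoin before being pruned are left implicit in the paper but are accurate readings of Algorithm~\ref{alg:forestrepair}.
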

\begin{proof}
Suppose a particle crashes in round $r$, yielding a new faulty tree $\tree \in \forest'$.
Let $P$ be any particle at depth $d$ in $\tree$.
If $d = 1$, then $P$ is the root of $\tree$.
Since every particle is activated at least once per asynchronous round, $P$ will activate, see its parent is crashed, and prune itself by the end of round $r + 1$.
Now suppose $d > 1$ and that every particle at depth at most $d - 1$ in $\tree$ has been pruned by the end of round $r + d - 1$.
If $P$ has already been pruned (as is possible due to the asynchronous activation order), we are done.
So suppose $P$ has not yet been pruned at the start of round $r + d$.
The parent of $P$ was at depth $d - 1$, and thus must have set the prune flag of $P$ and become pruned by the end of the previous round.
So whenever $P$ is activated in round $r + d$, it sees its prune flag is set and is pruned.
Thus, in all cases, $P$ is pruned in at most $d$ rounds.
\end{proof}

Under \forestRepairAlg, a pruned particle $P$ chooses its new parent $Q$ from among its root or active neighbors that do not have their prune flags set.
There are two cases: (1) $Q$ is in a non-faulty tree, meaning $P$ has rejoined $\forest^*$ as desired, or (2) $Q$ is in a faulty tree, say $\tree \in \forest'$.
In the latter case, there must be prune flags propagating throughout $\tree$ because $\tree \in \forest'$, so Lemma~\ref{lem:allprune} shows $P$ will now be pruned again, this time from $\tree$.

An especially bad version of this case would occur if a particle continually rejoined the tree it is pruning by choosing one of its descendants as its new parent (see \figtext~\ref{fig:chasecycle}).
In fact, if this choice is not made carefully, it is possible that such a particle would always choose a descendant as its parent and thus never rejoin $\forest^*$.
We refer to this situation as a \textit{chase cycle} due to the way the prune flag propagation ``chases'' the rejoining particles.
However, since particles choose their new parents from among their eligible neighbors in a round-robin manner, chase cycles cannot continue for long.
We have the following lemma.

\begin{figure}[tbh]
\centering
\begin{subfigure}{.23\textwidth}
	\centering
	\includegraphics[width=\textwidth]{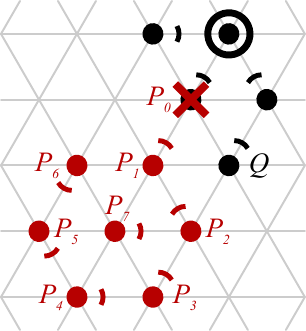}
	\caption{\centering}
	\label{fig:chasecycle:a}
\end{subfigure}
\hfill
\begin{subfigure}{.23\textwidth}
	\centering
	\includegraphics[width=\textwidth]{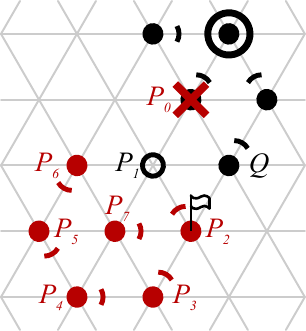}
	\caption{\centering}
	\label{fig:chasecycle:b}
\end{subfigure}
\hfill
\begin{subfigure}{.23\textwidth}
	\centering
	\includegraphics[width=\textwidth]{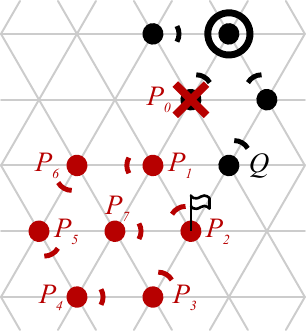}
	\caption{\centering}
	\label{fig:chasecycle:c}
\end{subfigure}
\hfill
\begin{subfigure}{.23\textwidth}
	\centering
	\includegraphics[width=\textwidth]{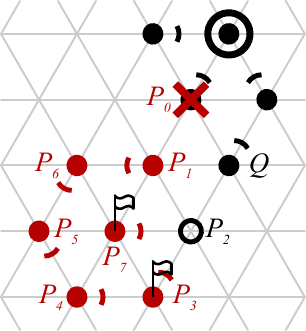}
	\caption{\centering}
	\label{fig:chasecycle:d}
\end{subfigure}\\ \medskip
\begin{subfigure}{.23\textwidth}
	\centering
	\includegraphics[width=\textwidth]{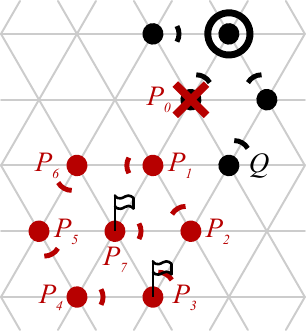}
	\caption{\centering}
	\label{fig:chasecycle:e}
\end{subfigure}
\hfill
\begin{subfigure}{.23\textwidth}
	\centering
	\includegraphics[width=\textwidth]{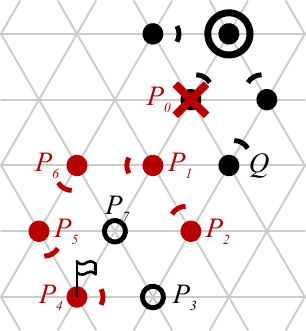}
	\caption{\centering}
	\label{fig:chasecycle:f}
\end{subfigure}
\hfill
\begin{subfigure}{.23\textwidth}
	\centering
	\includegraphics[width=\textwidth]{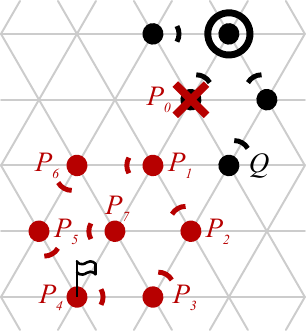}
	\caption{\centering}
	\label{fig:chasecycle:g}
\end{subfigure}
\hfill
\begin{subfigure}{.23\textwidth}
	\centering
	\includegraphics[width=\textwidth]{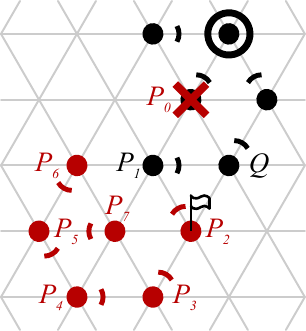}
	\caption{\centering}
	\label{fig:chasecycle:h}
\end{subfigure}
\caption{An illustration of a chase cycle in \forestRepairAlg.
(a) $P_0$ crashes, removing the non-crashed particles in red from their non-faulty tree (in black) and disconnecting them from the root particle with access to external energy (shown with a black ring).
(b) $P_1$ sees that its parent is crashed and prunes itself (black circle), setting its child's prune flag.
(c) $P_1$ then chooses one of its descendants $P_6$ as its new parent, creating a chase cycle.
(d)--(g) $P_2$ and $P_3$ do the same, continuing the chase cycle.
(h) Later, $P_1$ chooses $Q$ as its parent, rejoining $\forest^*$ and breaking the chase cycle.}
\label{fig:chasecycle}
\end{figure}

\begin{lemma} \label{lem:chasecycles}
Suppose a particle $P$ in faulty tree $\tree \in \forest'$ has at least one neighbor in a non-faulty tree of $\forest^*$.
Then $P$ can be pruned at most 6 times before it rejoins $\forest^*$.
\end{lemma}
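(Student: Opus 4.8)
The plan is to turn the round-robin parent selection into a counting argument, using that every node of $\Gtri$ has at most six neighbors. Fix a neighbor $N$ of $P$ lying in a non-faulty tree $\tree_N \in \forest^*$. The first thing I would establish is an \emph{eligibility invariant}: $N$ is, at every point in time, a legal parent choice for $P$. This rests on two observations — that a particle in a non-faulty tree is always active (or a root) and never becomes idle, and that prune flags are only ever set within faulty trees (they originate at the new root of a freshly faulty tree and propagate strictly downward inside that tree), so $N$ never carries a prune flag. Consequently, whenever $P$ becomes idle and selects a new parent, $N$ is among the candidates $P$ scans.

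Next I would follow $P$ through successive prunes. Each time $P$ is pruned it becomes idle, and on its next activation it adopts as its parent the next root-or-active, prune-flag-free neighbor in its fixed cyclic order, advancing its round-robin pointer past that neighbor. If $P$ ever selects a neighbor in a non-faulty tree then it has rejoined $\forest^*$ and we are finished; so suppose $P$ selects a neighbor $Q$ in some faulty tree $\tree' \in \forest'$. Because $Q$ was eligible it did not yet carry a prune flag, hence had not yet been pruned and still lay below a particle destined to prune; as $P$ is now a leaf-child of $Q$, the prune wave sweeping $\tree'$ reaches $Q$ and then $P$ within finitely many rounds by Lemma~\ref{lem:allprune}, so $P$ is pruned again and its pointer advances by another position. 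The only alternative — that $\tree'$ rejoins $\forest^*$ wholesale before the wave reaches $P$ — likewise carries $P$ into $\forest^*$ and finishes the argument.

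Putting these pieces together: between consecutive prunes of $P$ the round-robin pointer moves forward by at least one of $P$'s at most six neighbor positions, and the position occupied by $N$ is always a valid place for the pointer to stop, so within at most six prunes the pointer must land on $N$ (or on another non-faulty neighbor), at which point $P$ rejoins $\forest^*$; the constant $6$ is exactly the degree bound of $\Gtri$. The step I expect to be the main obstacle is the bookkeeping of the middle paragraph — verifying that \emph{every} selection of a parent from a faulty tree genuinely forces another prune, and hence a pointer advance, so that $P$ cannot stall indefinitely attached to a faulty tree, and pinning down the off-by-one so that the bound is $6$ rather than $5$ or $7$. The eligibility invariant for $N$ together with the monotone, never-repeating advancement of the round-robin pointer are what make the count close.
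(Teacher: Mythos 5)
Your proposal is correct and follows essentially the same argument as the paper's proof: neighbors in $\forest^*$ are always eligible parents (active or root, never prune-flagged), each choice of a parent in $\forest'$ forces another prune via Lemma~\ref{lem:allprune}, and the round-robin selection over at most $6$ neighbors bounds the number of prunes by $6$. Your extra care about the eligibility invariant and about why attaching to a faulty-tree neighbor necessarily triggers another prune only makes explicit what the paper's proof leaves implicit.
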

\begin{proof}
Each time $P$ is pruned, it chooses a new parent from among its active or root neighbors that do not have their prune flags set.
By supposition, $P$ has at least one such neighbor in a tree of $\forest^*$.
Moreover, its neighbor(s) in $\forest^*$ will always be in the set of eligible new parents since every particle in a non-faulty tree is either a root or is active and is never pruned.
By Lemma~\ref{lem:allprune}, $P$ will be pruned again each time it chooses a parent in a faulty tree of $\forest'$.
In a round-robin selection, $P$ can choose each neighbor in $\forest'$ as its parent at most once before choosing a parent in $\forest^*$, as desired.
Every particle has at most $6$ neighbors, so in the worst case the number of times $P$ will be pruned before it rejoins $\forest^*$ is $6$.
\end{proof}

We conclude by bounding the stabilization time of \forestRepairAlg, which captures the time required for all particles to rejoin non-faulty trees starting from the time of the last crash failure.
We note that our bound does not directly depend on the number of crash failures $f$, but rather on the number of non-crashed particles $m$ removed from non-faulty trees as a result of the crash failures.

\begin{theorem} \label{thm:forestrepair}
Suppose $f < \numParticles$ particles crash (where $\numParticles$ is the number of particles in $\system$), yielding faulty trees $\forest'$.
If no other particles crash, all $m = |\forest'|$ non-crashed particles rejoin $\forest^*$ in $\bigO{m^2}$ rounds in the worst case.
\end{theorem}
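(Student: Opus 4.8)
The plan is to track the "wavefront" of the pruning process and bound how far it can travel. After the last crash, we have faulty trees $\forest'$ containing $m$ non-crashed particles in total. By Lemma~\ref{lem:allprune}, once a faulty tree is created, every particle in it is pruned within a number of rounds equal to its depth, which is at most $m$. The subtlety, as the excerpt already flags, is that a pruned particle may rejoin a \emph{faulty} tree (chase cycles), so a single ``pruning wave'' does not finish the job. The key structural fact to exploit is Lemma~\ref{lem:chasecycles}: a particle that has a neighbor in $\forest^*$ can be pruned at most $6$ times before it permanently rejoins $\forest^*$. So the strategy is to show that, over time, the ``frontier'' of $\forest^*$ (the set of faulty particles adjacent to a non-faulty tree) grows, and each particle on the frontier is permanently absorbed into $\forest^*$ within $\bigO{m}$ additional rounds.

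First I would argue a progress lemma: at any point in time while faulty particles remain, the connectivity assumption guarantees that at least one non-crashed faulty particle is adjacent to a particle in $\forest^*$ (since the non-crashed subgraph is connected and $\forest^*$ is nonempty by Lemma~\ref{lem:acyclictree}). Call such a particle a frontier particle. By Lemma~\ref{lem:chasecycles}, once a particle becomes and remains a frontier particle, it is pruned at most $6$ times, and each pruning-and-rejoin cycle takes at most $\bigO{m}$ rounds: the prune signal reaches it within the depth of its current faulty tree ($\le m$ rounds by Lemma~\ref{lem:allprune}), and then when next activated it rejoins. Hence each frontier particle is permanently absorbed into $\forest^*$ within $\bigO{m}$ rounds of becoming a frontier particle. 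Note that once a particle is in $\forest^*$ it stays there (no further crashes), so $\forest^*$ only grows; a particle adjacent to it therefore remains a frontier particle until it is itself absorbed.

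The second step is to chain these absorptions. Order the $m$ faulty particles $x_1, x_2, \ldots, x_m$ by the round in which each is permanently absorbed into $\forest^*$. I would show by induction that $x_j$ is absorbed by round $\bigO{jm}$: the particle $x_{j+1}$ need not be a frontier particle at the start, but after $x_1, \ldots, x_j$ have all been absorbed, connectivity forces at least one still-faulty particle to be on the frontier, and a chain argument (or simply: \emph{some} unabsorbed particle is on the frontier, so $x_{j+1}$'s absorption time is within $\bigO{m}$ of $x_j$'s) gives the recurrence $T(j+1) \le T(j) + \bigO{m}$. Unrolling yields $T(m) = \bigO{m^2}$, which is the claimed bound. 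The main obstacle is making the ``frontier'' bookkeeping airtight: I must be careful that a particle can be counted as making progress only when it genuinely has a $\forest^*$-neighbor (otherwise Lemma~\ref{lem:chasecycles} does not apply), and that when the frontier particle $x_j$ is absorbed, the ``next'' particle really does inherit frontier status within a bounded delay rather than the process stalling. Resolving this cleanly — likely by always picking $x_{j+1}$ to be a faulty neighbor of a just-absorbed particle or of the original $\forest^*$, and invoking connectivity of the non-crashed subgraph to guarantee such a choice exists — is the crux; the round-counting that follows is routine given Lemmas~\ref{lem:allprune} and~\ref{lem:chasecycles}.
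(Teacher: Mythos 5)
Your proposal is correct and follows essentially the same route as the paper's proof: both use connectivity to guarantee a faulty particle adjacent to $\forest^*$ exists, bound its permanent absorption time by $\bigO{m}$ via Lemmas~\ref{lem:allprune} and~\ref{lem:chasecycles}, and then iterate over the remaining particles (your recurrence $T(j+1) \le T(j) + \bigO{m}$ is the paper's induction $\bigO{(m-1)^2} + \bigO{m} = \bigO{m^2}$ in different clothing). Your explicit observation that $\forest^*$ only grows absent further crashes, so frontier status persists, is a point the paper leaves implicit.
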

\begin{proof}
If $m = 1$, then by Lemma~\ref{lem:acyclictree} and the connectivity assumption all non-crashed neighbors of the single non-crashed particle $P \not\in \forest^*$ must be in $\forest^*$.
By Lemma~\ref{lem:allprune}, $P$ will be pruned in one round; $P$ will then choose a neighbor in $\forest^*$ as its new parent in its next activation.
So $P$ rejoins $\forest^*$ in at most $\bigO{1} = \bigO{m^2}$ rounds.

Now suppose $m > 1$.
Again by Lemma~\ref{lem:acyclictree} and the connectivity assumption, there must exist a non-crashed particle $P \in \forest'$ with a neighbor in $\forest^*$.
By Lemma~\ref{lem:allprune}, $P$ will be pruned in at most $m$ rounds since the depth of $P$ in its faulty tree can be at most the total number of particles in faulty trees.
Particle $P$ will then choose a new parent from among its eligible neighbors; if it chooses any neighbor in $\forest'$ as its new parent, it will again be pruned in at most another $m$ rounds by Lemma~\ref{lem:allprune}.
By Lemma~\ref{lem:chasecycles}, $P$ will in the worst case need to repeat this process $6$ times before choosing a neighbor in $\forest^*$ as its new parent.
Thus, $P$ rejoins $\forest^*$ in $\bigO{m}$ rounds.
This leaves $m - 1$ non-crashed particles in $\forest'$ needing to rejoin $\forest^*$.
By the induction hypothesis, these particles rejoin $\forest^*$ in $\bigO{(m-1)^2}$ rounds, so we conclude that all $m$ non-crashed particles in $\forest'$ will rejoin $\forest^*$ in $\bigO{(m-1)^2} + \bigO{m} = \bigO{m^2}$ rounds.
\end{proof}

\subsection{Algorithm Composition} \label{subsec:algcomposition}

We ultimately envision \energyAlg\ as a subprocess that is executed continuously, handling the energy demands of higher level algorithms for the system's self-organizing behaviors.
In particular, if every action of an amoebot algorithm was assigned an energy cost, \energyAlg\ must supply each particle with sufficient energy to meet these costs and perform its actions.
However, many amoebot algorithms involve particle movements that would necessarily disrupt the spanning forest $\forest$ maintained by \energyAlg\ for energy routing and communication.
Just as was the case for crash failures (Section~\ref{subsec:forestrepairalg}), this necessitates a protocol for repairing $\forest$ as particles move, disconnecting from existing neighbors and gaining new ones.

We can repurpose \forestRepairAlg\ to address moving particles with a simple modification.
In this setting, instead of a particle initiating the pruning of its subtree if it detects that its parent has crashed, it initiates the pruning of its subtree and additionally prunes itself (unless it is an energy root) whenever it moves according to the higher level algorithm.
The rest of \forestRepairAlg\ stays the same with the pruning broadcast dissolving the subtree and the resulting idle particles rejoining elsewhere.

With this modification in place, \energyAlg\ can be composed with any amoebot algorithm $\mathcal{A}$ requiring energy distribution so long as (1) the battery capacity $\capacity$ is at least as large as the demand of the most energy-intensive action in $\mathcal{A}$, and (2) $\mathcal{A}$ maintains system connectivity at all times (this is sufficient to satisfy the connectivity assumption of Section~\ref{sec:extensions} since no particles actually crash).
Note that $\mathcal{A}$ need not satisfy the root-reliability assumption; since each root is not actually crashing when it moves, the system maintains its access to external energy sources so long as it remains connected.

\begin{figure}
    \centering
    \begin{subfigure}{.3\textwidth}
        \centering
        \includegraphics[width=\textwidth]{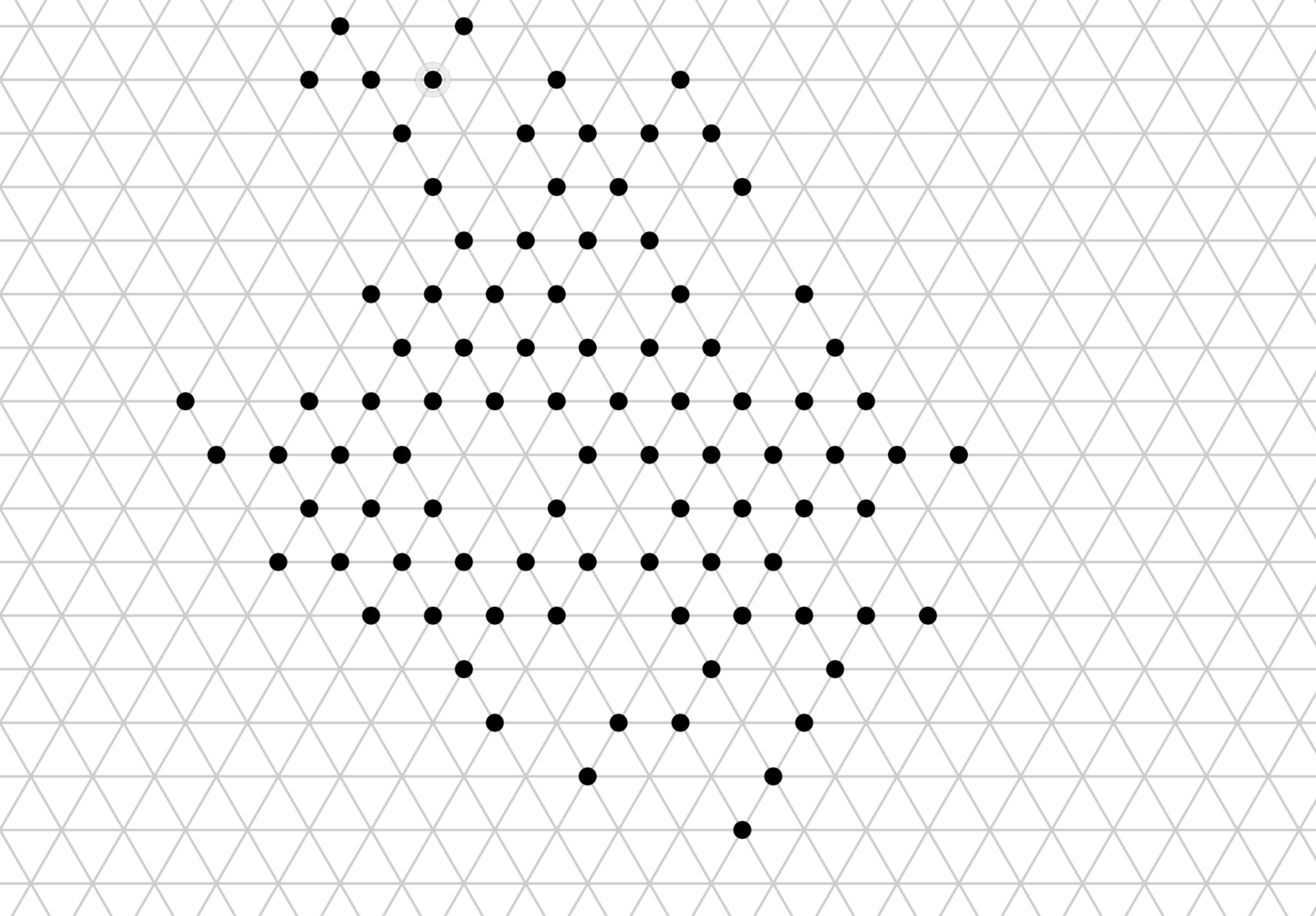}
        \caption{\centering $t = 0$ async.\ rounds}
        \label{fig:compsim:a}
    \end{subfigure}
    \hfill
    \begin{subfigure}{.3\textwidth}
        \centering
        \includegraphics[width=\textwidth]{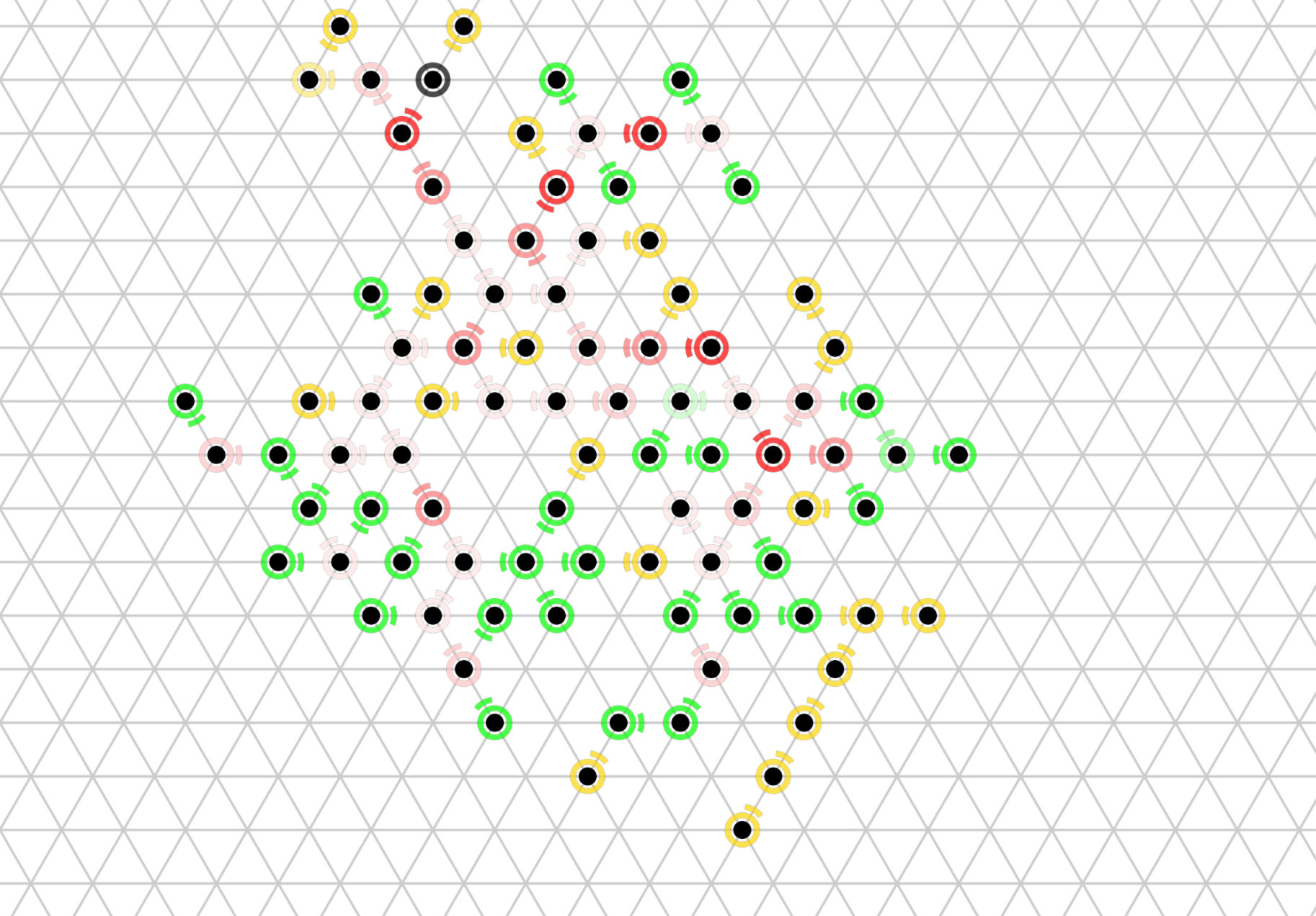}
        \caption{\centering $t = 200$}
        \label{fig:compsim:b}
    \end{subfigure}
    \hfill
    \begin{subfigure}{.3\textwidth}
        \centering
        \includegraphics[width=\textwidth]{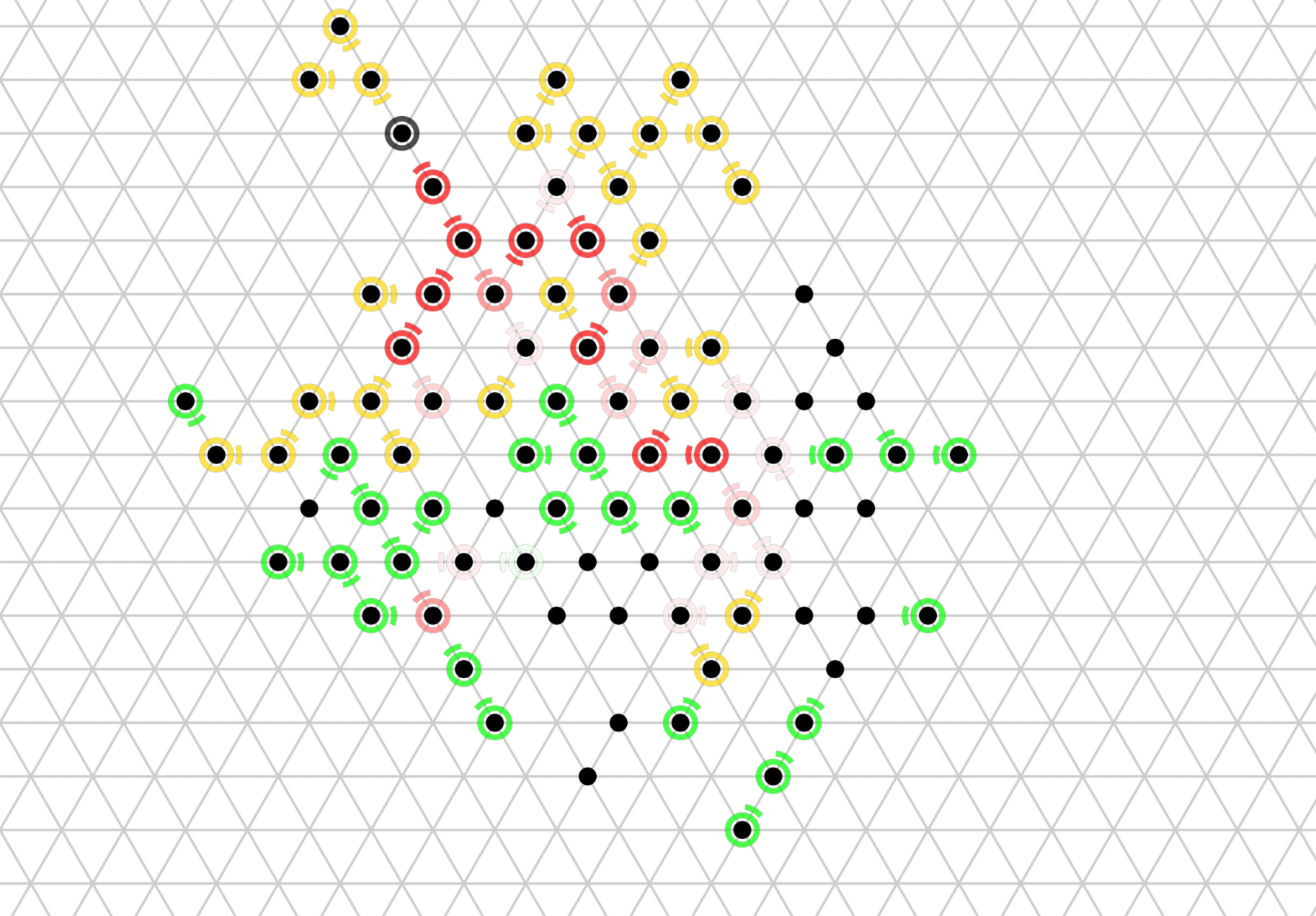}
        \caption{\centering $t = 500$}
        \label{fig:compsim:c}
    \end{subfigure}\\ \medskip
    \begin{subfigure}{.3\textwidth}
        \centering
        \includegraphics[width=\textwidth]{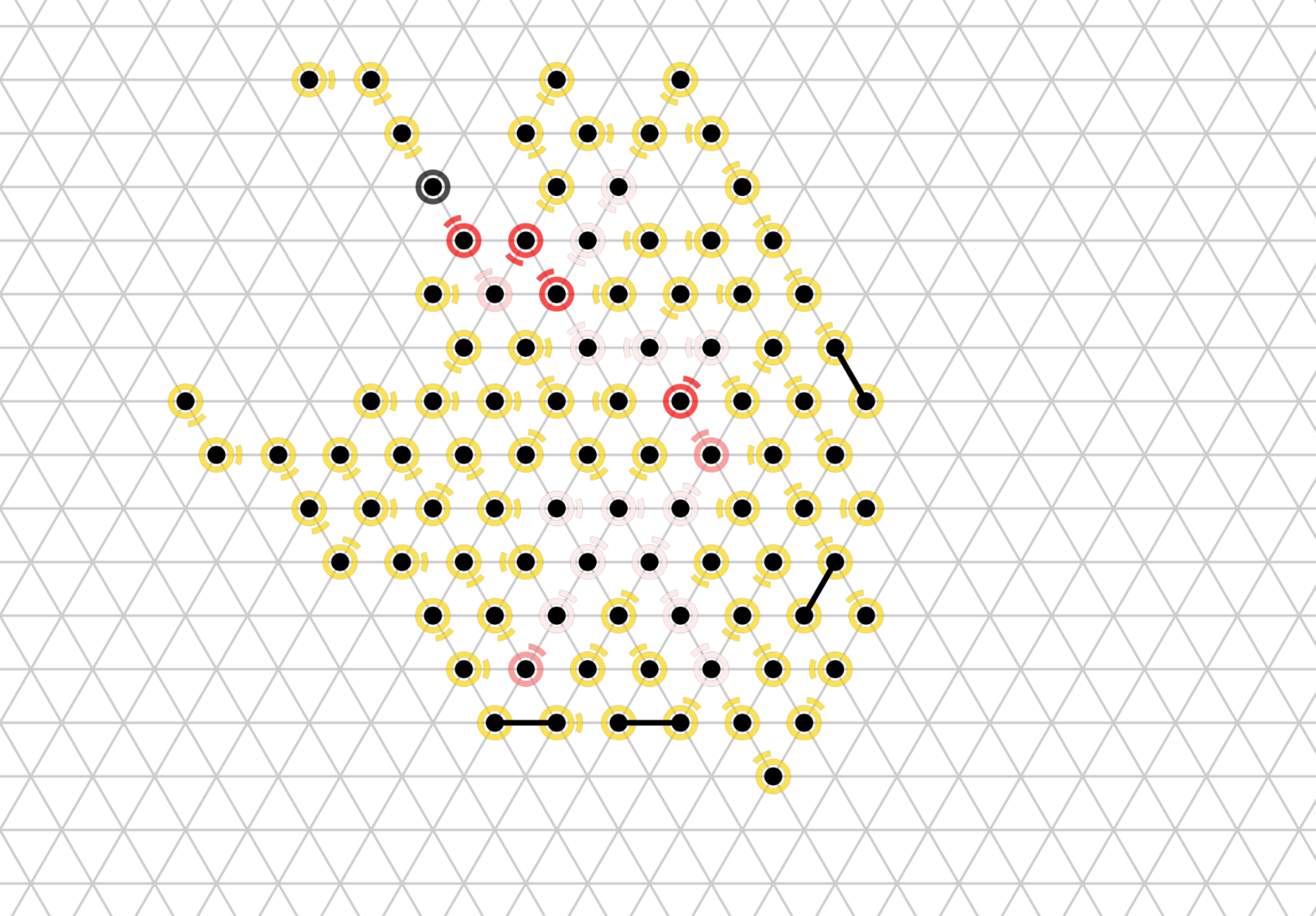}
        \caption{\centering $t = 1000$}
        \label{fig:compsim:d}
    \end{subfigure}
    \hfill
    \begin{subfigure}{.3\textwidth}
        \centering
        \includegraphics[width=\textwidth]{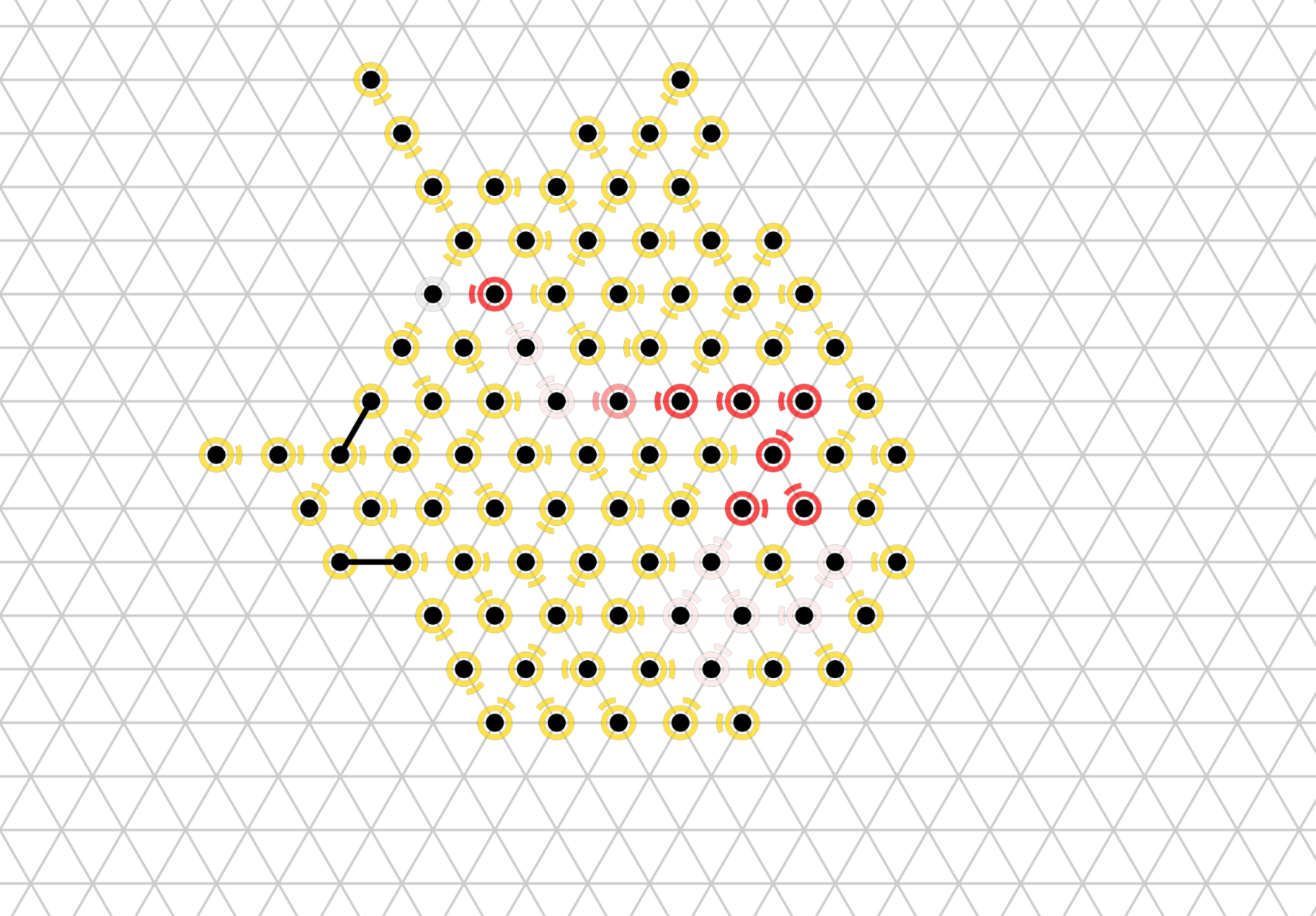}
        \caption{\centering $t = 1500$}
        \label{fig:compsim:e}
    \end{subfigure}
    \hfill
    \begin{subfigure}{.3\textwidth}
        \centering
        \includegraphics[width=\textwidth]{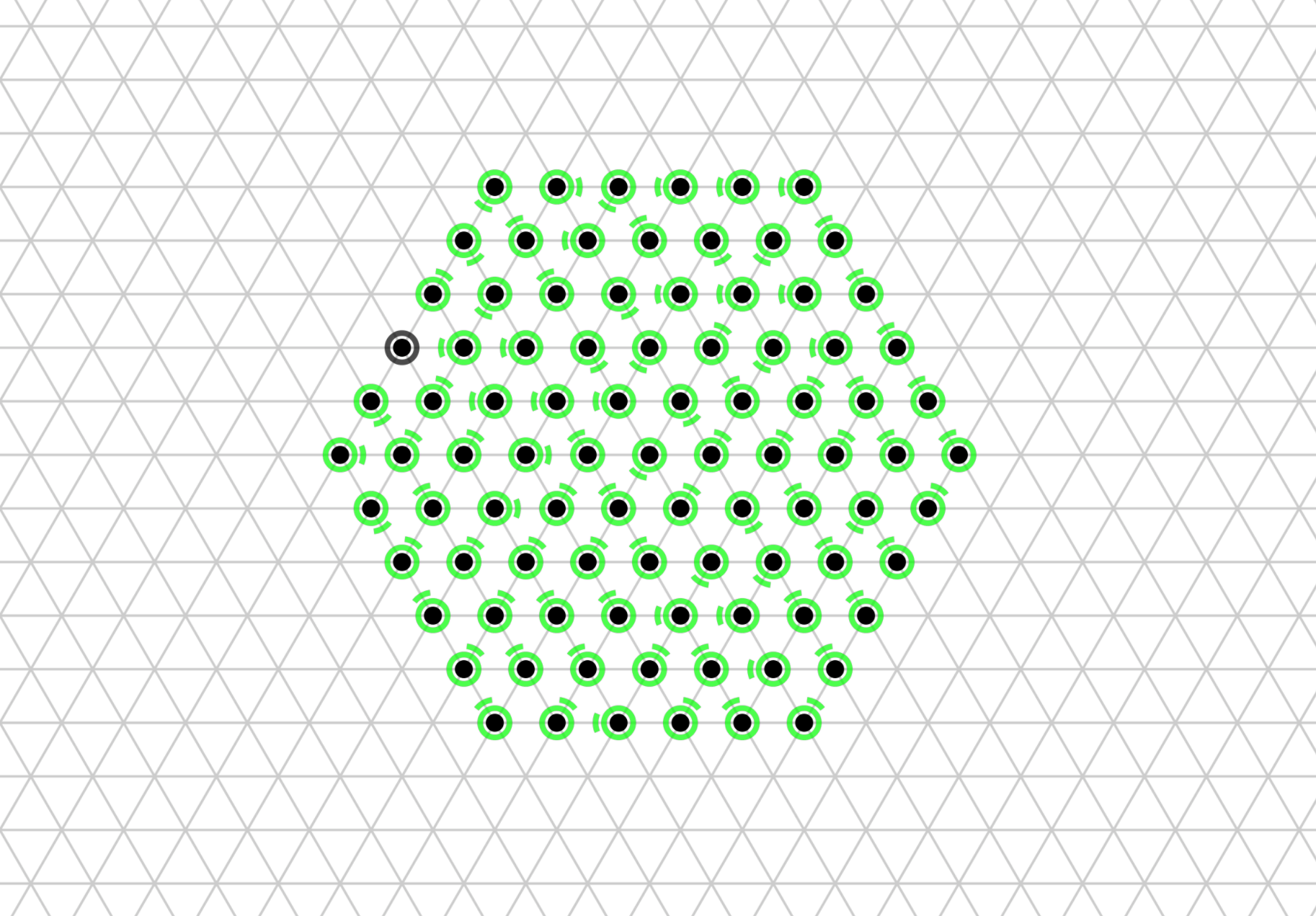}
        \caption{\centering $t = 2000$}
        \label{fig:compsim:f}
    \end{subfigure}
    \caption{A simulation of basic shape formation on $91$ particles composed with \energyAlg\ with one root, $\capacity = 10$, $\transferRate = 1$, and action demand $\demand(\cdot, \cdot) = 5$.
    The communication structure is maintained by \forestRepairAlg.
    Particle color and parent directions are visualized with respect to \energyAlg, as in Section~\ref{sec:simulations}.
    The energy root (shown in black) moves according to the shape formation algorithm and need not be centered.}
    \label{fig:compsim}
\end{figure}

Actions required by algorithm $\mathcal{A}$ are handled in the usage phase of \energyAlg.
If some particle $P$ has an action to perform according to algorithm $\mathcal{A}$, then if $P$ has sufficient stored energy and is not inhibited, it spends the energy and performs the action; otherwise, it foregoes its action this activation.
For example, \figtext~\ref{fig:compsim} shows \energyAlg\ composed with the algorithm for \textit{basic shape formation}~\cite{Daymude2019-programmableparticles,Derakhshandeh2015-leaderelection} forming a hexagon.
Theorem~\ref{thm:runtime} ensures that all $\numParticles$ particles will meet their energy needs and at least one particle will be able to perform an enabled action every $\bigO{\numParticles}$ asynchronous rounds.
By Theorem~\ref{thm:forestrepair}, any disruption to the communication structure caused by actions involving movements will be repaired in $\bigO{m^2}$ asynchronous rounds, where $m$ is the number of particles severed from the communication structure.
Thus, \energyAlg\ will not impede the progress of $\mathcal{A}$ but --- according to our proven bounds --- may add significant overhead to its runtime.
However, we observe reasonable performance in practice: for example, since hexagon formation terminates in $\bigO{\numParticles}$ rounds, our proven bounds suggest that the composed algorithm could terminate in time $\bigO{\numParticles^2}$ or worse but \figtext~\ref{fig:compgraphs:a} demonstrates an overhead that appears asymptotically sublinear.
With the addition of more energy roots, the composed algorithm is dramatically faster, approaching the runtime achieved without energy constraints (see \figtext~\ref{fig:compgraphs:b}).

\begin{figure}
    \centering
    \begin{subfigure}{.46\textwidth}
        \centering
        \includegraphics[width=\textwidth]{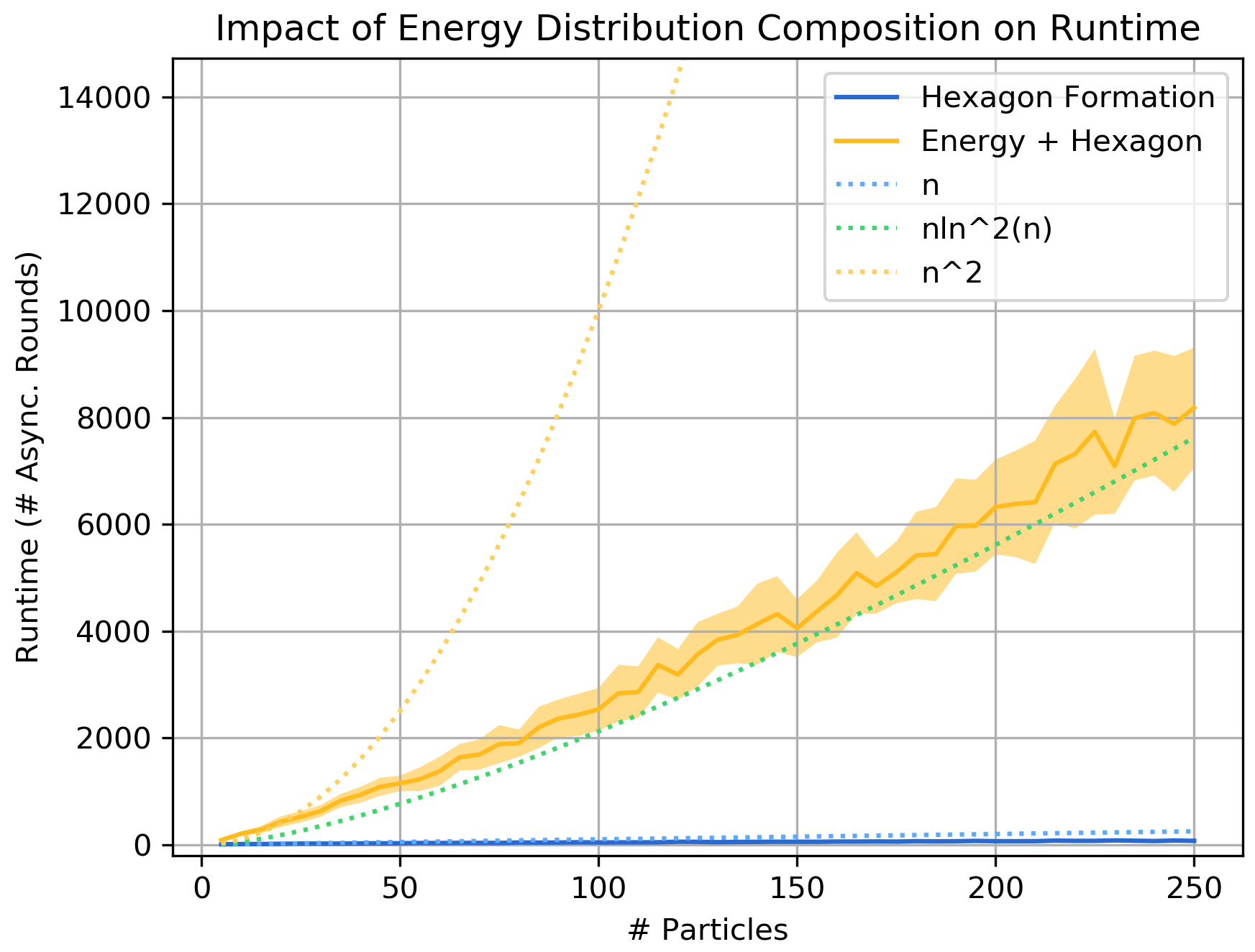}
        \caption{\centering}
        \label{fig:compgraphs:a}
    \end{subfigure}
    \hfill
    \begin{subfigure}{.46\textwidth}
        \centering
        \includegraphics[width=\textwidth]{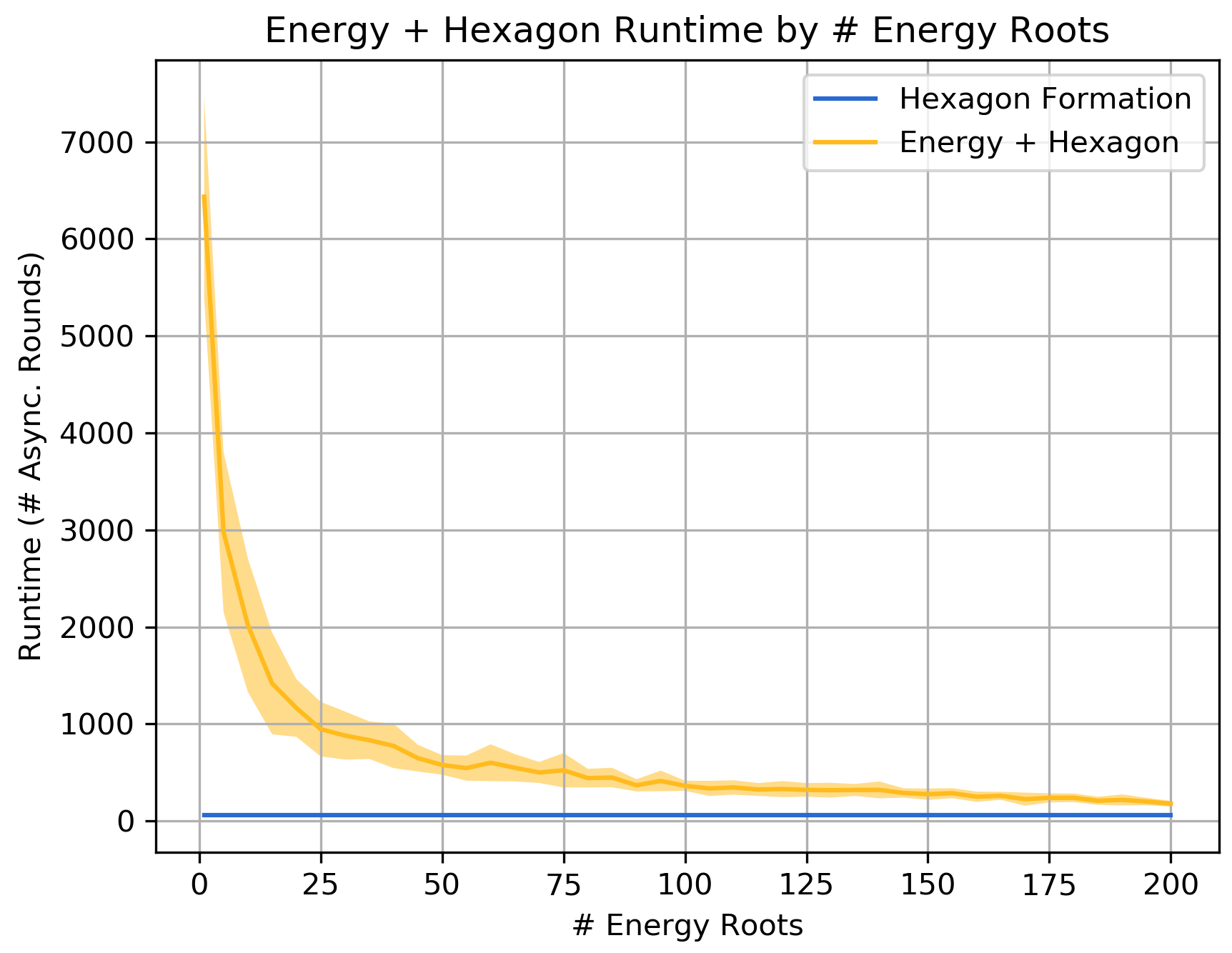}
        \caption{\centering}
        \label{fig:compgraphs:b}
    \end{subfigure}
    \caption{Runtime experiment results for the composition of \energyAlg\ with the basic shape formation algorithm.
    Each experiment was repeated 20 times; average runtime is shown as a solid line and standard deviation is shown as an error tube.
    (a) Runtimes of the basic shape formation algorithm alone (blue) vs.\ when it is composed with \energyAlg\ (yellow) as a function of system size.
    Asymptotic runtime bounds are shown as dotted lines; the composed algorithm tracks most closely with $\bigO{n\log^2 n}$.
    (b) Runtimes of the composed algorithm for a system of 200 particles as a function of the number of energy roots in the system.
    With more energy roots, the composed algorithm approaches the runtime of basic shape formation with no energy constraints.}
    \label{fig:compgraphs}
\end{figure}

\section{Conclusion} \label{sec:conclude}

In this work, we extended the amoebot model to include energy considerations.
Our bacterial biofilm-inspired algorithm for energy distribution is guaranteed to meet the energy demands of a system of $\numParticles$ particles at least once every $\bigO{\numParticles}$ asynchronous rounds and is asymptotically optimal when the number of external energy sources is fixed.
Existing amoebot model algorithms satisfying some basic assumptions can be generalized to respect energy constraints through composition with our energy distribution and spanning forest repair algorithms.
Moreover, the spanning forest repair algorithm will be independently useful for future work in addressing fault tolerance for existing amoebot model algorithms.

Our goal in this work was to meet the energy demands of fixed-sized particle systems as they execute algorithm actions.
One could also consider using energy for system growth via \textit{reproduction}, mimicking the bacterial biofilms that inspired our algorithm.
Supposing a particle $P$ has sufficient energy and is adjacent to some unoccupied position $u$, a reproduction action would split $P$ into two (analogous to cellular mitosis), yielding a new particle $P'$ occupying $u$.
In preliminary simulations (see \figtext~\ref{fig:growthsim}), we obtain behavior that is qualitatively similar to the biofilm growth patterns observed by Liu and Prindle et al.~\cite{Liu2015-biofilmcodependence,Prindle2015-biofilmionchannel}; in particular, the use of communication and inhibition leads to an oscillatory growth rate.
However, our oscillations have an amplitude and period that increases with time (due to the single source of energy) while the biofilms' have relatively constant amplitude and period.
Further work is needed to formally characterize our algorithm's behavior for these growing, dynamic systems.

\begin{figure}
    \centering
    \begin{subfigure}{.49\textwidth}
        \centering
        \includegraphics[width=.48\textwidth]{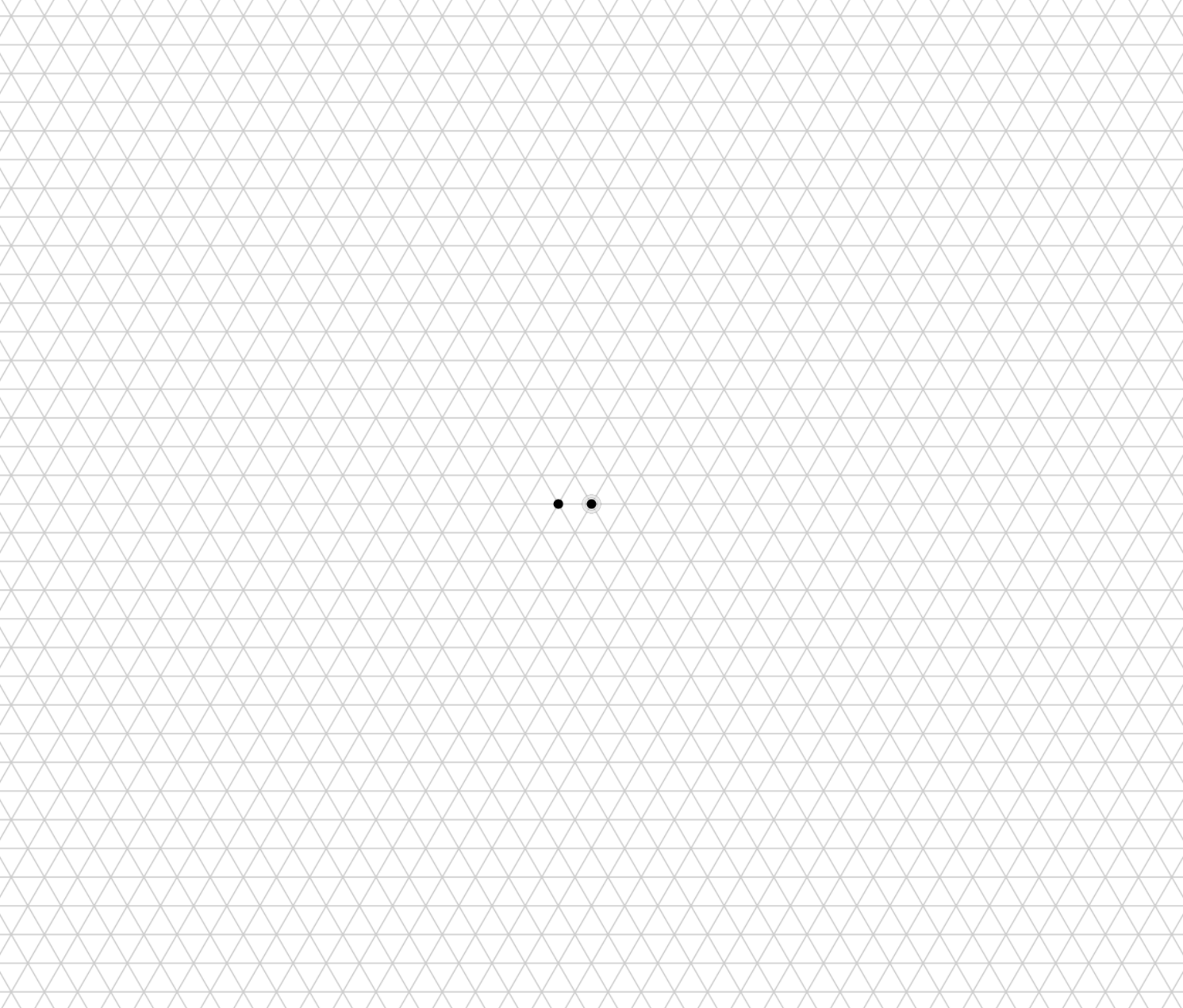}
        \hfill
        \includegraphics[width=.48\textwidth]{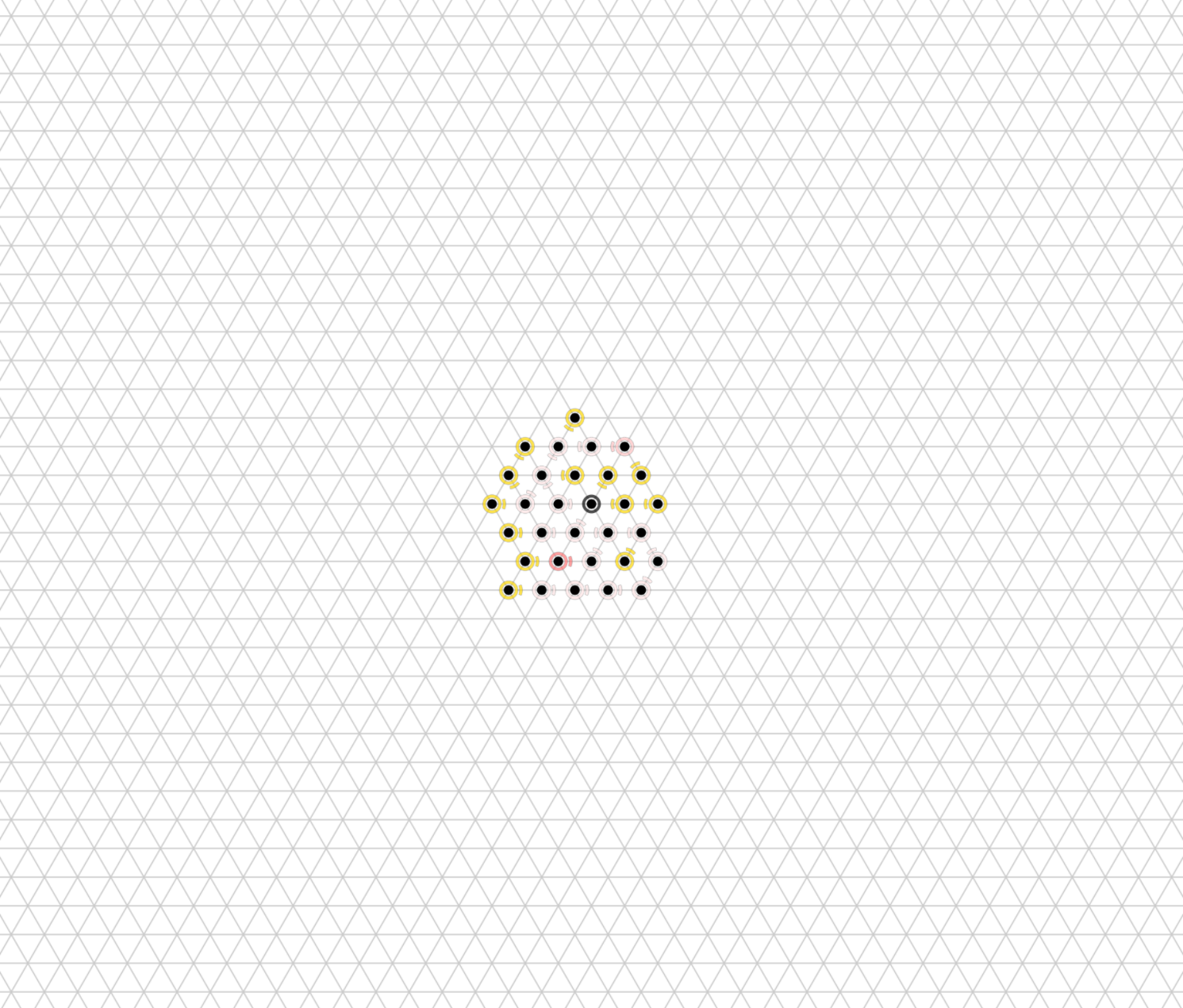} \\ \medskip
        \includegraphics[width=.48\textwidth]{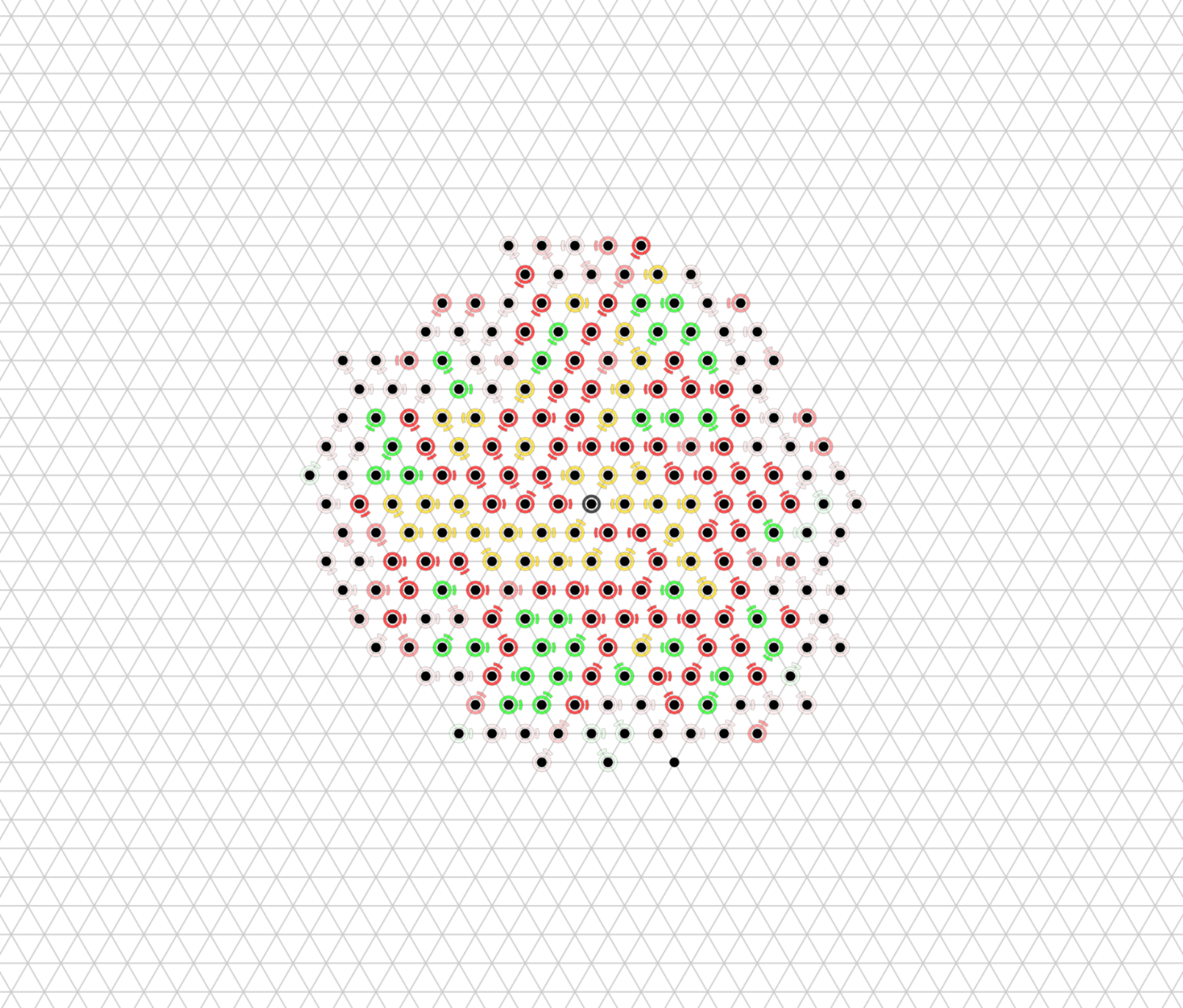}
        \hfill
        \includegraphics[width=.48\textwidth]{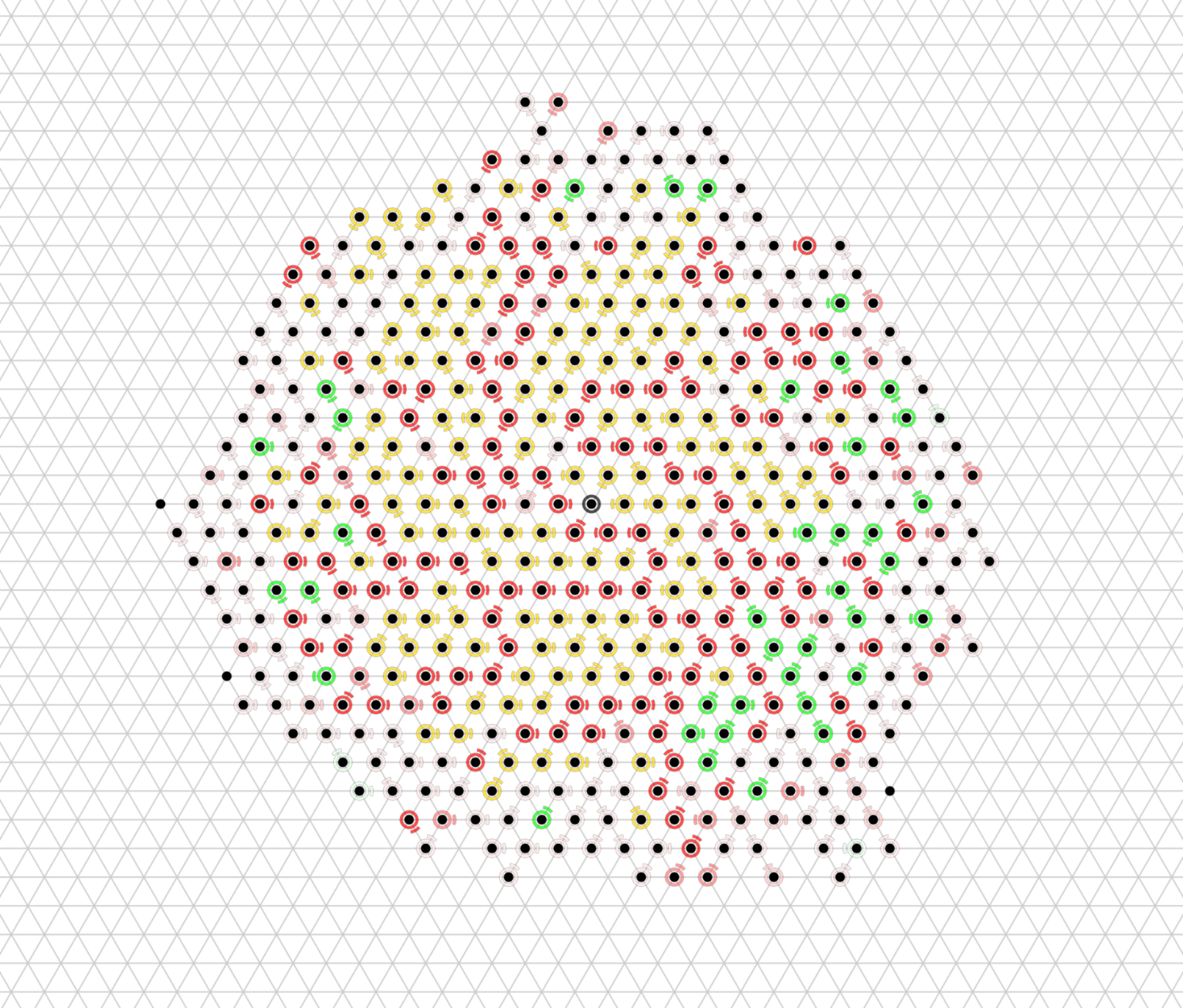}
        \caption{\centering $t = 5$, $100$, $550$, and $1025$ async.\ rounds}
        \label{fig:growthsim:a}
    \end{subfigure}
    \begin{subfigure}{.49\textwidth}
        \centering
        \includegraphics[width=\textwidth]{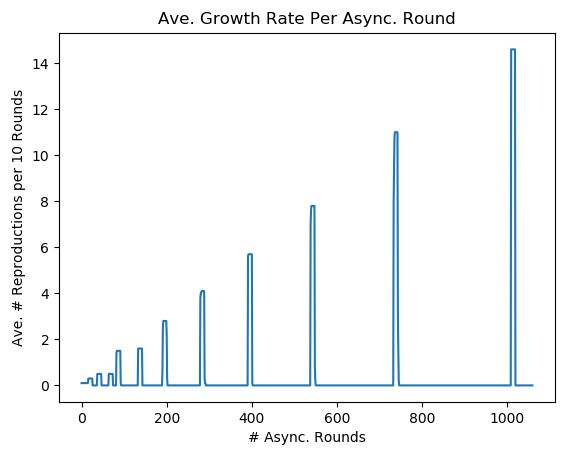}
        \caption{\centering}
        \label{fig:growthsim:b}
    \end{subfigure}
    \caption{Simulations of \energyAlg\ with reproduction actions.
    (a) The system is initialized as a single root particle and uses the same parameters as the previous simulations.
    After 1025 asynchronous rounds, the system has grown to 507 particles.
    (b) The growth rate, shown here as the number of reproduction actions per round averaged over a 10-round sliding window, has an oscillating pattern: each recharging period is followed by a rapid burst of growth.}
    \label{fig:growthsim}
\end{figure}

% The bibliography
\bibliographystyle{plainurl}
\bibliography{ref}

\newpage

\appendix

\section{Appendix: Algorithm Pseudocode} \label{app:pseudocode}

In this appendix, we provide detailed pseudocode for our energy distribution and spanning forest repair algorithms.
All pseudocode is written from the perspective of a particle $P$.

\begin{table}[H]
\centering
\begin{tabular}{|ccc|}
    \hline
    \textbf{Parameter} & \textbf{Notation} & \textbf{Constraints} \\
    \hline\hline
    Battery Capacity & $\capacity \in \mathbb{R}$ & $\capacity > 0$ \\
    \hline
    Energy Demand & $\demand : \system \times \mathbb{Z}^+ \to \mathbb{R}$ & $\demand(\cdot, \cdot) \leq \capacity$ \\
    \hline
    Transfer Rate & $\transferRate \in \mathbb{R}$ & $\transferRate > 0$ \\
    \hline
\end{tabular}
\caption{Parameter Details}
\label{tab:parameters}
\end{table}

\begin{table}[H]
\centering
\begin{tabular}{|cccc|}
    \hline
    \textbf{Variable} & \textbf{Notation} &\textbf{Domain} &\textbf{Initialization} \\
    \hline\hline
    Battery Energy & $\battery$ & $[0, \capacity]$ & 0 \\
    \hline
    Parent Pointer & $\parent$ & $\{\textsc{null}, 0, \ldots, 5\}$ & \textsc{null} \\
    \hline
    Stress Flag & $\stress$ & $\{\textsc{true}, \textsc{false}\}$ & \textsc{false} \\
    \hline
    Inhibit Flag & $\inhibit$ & $\{\textsc{true}, \textsc{false}\}$ & \textsc{false} \\
    \hline
    Prune Flag & $\prune$ & $\{\textsc{true}, \textsc{false}\}$ & \textsc{false} \\
    \hline
\end{tabular}
\caption{Local Variable Details}
\label{tab:variables}
\end{table}

\begin{algorithm}[H]
\caption{\energyAlg}
\label{alg:energy}
\begin{algorithmic}[1]
    \If {$P$ is idle}
        \If {$P$ has a neighbor $Q$ that is a root or is active}
            \State $P$ becomes active.
            \State $P.\parent \gets Q$.
        \EndIf
    \Else {} (i.e., $P$ is active or a root)
        \State \Call{Communicate}{ }
        \State \Call{ShareEnergy}{ }
        \State \Call{UseEnergy}{ }
    \EndIf
    \Function{Communicate}{ }
        \If {$P$ is active}
            \If {$P.\battery < \demand(P)$ $\vee$ ($P$ has a child $Q$ with $Q.\stress = \textsc{true}$)} $P.\stress \gets \textsc{true}$.
            \Else {} $P.\stress \gets \textsc{false}$.
            \EndIf
            \State $P.\inhibit \gets P.\parent.\inhibit$.
        \Else {} (i.e., $P$ is a root)
            \If {$P.\battery < \demand(P)$ $\vee$ ($P$ has a child $Q$ with $Q.\stress = \textsc{true}$)} $P.\inhibit \gets \textsc{true}$.
            \Else {} $P.\inhibit \gets \textsc{false}$.
            \EndIf
        \EndIf
    \EndFunction
    \Function{ShareEnergy}{ }
        \If {$P$ is a root} $P.\battery \gets \min\{P.\battery + \transferRate, \capacity\}$.
        \EndIf
        \If {$P.\battery \geq \transferRate$ and $P$ has a child $Q$ with $Q.\battery < \capacity$}
            \State Choose an arbitrary child $Q$ with $Q.\battery < \capacity$.
            \State $P.\battery \gets P.\battery - \min\{\transferRate, \capacity - Q.\battery\}$.
            \State $Q.\battery \gets \min\{Q.\battery + \transferRate, \capacity\}$.
        \EndIf
    \EndFunction
    \Function{UseEnergy}{ }
        \State Let $a$ be the next action $P$ wants to perform and $\demand(P)$ be its energy cost.
        \If {$P.\battery \geq \demand(P)$ and $\neg P.\inhibit$ (i.e., $P$ is not inhibited)}
            \State Spend the required energy by updating $P.\battery \gets P.\battery - \demand(P)$.
            \State Perform action $a$.
        \EndIf
    \EndFunction
\end{algorithmic}
\end{algorithm}

\begin{algorithm}[H]
\caption{\forestRepairAlg}
\label{alg:forestrepair}
\begin{algorithmic}[1]
    \If {($P.\prune$) $\vee$ ($P.\parent$ is crashed)}
        \ForAll {particles $Q$ such that $Q.\parent = P$} $Q.\prune \gets \textsc{true}$.
        \EndFor
        \State $P.\parent \gets \textsc{null}$.
        \State $P.\prune \gets \textsc{false}$.
        \State $P$ becomes idle.
    \ElsIf {($P$ is idle) $\wedge$ ($P$ has a root or active neighbor $Q$ such that $\neg Q.\prune$)}
        \State Choose a root or active neighbor $Q$ with $\neg Q.\prune$ according to round-robin selection.
        \State Update $P.\parent \gets Q$.
        \State $P$ becomes active.
    \EndIf
\end{algorithmic}
\end{algorithm}

\end{document}